\documentclass[%
reprint,
superscriptaddress,
nofootinbib,
 amsmath,amssymb,
 aps,
]{revtex4-2}

\usepackage{graphicx, subfigure, float}
\usepackage{dcolumn}
\usepackage{bm}
\usepackage{hyperref}
\usepackage{physics}
\usepackage{amsthm}
\usepackage{bbold}
\usepackage{thm-restate}
\usepackage{upgreek}
\usepackage{mathtools} 
\usepackage{xcolor}

\usepackage{amsthm} 
\usepackage{ulem}
\usepackage{dsfont}
\usepackage[most]{tcolorbox} 
\usepackage{array}

\usepackage{hyperref}
\usepackage{url}

\hypersetup{breaklinks=true}

\newtheorem{thm}{Theorem}
\newtheorem{lemma}{Lemma}

\newtheorem{defi}{Definition}
\newtheorem{prop}{Proposition}

\newtheoremstyle{noparen}%
  {\topsep}   
  {\topsep}   
  {\itshape}  
  {}          
  {\bfseries} 
  {.}         
  { }         
  {\thmname{#1}\thmnote{ #3}} 

\theoremstyle{noparen}
\newtheorem*{thm*}{Theorem}

  {\par\noindent\textbf{Remark: }\ignorespaces}%
  {\par}
\usepackage{siunitx}

\usepackage{xurl}   
\usepackage{hyperref}
\hypersetup{breaklinks=true}
\PassOptionsToPackage{hyphens}{url}\usepackage{hyperref}

\usepackage{xcolor}

\definecolor{verdeescuro}{rgb}{0.0, 0.7, 0.0}

\newcolumntype{C}{>{\centering\arraybackslash}X}
\newcolumntype{M}[1]{>{\centering\arraybackslash}m{#1}}

\begin{document}

\title{Emergent Quantum Dynamics as a Bayesian Inference Problem: A Critical Analysis}



\author{Thales B. S. F. Rodrigues}
\email{thalesbsfrodrigues@gmail.com}
\affiliation{Universidade Federal de Juiz de Fora, Departamento de Física, Juiz de Fora, MG, Brasil}
\affiliation{Universidade de São Paulo, Instituto de Matemática e Estatística, São Paulo, SP, Brasil}
\author{Lucas L. Brugger}
\email{lucasbrugger7@gmail.com}
\affiliation{Universidade Federal de Juiz de Fora, Departamento de Física, Juiz de Fora, MG, Brasil}
\author{Vinicius G. Valle}
\email{viniciusvalle2000@gmail.com}
\affiliation{Universidade Federal de Juiz de Fora, Departamento de Física, Juiz de Fora, MG, Brasil}

\author{Bruno F. Rizzuti}
\email{brunorizzuti@ufjf.br}
\affiliation{Universidade Federal de Juiz de Fora, Departamento de Física, Juiz de Fora, MG, Brasil}
\author{Cristhiano Duarte}
\email{cristhianoduarte@gmail.com}
\affiliation{Universidade Federal de Juiz de Fora, Departamento de Física, Juiz de Fora, MG, Brasil}
\affiliation{Institute for Quantum Studies, Chapman University, One University Drive, Orange, CA, 92866, USA}
\affiliation{Instituto de Física, Universidade Federal da Bahia, Campus de Ondina, Rua Barão de Geremoabo, s.n., Ondina, Salvador, BA 40210-340, Brazil}
\affiliation{Fundação Maurício Grabois, R. Rego Freitas, 192 - República, São Paulo - SP, 01220-010, Brazil}



\begin{abstract}

Coarse-grained descriptions can be used to account for physical processes in which information is lost or not entirely accessible. In this paper, we start by proposing a connection between effective, coarse-grained descriptions of quantum dynamics and the quantum conditional states formalism. In doing so, we address necessary and sufficient conditions for the existence of emergent dynamics from a subjective Bayesian point of view. Although our solution is (quasi-)optimal, the dynamics it determines are shown to be analytically limited---it solves the problem in a state-by-state case. Due to this limitation, we then implement semidefinite programming techniques to investigate the existence of effective dynamics in four paradigmatic scenarios. The existence of such an effective dynamics motivates the introduction of a new robustness measure that quantifies how much noise can be added to a microscopic dynamics without compromising its compatibility with a given coarse-grained description. Finally, we also show how one can analytically determine a valid emergent description in several examples. 

\end{abstract}

\maketitle
\section{Introduction} \label{Sec.Intro}

Quantum theory is a system of rules and norms adapted to describe a certain class of naturally occurring phenomena whose classical description is deemed impossible or simply unfit for purpose. Its predictive power in situations of uncertainty~\cite{HaaseEtAl16}, together with its inherently agent-centric character~\cite{BFZ16, Pitowsky05}, may lead us not only to use it, but to literally see it as a theory of probabilistic assignments more general and more resourceful than its classical counterpart~\cite{AruteEtAl19, CG19}. Seen from this angle, it is not entirely surprising that quantum theory accommodates both specific and classically counterintuitive features. If it is supposed to deal with a larger class of phenomena, it is potentially more resourceful and, consequently, has a greater explanatory power---one that rarely maintains intact our classical preconceptions.

The list of such perplexing, distinctly quantum aspects is not limited to foundational questions kept hidden miles away from any real-world application. Over the past decades, we have seen the emergence of several new quantum technologies whose underpinning mechanisms, functioning or even security are deeply rooted in those apparently counterintuitive (quantum) features such as multipartite Bell non-locality~\cite{BrunnerEtAl14}, quantum contextuality~\cite{BudroniEtAl22, Spekkens05}, entanglement~\cite{HorodeckiEtAl09}, no-cloning~\cite{WZ82}, no-broadcasting~\cite{BarnumEtAl07}, incompatibility~\cite{GuhneEtAl23}, quantum superposition~\cite{Nielsen_Chuang_2010}, quantum memory effects~\cite{RHP14} and etc. We transitioned from textbook cryptographic protocols~\cite{Nielsen_Chuang_2010} to near-commercial noisy intermediate-scale quantum gadgets~\cite {MQSSM20} in a matter of decades.  

Nonetheless, whether in its more foundational dimension, its more applied side, or at some point in between, the theory still has its own subtleties. Consider, for example, the situation firstly introduced by the authors in~\cite{DCBM17}. In their framework, adapted to deal with real-world scenarios~\cite{SM19}, a quantum system is prepared at a given time and measured by a faulty detector after undergoing a unitary evolution. Due to the lack of precision in the detection apparatus, modelled there by a coarse-graining map, one may want to describe the microscopic, unitary evolution by a macroscopic, effective (or emergent) map. The upshot is that the partial loss or scrambling of information in the scenario could make the macroscopic description simpler and more efficiently simulable than its microscopic counterpart, potentially accounting for quantum-to-classical transitions~\cite{DCBM17}.

What several authors have shown time and time again is that this particular coarse-graining scenario is but completely understood. It can be extended over multiple time steps~\cite{DATM20}, turned upside down in a more statistical treatment~\cite{VallejosEtAl22}, analysed through a plethora of different lenses~\cite{DATM20}, or given a more agentic-centric treatment~\cite{Duarte20}. It is precisely in this last standpoint that we locate this work. 

In the absence of a complete characterisation of the existence of a well-defined macroscopic, emergent map, we will further develop the idea that quantum theory is nothing but another theory of probabilistic assignments. In doing so, we demonstrate how the coarse-graining problem can be viewed as a problem of (quantum) Bayesian inference, and how this perspective enables us to obtain a state-dependent, macroscopic emergent dynamics whose principal ingredient is the Bayes' inversion of the coarse-graining map. Having done so, we critically examine this potential (inferential) solution and argue that it is applicable only when the strong constraints of the original scenario rule out any other possibility. It is a desperate resolution to cases where no other emergent dynamics can be found. 

This is where we turn to a computational investigation. Through semidefinite programming, we attempt to answer the following: how good is this proposed (inferential) Bayesian solution when other solutions are available? Put another way, how far, in the diamond norm, is this proposed solution from a general, non-state-dependent solution to the coarse-graining problem? To answer this question, we benchmark our inferential framework against four paradigmatic examples. 

For scenarios where one can analytically determine an effective dynamics, we introduce a new robustness measure. It quantifies how much noise can be added to a given unitary dynamics within a compatible coarse-graining scenario---within a scenario with a viable coarse-grained dynamics---without rendering it incompatible. This quantifier also served as inspiration for proposing a convex optimisation routine that, for a given convex combination parameter, tries to circumvent the lack of emergent dynamics in a coarse-graining scenario within this quantum Bayesian picture.   

There have been other efforts to address quantum-to-classical transitions and defective measurements through coarse-graining mechanisms in the past~\cite{KC07, Kabernik18}. Although they are interesting in their own right, in this work we wanted to establish a dialogue with and advance the standpoint originally proposed by the authors in~\cite{DCBM17}. Also, we emphasise that this contribution should be read in tandem with its complementary work~\cite{DATM20}. There, the authors scratch the surface of this problem in four mathematical perspectives. Because we also provide other entry points to the same problem, we believe our work aligns perfectly with what the authors of~\cite{DATM20} set out to do.

All in all, to make our work self-consistent, we review the coarse-graining problem in Appendix~\ref{SubSec.CGProblem}. The formalism of quantum conditional states, which allows us to advance the main standpoint of this work, is summarized in~\ref{SubSec.CQS}. In sec.~\ref{Sec.CGAsBayesianInference} we merge our approach—the conditional states formalism (CSF)—with the problem under investigation—the coarse-graining scenarios---viewing the latter from a Bayesian perspective. We also showcase to interesting coarse-graining scenarios, the fully classical one in in subsection~\ref{SubSec.SolutionClassicalClassical} and the measure-and-prepare one in subsection~\ref{SubSec.SolutionMandP}. The original (fully-quantum) coarse-graining scenario, along with our potential solution via the Petz map, is approached in Sec.~\ref{SubSec.FullyQuantumPetz}. 
We also present, in Sec.~\ref{Sec.CGSolutionNotPossible}, four distinct scenarios of the coarse-graining problem and analytically analyze them, based on an operational perspective, using what we refer to as laboratory space (lab space) variables. From an operational standpoint, these variables allow us to determine the conditions under which the effective dynamics can, in principle, be obtained.
In Sec.\ref{Sec.SDPAnalysis}, landing in a convex optimization perspective, we critically engage with our own proposed resolution to the problem under the light of semidefinite programming implementations, as well as evaluating the coarse-graining problem through the conditional states formalism as a whole. In Sec.~\ref{sec:NumericalResults}, we present numerical evaluations of our proposed Bayesian solution, together with the results obtained from the numerical implementation of the proposed semidefinite programs. Both analyses are conducted considering the same four concrete coarse-graining scenarios established in Sec~\ref{Sec.CGSolutionNotPossible}. We discuss our findings and point to further investigations in Sec. \ref{Sec.Conclusion}.

\section{The coarse-graining problem from a Bayesian perspective}\label{Sec.CGAsBayesianInference}

In Appendix~\ref{SubSec.CGProblem}, we introduce the coarse-graining problem from its usual standpoint: as a question of commutativity involving completely positive, trace-preserving maps. The possibility of finding an appropriate, effective quantum dynamics describing an underlying unitary evolution $\mathcal{U}_{t}$ subjected to lack or loss of information, determined by a coarse-graining map $\Lambda_{\text{CG}}$, is fully characterised in terms of the existence of a completely positive and trace-preserving (CPTP) map $\Gamma_{t}$ such that 
\begin{align}
    \Gamma_t \circ \Lambda_{\text{CG}} = \Lambda_{\text{CG}} \circ \mathcal{U}_{t}.
    \label{eq:CommutationRelationMain}
\end{align}
See the discussion surrounding Fig. \ref{fig:Coarse_graining_diagram}.

From that vantage point, the meaning of those maps, and also of the state space for that matter, is taken \textit{prima facie}. They are supposed to be naturally and uniquely determined by the faulty detector apparatuses---as an actual property of those apparatuses. Although we do not challenge this perspective, in this work, we begin to advance an alternative view---one that had already been alluded to by the authors of~\cite{DCBM17}, partially addressed in~\cite{DATM20}, and explored in~\cite{Duarte20}. We will assume that quantum theory is but another theory of probabilistic assignments \cite{BFZ16,LS14,Leifer06,LD22}, anchoring our operational standpoint in the formalism presented in Appendix~\ref{SubSec.CQS}. In doing so, it allows us to reframe the coarse-graining scenario as a problem in (subjective) Bayesian inference, as seen, illustratively, in Fig. \ref{fig:CGandCSF}.
\begin{figure}
    \centering
\includegraphics[width=0.80\linewidth]{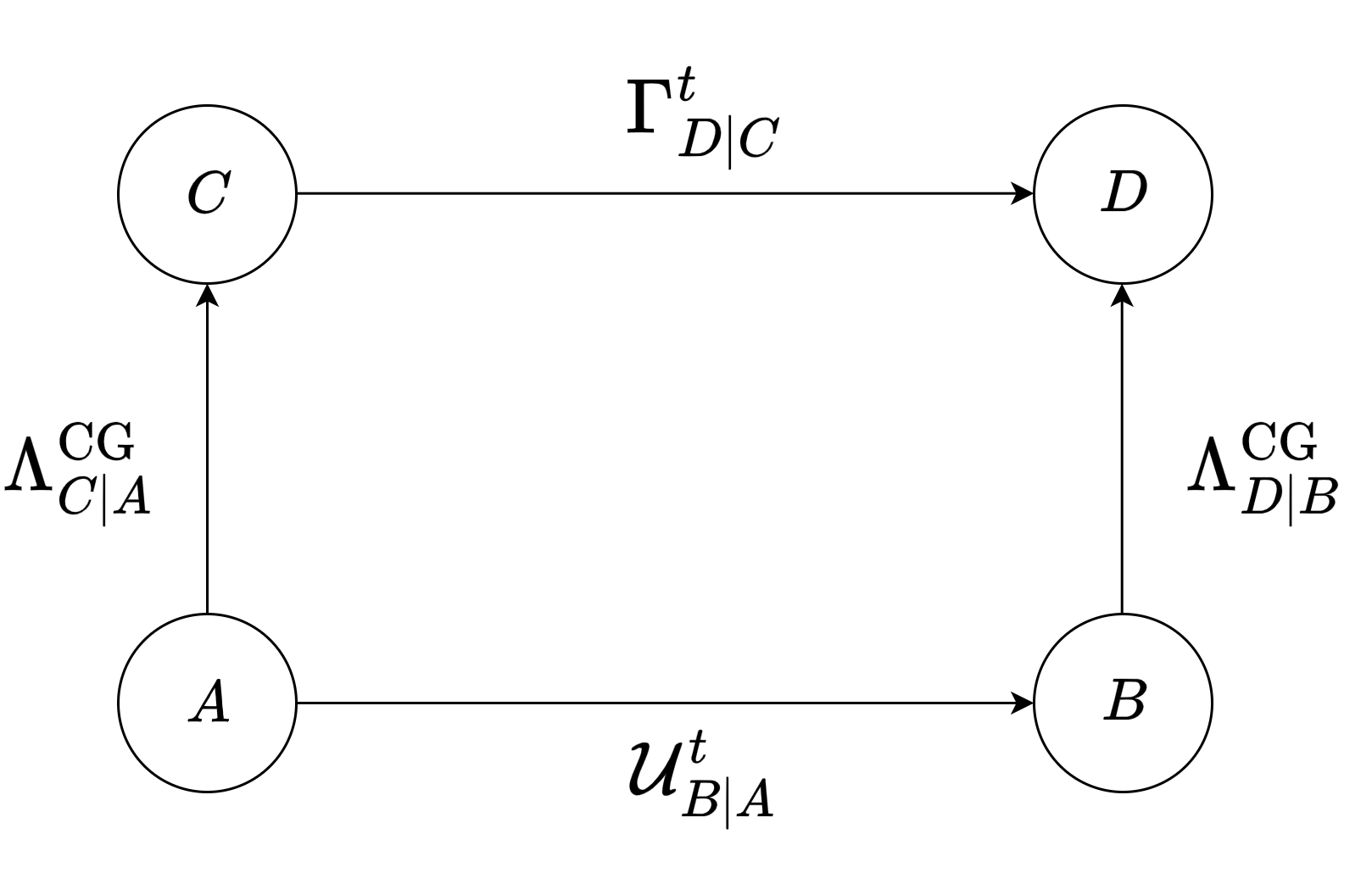} \caption{The coarse-graining problem diagrammatically. To say that the diagram commutes is equal to find a CPTP map $\Gamma_{D|C}^{t}$ such that $ \Gamma_{D|C}^{t} \circ \Lambda_{D|B}^{\text{CG}} = \Lambda_{D|B}^{\text{CG}} \circ \mathcal{U}_{B|A}^{t}$.}
    \label{fig:CGandCSF}
\end{figure}

This agent-centric turn forces us to consider quantum states as (subjectively determined) credal states~\cite{BFZ16} and CPTP maps as mechanisms of belief propagation~\cite{LS13,LS14}. The very use of the coarse-graining map, $\Lambda^{\text{CG}}_{C\vert A}(\Lambda^{\text{CG}}_{D\vert B})$, assumes a clearer interpretation in this picture: this map is now an expression of the experimentalist's (partial) knowledge of their detection apparatus; the emergent map $\Gamma^{t}_{D\vert C}$ is the agent's effective description, based on their knowledge; and the commutativity relation in eq.~\eqref{eq:CommutationRelationMain}, which now can be deployed in terms of the causal conditional states associated to each quantum channel and their respective belief propagation rules (Theorem \ref{teo:ChannelComposition}) is given by
\begin{align}\label{eq:CommutRelationBP} 
    \Tr_{C} [(\mathds{I}_A \otimes \varrho_{D\vert C})&(\varrho_{C\vert A} \otimes \mathds{I}_D)] = \nonumber \\ 
    &=\Tr_B[(\mathds{I}_A \otimes \varrho_{D\vert B})(\varrho_{B\vert A}\otimes \mathds{I}_D)].
\end{align}
In this view, eq.~\eqref{eq:CommutRelationBP} imposes an operational constraint on which types of emergent description the experimentalist is able to come up with. 

But there is more to this picture than those interpretational gains. As anticipated by the author in ref. \cite{Duarte20}, framing the coarse-graining problem as a Bayesian inference task also opens up the possibilities for inferential solutions. 

Although this perspective (and the CSF) provides significantly more powerful computational tools, that allow us to reach distant horizons where the text book formalism of quantum theory can barely touch, this advantage comes at a cost. As we will show in the next sections, the very structure of the Bayesian inversion of conditional states (or probabilities) renders the resulting solutions intrinsically state dependent. 

For example, in classical probability theory, obtaining the conditional probability $P(A \vert B)$ from $P(B \vert A)$ requires prior knowledge of a probability distribution $P(A)$. That is, by constructing the joint distribution and subsequently marginalizing it, one obtains the desired Bayesian inversion. An analogous structure holds for conditional states.

Thus, the necessity of specifying a prior state $\rho_A$, together with the explicit use of quantum Bayesian inversion, is what makes all the solutions, presented in the following sections, state dependent. This dependency can be understood as, for each state one assume as prior, one can find a compatible emergent dynamics, but without any guarantee of compatibility for any another distinct state.

In the following section \ref{Sec.CGSolutionPossible}, we present two simple cases—classical and measure-and-prepare—in which emergent dynamics can be directly achieved, although still paying the price of initial state dependency. In section \ref{Sec.Petzsolution}, we also analyze the fully quantum case, where the explicit dependence of the solution on the prior state becomes clear when we adopt the Petz recovery map \cite{Petz1986} in order to circumvent intrinsic limitations of the scenario.

\section{TWO EXAMPLES OF EMERGENT DYNAMICS - CLASSICAL AND MEASURE-AND-PREPARE}\label{Sec.CGSolutionPossible}

As we mentioned before, to obtain a solution to the coarse-graining problem determined by a coarse-graining map $\Lambda_{D|B}^{\text{CG}}$ and a microscopic unitary dynamics $\mathcal{U}_{B|A}^{t}$ means finding a CPTP map $\Gamma_{D|C}^{t}$ such that the relation \eqref{eq:CommutationRelationMain} holds true.
%
%
Stated like so, the whole problem may sound like a simple mathematical question. But that is not true. Even the canonical examples of the blurred and saturated detector and partial trace, motivating the scenarios addressed in ref.~\cite{DCBM17}, are highly disconcerting. Not only because it gives rise to situations where the diagram of Fig.~\ref{fig:CGandCSF} does not commute; rather, it may happen that $\Gamma^t_{D\vert C}$ is not even a function. In other words, the motivating examples themselves are more akin to a counterexamples than an examples. This is not to say that the case is pathological; on the contrary. As the authors of ref.~\cite{DCBM17} and ref.~\cite{DATM20} make clear, that `pathology' is more the rule than the exception. 

In this section, however, we set off to explore two (or four, as we will soon see) simple solutions to the problem, that highlights the deep connection of the Bayesian inference structure with the proposed solutions.

\subsection{The Fully Classical Case}\label{SubSec.SolutionClassicalClassical}

Coarse-graining scenarios are more varied than the one presented in Fig.~\ref{fig:CGandCSF}, where all regions are quantum. Although our main motivation is precisely the fully quantum one, in this subsection, we study the coarse-graining problem as a Bayesian inference problem when all regions are classical. The main question we address here is, given an underlying belief propagation and two others belief propagation that represent changes in an agent's views, can we determine an emergent belief propagation?

More formally, we assume that: the two microscopic regions $R$ and $S$ are classical; likewise, the macroscopic regions $X$ and $Y$ are also classical; the underlying microscopic evolution is a CPTP channel $\mathcal{E}_{S\vert R}: \mathcal{L}(\mathcal{H}_{R}) \rightarrow \mathcal{L}(\mathcal{H}_{S})$; similarly, the coarse-graining maps are $\mathcal{E}_{X\vert R}: \mathcal{L}(\mathcal{H}_{R}) \rightarrow \mathcal{L}(\mathcal{H}_{X})$ and $\mathcal{E}_{Y\vert S}: \mathcal{L}(\mathcal{H}_{S}) \rightarrow \mathcal{L}(\mathcal{H}_{Y})$; we want to obtain an emergent map $\Gamma = \mathcal{E}_{Y\vert X}: \mathcal{L}(\mathcal{H}_{X}) \rightarrow \mathcal{L}(\mathcal{H}_{Y})$ such that 
\begin{align}\label{Eq.CommutationClassicalCaseSettingNotation}
   \Gamma \circ \mathcal{E}_{X|R} = \mathcal{E}_{Y|S} \circ \mathcal{E}_{S|R}.
\end{align}
In this scenario, each Hilbert space comes equipped with a preferred basis $\{\ket{r}\}_{r \in \mathrm{Out}(R)}$, $\{\ket{s}\}_{s \in \mathrm{Out}(S)}$, $\{\ket{x}\}_{x \in \mathrm{Out}(X)}$, and $\{\ket{y}\}_{y \in \mathrm{Out}(Y)}$ where all operators are diagonal in that basis. Needless to say, $\mathrm{Out}(X)$ represents the set of possible values $X$ may assume. This arrangement is diagrammatically represented in Fig.~\ref{fig_4classcial_statesandchannels}. Each channel above is Choi-Jamio\l{}kowski isomorphic to a classical conditional state as below---see Appendix~\ref{Sec.Formalism}:
\begin{align}\label{Ex.classicalCSXRMainText}
    \rho_{X\vert R} = \sum_{x,r} P(X=x\vert R=r)\ket{r}\bra{r} \otimes \ket{x}\bra{x},
\end{align}
Belief propagation via $\rho_{X|R}$, or the action of its Choi-Jamio\l{}kowski isomorphic channel $\mathcal{E}_{X|R}$, is given by---also see Appendix~\ref{Sec.Formalism} (Lemma \ref{lemm:ChoiActing}): 
\begin{equation}\label{eq:ChoiActingMainText}
    \mathcal{E}_{X|R}(\sigma_{R}) = \Tr_R[\rho_{X|R}(\sigma_{R}^{T} \otimes \mathds{I}_X) ].
\end{equation}

Rewriting the coarse-graining question in this language is paramount to advance our standpoint, and it does so in two fronts. First, it shows how one can readily reframe the coarse-graining problem as an inference problem---where we subsume the conditional probabilities inside each conditional operator. Second, it hints to a possible general (Bayesian) solution: it suffices to revert the leftmost arrow and use the composition rule to define an appropriate emergent dynamics. We will analyse the efficacy of this solution later on in the paper. For now, we will just show how to construct it.  

\begin{figure}
    \centering
    \includegraphics[width=0.65\linewidth]{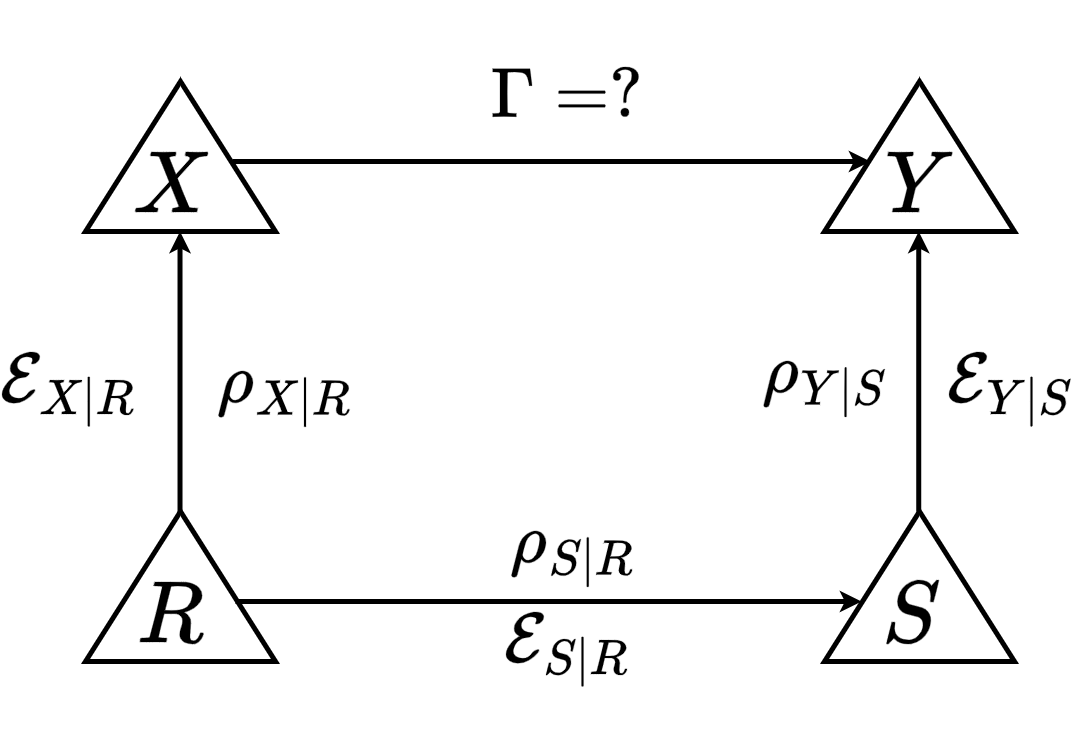}
    \caption{\textbf{Coarse-graining diagram for the full classical scenario}. The two lowermost classical regions are connected via a CPTP channel $\mathcal{E}_{S\vert R}$, while the uppermost classical regions are obtained through the channels $\mathcal{E}_{X\vert R}$ and $\mathcal{E}_{Y\vert S}$, analogous to the coarse-graining map $\Lambda_{\text{CG}}$. Together with the CPTP channels, we also have the representation of their respective Choi-Jamio\l kowski operators.}
    \label{fig_4classcial_statesandchannels}
\end{figure}

In fact, the solution to this case is somewhat direct. We need two ingredients: the quantum Bayes' rule~\cite{LS13} and the composition rule of Theorem~\ref{teo:ChannelComposition}. The former guarantees a well-defined conditional state $\rho_{R|X}$ from $\rho_{X|R}$---see Appendix~\ref{SubSec.AppSupMaterialFullyClassical}:
\begin{align}\label{Ex.classicalInversionXR}
    \rho_{R\vert X} = \rho_{X\vert R} \star \left(\rho_{R} \rho_{X}^{-1} \right).
\end{align}
The latter is to obtain the conditional state between the two upper classical regions of Fig.~\ref{fig_4classcial_statesandchannels} such that the diagram commutes. Combining the two we have
\begin{equation}\label{Ex.classicalCSXYMainText}
    \rho_{Y\vert X} = \text{Tr}_{RS} \left( \rho_{Y\vert S}\rho_{S\vert R}\rho_{R\vert X}\right).
\end{equation}

Equation~\eqref{Ex.classicalCSXYMainText} above is equivalent to obtaining the channel connecting both regions, as in Theorem~\ref{teo:ChannelComposition}, that is,
\begin{equation}\label{Ex.ClassicalCPTPcomp}
    \mathcal{E}_{Y\vert X} = \mathcal{E}_{Y\vert S} \circ \mathcal{E}_{S\vert R} \circ \mathcal{E}_{R\vert X}.
\end{equation}
In sum, we propose here that the emergent dynamics arising in the fully classical coarse-graining problem is given as
\begin{equation}
    \Gamma := \mathcal{E}_{Y\vert X} = \mathcal{E}_{Y\vert S} \circ \mathcal{E}_{S\vert R} \circ \mathcal{E}_{R\vert X}.
\end{equation}

We conclude this subsection by emphasizing that expression \eqref{Ex.classicalInversionXR} plays a central role in the construction of the desired solution. It not only enables the derivation of the effective dynamics \eqref{Ex.ClassicalCPTPcomp}, but also makes explicit its intrinsic state dependence. In particular, we were only able to obtain expression \eqref{Ex.ClassicalCPTPcomp} because we performed the Bayesian inversion of the conditional state $\rho_{X\vert R}$, which required a prior knowledge of the marginal state $\rho_{R}$.

\subsection{Measure-and-Prepare Channels}\label{SubSec.SolutionMandP}

Under the umbrella of an argument similar to that presented in subsection~\ref{SubSec.SolutionClassicalClassical}, namely that coarse-graining scenarios admit many variations, we explore here a particular—and somewhat peculiar—case.

Echoing that reasoning, in the next sections, we will argue that commutativity is not a general feature of the coarse-graining problem---at least it is not general in the fully quantum case. Consequently, in that light, one may be tempted to conclude that, except for a handful of particular combinations of time evolution and coarse-graining maps, there is no hope in finding an emergent map compatible with the fully-quantum diagram. 

We can leverage the fact that classical regions seem to allow for the emergence of a compatible macroscopic dynamics and solve our problem for yet another large class of coarse-graining maps. To do so, it suffices to squeeze in a classical region in between the upper-most and the lower-most levels of the diagrams. In other words, we will consider the case where the coarse-graining is a measure-and-prepare map. 

Thus, the scenario we address here is posed in the following way: let $\mathcal{E}_{C \vert A}^{MP}: \mathcal{L}(\mathcal{H}_{A}) \rightarrow \mathcal{L}(\mathcal{H}_{C})$ and $\mathcal{E}_{D \vert B}^{MP}: \mathcal{L}(\mathcal{H}_{B}) \rightarrow \mathcal{L}(\mathcal{H}_{D})$ be two measure-and-prepare channels and let $\mathcal{U}_{B \vert A}: \mathcal{L}(\mathcal{H}_{A}) \rightarrow \mathcal{L}(\mathcal{H}_{B})$ be a unitary channel. We thus seek for an emergent map $\Gamma^{MP}:\mathcal{L}(\mathcal{H}_{C}) \rightarrow \mathcal{L}(\mathcal{H}_{D})$ such that
\begin{equation}
    \Gamma^{MP} \circ \mathcal{E}_{C\vert A}^{MP} = \mathcal{E}_{D \vert B }^{MP}  \circ  \mathcal{U}_{B \vert A}.
\end{equation}

\begin{figure}
    \centering
    \includegraphics[width=0.60\linewidth]{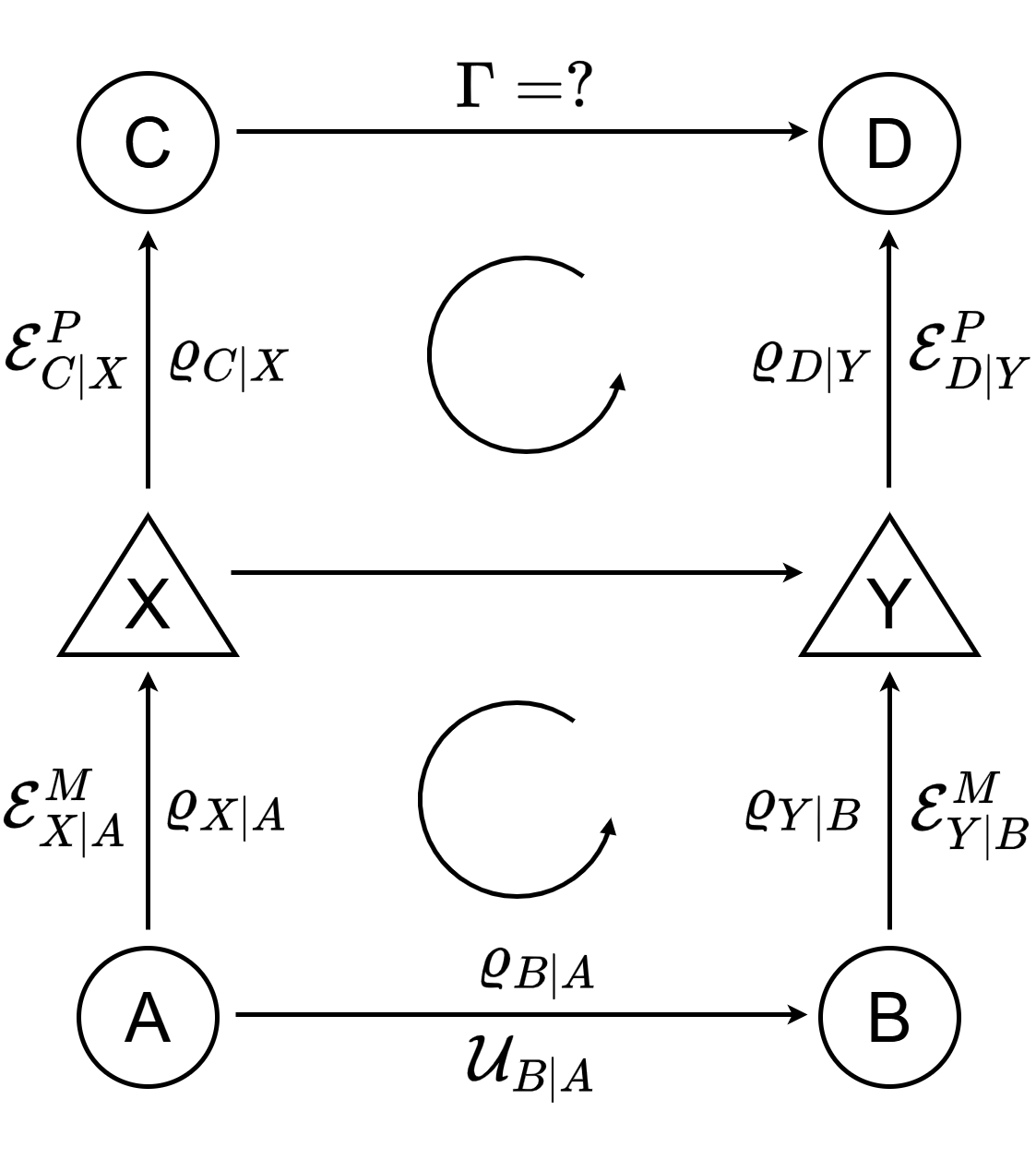}
    \caption{\textbf{Measure-and-prepare diagram.} The measure-and-prepare diagram can be decomposed into two constituent subdiagrams. In the lower subdiagram, the coarse-graining problem is formulated as a measurement process. In the upper subdiagram, it is formulated as an ensemble preparation process. We argue that both subdiagrams exhibit emergent dynamics when suitable measurement and preparation procedures are selected. Here, ``suitable'' refers to procedures that can be consistently interpreted as instances of coarse-graining; not all measurement or preparation protocols satisfy this criterion, but those that do can be naturally embedded within the coarse-graining framework considered in this work. In the left and right branches, we therefore see that measurements $\mathcal{E}_{X\vert A}^{M}$ and $\mathcal{E}_{Y\vert B}^{M}$ are performed on the quantum regions, leading to classical regions. Concomitantly, ensemble preparations $\mathcal{E}_{C\vert X}^{P}$ and $\mathcal{E}_{D\vert Y}^{P}$ are performed conditioned on the measurement outcomes, mapping the classical regions back to quantum regions.}
    \label{fig_measure_and_prepare}
\end{figure}

The diagram in Fig.~\ref{fig_measure_and_prepare} depicts this case. Moreover, upon closer inspection, we observe that the measure-and-prepare diagram is in fact composed of two smaller diagrams. The lower one represents the case in which coarse-graining is treated as a measurement channel, while the upper one represents the case in which coarse-graining is treated as an ensemble preparation channel. Not all procedures falls into this category, and care must be taken whenever we deal with this kind of construction, since not every measurement or preparations necessarily represents a coarse-graining in its very definition. What we emphasize here is that, when a suitable procedure is chosen, we argue that both diagrams exhibit an emergent dynamics in their own right. Such elaboration is made in Appendixes C and D.

The conditional states associated with the measure-and-prepare channels can be constructed as follows. We consider a measurement performed on a quantum region $A$, yielding a classical region $X$. Then, conditioned on the measurement outcomes, ensemble preparations are carried out, mapping the classical region back to a quantum region $C$. Consequently, the conditional state takes the form
\begin{equation}\label{Ex.measureandpreparestate}
\varrho_{C\vert A} = \text{Tr}_{X}\left(\varrho_{C\vert X} \varrho_{X \vert A} \right) = \sum_{x} \rho_{C}^{x} \otimes E_{x}^{A}.
\end{equation}
The construction on the right-hand side of diagram~\ref{fig_measure_and_prepare} is carried out in a similar manner, and a more detailed derivation of the calculations can be found in Appendix~\ref{SubSubSec:MeasureAndPrepare}.

To achieve our solution, that is, the effective dynamics $\Gamma$, we require three ingredients. The first two are as presented in the solution of subsection~\ref{SubSec.SolutionClassicalClassical}, namely the quantum Bayes’ rule and the composition rule of Theorem~\ref{teo:ChannelComposition}. The third ingredient is the measure-and-prepare channel, as introduced above in expression~\eqref{Ex.measureandpreparestate}. Analogously to the first case, the former guarantees that the conditional state $\varrho_{A\vert C}$ is well-defined, that is,
\begin{equation}\label{ex.bayesinainversionmeasureandprepare}
    \varrho_{A\vert C} = \varrho_{C \vert A} \star (\rho_{A}\rho_{C}^{-1}), 
\end{equation}
while the latter gives us that the uppermost conditional state connecting the quantum regions $C$ and $D$ is of the form
\begin{equation}
    \varrho_{D\vert C} = \text{Tr}_{AB}\left(\varrho_{D\vert B} \varrho_{B\vert A} \varrho_{A \vert C}\right).
\end{equation}
The third and final ingredient lead us to treat the coarse-graining problem in a measure-and-prepare perspective.

Thus, the channel that is isomorphic to it corresponds to the emergent dynamics we aimed to obtain. That is,
\begin{equation}\label{Ex.MEPsolution}
   \Gamma_{D\vert C} := \mathcal{E}_{D\vert C} =  \mathcal{E}_{D\vert B}^{MP} \circ \mathcal{U}_{B\vert A} \circ \mathcal{E}_{A\vert C}^{MP}.
\end{equation}
Such a construction is diagrammatically represented in Fig.~\ref{fig_measureandprep_emergent_dynamics}, where we can see arrows that directly connect the quantum regions while ignoring the squeezed classical region between them.
\begin{figure}
    \centering
    \includegraphics[width=0.75\linewidth]{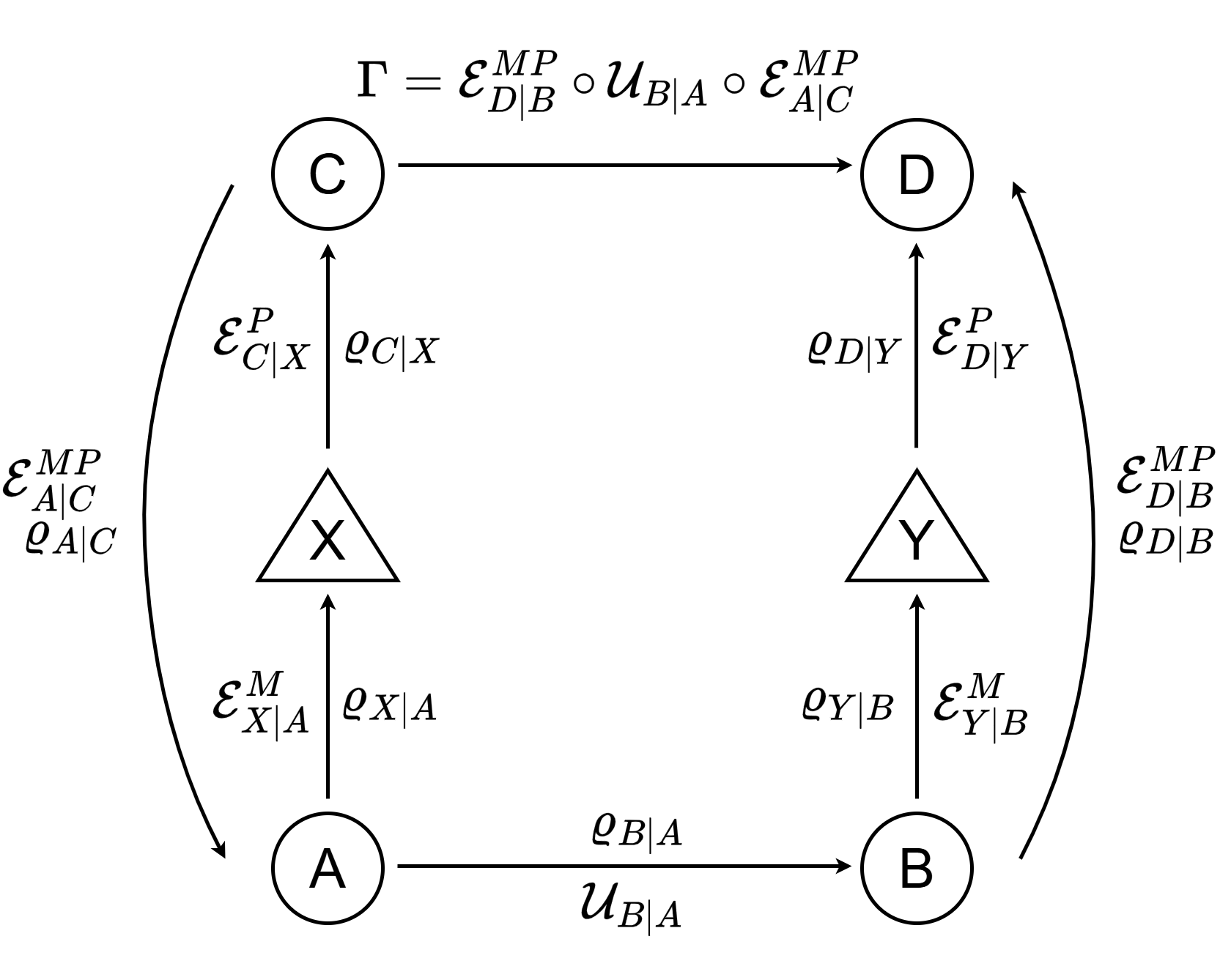}
    \caption{\textbf{Diagram representing the emergent dynamics in the measure-and-prepare scenario.} To obtain the emergent dynamics, the left arm of the diagram — which consists of a measurement followed by an ensemble preparation — gives rise to a conditional state referred to as the measure-and-prepare state. By doing the Bayesian inversion of this conditional state, following the same strategy as before, we are able to obtain the uppermost dynamics, which corresponds to our emergent channel.}
    \label{fig_measureandprep_emergent_dynamics}
\end{figure}

Once again, we establish the emergence of an effective dynamics. Nevertheless, the scenario addressed here is particularly interesting. In section~\ref{SubSec.FullyQuantumPetz}, we will argue that whenever a diagram involves four quantum regions, the problem becomes far from trivial to solve. In that context, our contribution required resorting to the Petz recovery map in order to overcome the limitations encountered.

Oddly enough, when considering measure-and-prepare conditional states, we can effectively hide the classical regions and focus only on the quantum ones. In doing so, we were able to identify a suitable effective dynamics, in direct contrast with the case of four quantum regions without any classical region squeezed between them.

However, the proposed solution is not without the restriction discussed in the preceding sections. Since it is once again embedded in the Bayesian inference structure, that is, expression \eqref{ex.bayesinainversionmeasureandprepare} is required to obtain the emergent dynamics \eqref{Ex.MEPsolution}, we again arrive at an effective dynamics that is intrinsically state dependent.

\section{Petz Recovery Maps as Possible Workaround}\label{Sec.Petzsolution}
\subsection{Introduction}\label{SubSec.PetzMotivation}

Consider the diagram of Fig.~\ref{fig:CGandCSF}. The major issue in the original model is that the first vertical arrow is not invertible.\footnote{And even if it had an inverse, the only case where the inverse is also a physical map would be when it is itself a unitary.} Had it not been the case, we could easily come up with a general solution to the problem, it would suffice to define $\Gamma^{t}:=\Lambda_{\text{CG}} \circ \mathcal{U}^{t} \circ \Lambda_{\text{CG}}^{-1}$, and the diagram would automatically commute. The CSF allows us to find a justified workaround that could resolve the problem in every situation. 

In contrast, the difference of the solution presented here and those given previously lies in the following. When dealing with classical or hybrid (including measure and prepare setups) conditional states, these states are invariant under the transition from the Choi isomorphism to the Jamio\l kowski isomorphism; that is, they retain the same form in both representations. This invariance no longer holds in the fully quantum case, as we will shortly see. Consequently, it becomes necessary to explicitly invoke the Petz map.  

\subsection{A possible workaround - Petz Maps}\label{SubSec.FullyQuantumPetz}

Inverting one of the arrows of diagrams of the form as represented in Fig.~\ref{fig:CGandCSF}, is not an easy—if not an impossible—task. However, this is not the case when we are inside the CSF, and we can explicitly use the Bayesian inversion rule, as we did in the previous sections.

Starting from the usual coarse-graining problem, with the aid of the Fig.~\ref{fig:CGandCSF}, we have that the conditional states isomorphic to their respective channels are,
\begin{align}
    &\varrho_{B\vert A} \cong \mathcal{U}_{B\vert A}^{t}, \label{Ex.isomorphoBA} \\ 
    &\varrho_{C\vert A} \cong \Lambda_{C\vert A}^{\text{CG}}, \label{Ex.isomorphoCA} \\ 
    &\varrho_{D\vert B} \cong \Lambda_{D\vert B}^{\text{CG}}. \label{Ex.isomorphDB}
\end{align}
Once more one resorts to Theorem \ref{teo:ChannelComposition}, obtaining
\begin{equation}\label{Ex.statesolutionfullyquantum}
    \varrho_{D \vert C} = \operatorname{Tr}_{AB}\left(\varrho_{D\vert B} \varrho_{B\vert A} \varrho_{A\vert C}\right),
\end{equation}
and its isomorphic channel
\begin{equation}\label{Ex.firstsolutionfullyquantum}
    \mathcal{E}_{D\vert C} =\Lambda_{D\vert B}^{\text{CG}} \circ \mathcal{U}_{B\vert A}^{t} \circ \Lambda_{A\vert C}^{\text{CG}}.
\end{equation}
The channel \eqref{Ex.firstsolutionfullyquantum} connects the regions $D$ and $C$, that is, the upper part of the Fig.~\ref{fig:CGandCSF}, and therefore is a candidate to be the desired emergent dynamics.

To obtain the conditional state \eqref{Ex.firstsolutionfullyquantum} directly associated with the above solution, and not different of the previous simpler scenarios, we've made use of the Bayesian inversion of the conditional state $\varrho_{C \vert A}$,
\begin{equation}\label{Ex.bayesinverfullyquantum}
\varrho_{A\vert C} = \varrho_{C\vert A} \star (\rho_{A}\rho_{C}^{-1}).
\end{equation}
Once again, this inversion comes at a cost. It not only relies in the own Bayesian structure—thereby rendering the solution state dependent—but also on the fact that the Jamio\l kowski isomorphic state is not positive in general (see Appendix~\ref{APX.PetzSolution} and ref.~\cite{LS13}). Due to this lack of complete positivity, the expression \eqref{Ex.bayesinverfullyquantum} does not yield a valid conditional state.

Nonetheless, within the CSF itself, this limitation can be circumvented. In particular, the emergence of the Petz recovery map \cite{Petz1986} from the channel associated with the inverted conditional state is what guarantees to us the feasibility of the solution. A more detailed discussion of this limitation and the emergence of the Petz map can be found in Appendix \ref{APX.PetzSolution} and, once again, in ref.~\cite{LS13}.

The Petz recovery map is of the form
\begin{equation} \label{eq:PetzMapfullyquatum}
    \mathcal{R}_{A\vert C} (\cdot) = \rho_A^{\frac{1}{2}} \{ (\Lambda^{\text{CG}}_{C \vert A})^{\dagger}[\rho_C^{-\frac{1}{2}} (\cdot) \rho_C^{-\frac{1}{2}}]\}\rho_A^{\frac{1}{2}},
\end{equation}
and this is precisely the channel $\mathcal{E}_{A\vert C}$ isomorphic to the state \eqref{Ex.bayesinverfullyquantum}. In this case, 
\begin{equation}
   \Lambda_{A\vert C}^{\text{CG}} = \mathcal{E}_{A\vert C}  = \mathcal{R}_{A\vert C} .
\end{equation}

The proposed emergent dynamics takes the form,
\begin{equation}\label{Ex.petzeffectivesolution}
    \Gamma^{t}_{D\vert C} \coloneqq \mathcal{E}_{D\vert C} = \Lambda_{D\vert B}^{\text{CG}} \circ \mathcal{U}_{B\vert A}^{t} \circ \mathcal{R}_{A\vert C}.
\end{equation}
Fig.~\ref{fig_fullyquantum_solution} shows our proposed solution along with the conditional states and the Petz map, as well as presenting the isomorphic channels for each of the conditional states.
\begin{figure}
    \centering
    \includegraphics[width=0.6\linewidth]{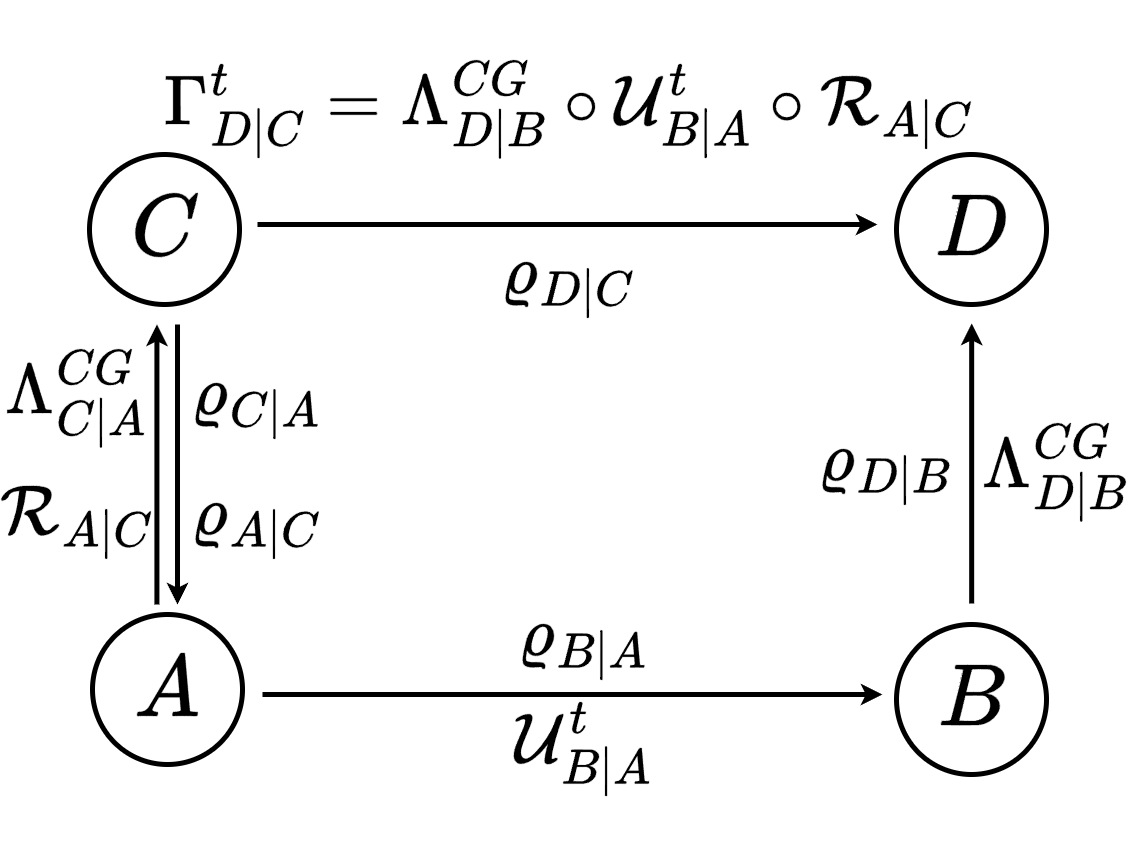}
    \caption{\textbf{Diagram representing the proposed solution to the emergent dynamics.} In the left arm of the diagram, we have the coarse-graining channel together with his isomorphic conditional state, along with the Petz map associated with the Bayesian inversion of the conditional state isomorph to the coarse-graining map. In the upper part of the diagram, we have the proposed emergent dynamics.}
    \label{fig_fullyquantum_solution}
\end{figure}

A closer inspection clarifies why the Petz map \eqref{eq:PetzMapfullyquatum} makes the state dependence of the solution \eqref{Ex.petzeffectivesolution} explicit. Suppose that, in the coarse-graining problem, the only objects available are the coarse-graining map and the unitary map. In this case, one must necessarily introduce a prior state $\rho_{A}$ such that
\begin{equation}\label{Ex.rho_cCG}
    \rho_{C} = \Lambda_{C\vert A}^{CG}(\rho_{A}).
\end{equation}
Moreover, the only way to recover $\rho_{A}$ from \eqref{eq:PetzMapfullyquatum} is to take the input state to be precisely \eqref{Ex.rho_cCG}, since $\Lambda_{C\vert A}^{CG}$ being a trace preserving map, implies that $(\Lambda_{C\vert A}^{CG})^\dagger$ is unital. For any other state in $\mathcal{D}(\mathcal{H}_{C})$, there is no guarantee that \eqref{eq:PetzMapfullyquatum} yields a valid preimage.

In conclusion, we present here yet another instance of the claim advanced in sec.~\ref{Sec.CGAsBayesianInference}. Whether due to the intrinsic Bayesian structure of the solution or, as in the present case, to the explicit appearance of the Petz map, the resulting effective dynamics is inevitably state dependent.

\subsection{How good is this potential solution?}\label{SubSec.FullyQuantumCriticialAnalysis}

The coarse-graining problem, as proposed here, is formulated as the search for an effective dynamics that respects the commutation relation~\eqref{eq:CommutationRelation} globally. However, the solution we propose, given by expression~\eqref{Ex.petzeffectivesolution}, renders the diagram commutative only pointwise, due to the direct state dependence on the Petz recovery map. As a consequence, we do not in fact obtain a global solution and cannot, strictly speaking, argue that our proposed solution $\Gamma_{D\vert C}^{t}$ renders the diagram to be commutative.

As we will see next in the upcoming numerical analysis, although we face the aforementioned limitations in the effective dynamics \eqref{Ex.petzeffectivesolution}, it will still be useful in different contexts, serving as the optimal solution when no other is attainable.

Before proceeding to the numerical analysis, we adopt another (still analytical) complementary approach, namely, verifying—within the setting of a two-qubit system—whether the emergent dynamics is attainable. 

Leveraging on what we will call laboratory space variables, we will obtain conditions that render the diagram to commute. These conditions will be confronted to the numerical analysis ahead. 

\section{The Fully Quantum Case: when a solution is not always possible}\label{Sec.CGSolutionNotPossible}
Now that we have investigated three situations where an appropriated emergent map was established, we can get back to original scenario in which all regions are quantum---where there is no preferred basis by which all density operators are simultaneously diagonalisable. In this fully quantum scenario it is not always the case that there is an emergent quantum dynamics that makes the diagram of Fig.~\ref{fig:Coarse_graining_diagram} commute. 
%

Although our analytical approach was designed to work in full generality, to verify it against concrete benchmarks, we will focus on studying and analyzing bipartite systems of two qubits. This way, we can determine interesting properties without having to care about the quirks of multipartite, high-dimensional quantum scenarios.
 To address this in detail, we begin by presenting two unitary dynamics which, when combined with the two paradigmatic coarse-graining maps $\Lambda_{\text{Tr}}$ and $\Lambda_{\text{BnS}}$ to be presented in a while, allow us to propose and analyze operationally four concrete coarse-graining scenarios.
In all cases, still resorting to the diagram of Fig.~\ref{fig:CGandCSF}, we will consider both regions $A$ and $B$ to be four-dimensional:  $\mathcal{H}_A \simeq \mathcal{H}_B \simeq \mathds{C}^2 \otimes \mathds{C}^2$, whereas the regions $C$ and $D$ will be two-dimensional, that is, $\mathcal{H}_C \simeq \mathcal{H}_D \simeq \mathds{C}^2$.

With the intention of bringing our concrete examples closer to some benchmark models in the literature, the first unitary evolution we choose to incorporate into the following coarse-graining scenarios is a quantum-channel version of the two-qubit SWAP gate. The SWAP operator, denoted in this contribution with respect to the computational basis by 
\begin{equation} \label{eq:SWAPGate}
    U_{\text{swap}} : = \begin{pmatrix}
        1 & 0 & 0 & 0\\
        0 & 0 & 1 & 0\\ 
        0 & 1 & 0 & 0 \\
        0 & 0 & 0 & 1
    \end{pmatrix},
\end{equation}
is a unitary operator that is widely recognized and employed in  quantum computing \cite{Nielsen_Chuang_2010}.
Although its action simply exchanges the quantum state of a two-qubit system, this gate has meaningful applications spread along the literature, one of them is its role as one of the building blocks of the SWAP test \cite{BBDEJRM97}: the best tool in quantum computation used to estimate how close two given quantum states are \cite{Nishimura2025}. 

Given that our concrete examples will be restricted to two-qubit systems on the lower branches of the coarse-graining diagram, our SWAP channel will assume, in its Kraus representation, the following form:
\begin{equation}\label{SWAPeq}
     \mathcal{U}^{swap}_{B\vert A}( \cdot) :=  U_{swap}
    (\cdot) 
  U^{\dagger}_{swap}.
\end{equation}

The second unitary evolution adopted here is based on an adapted version of the Hamiltonian used to model spin interactions of two neighboring particles in a quantum Ising model \cite{Sachdev199}, in a region where the external magnetic field vanishes. More specifically, we employ the Hamiltonian $H = -J \hbar\sigma_{z} \otimes \sigma_{z}$, for a given coupling constant $J$ in units of frequency, to model the interaction between the parts of a two-qubit system determined by the alignment (or anti-alignment) of its spin projections along the $z-$axis. As usual, $\sigma_z=\mbox{diag}(1,-1)$ is one of the Pauli matrices. An important feature of such a Hamiltonian is that the unitary operator arising  from it, denoted in this contribution by 
\begin{equation}\label{eq:U_z}
    U_{\sigma_z}: = e^{-i\frac{Ht}{\hbar}}  = \begin{pmatrix}
        e^{itJ} & 0 & 0 & 0\\
        0 & e^{-itJ} & 0 & 0\\ 
        0 & 0 & e^{-itJ} & 0\\
        0 & 0 & 0 & e^{itJ}\\
    \end{pmatrix},
\end{equation}
models a quantum dynamics in which entanglement is created (or annihilated) between the two parts of the system, depending intrinsically on the value of the time parameter $t$ \cite{Amaral2011MQ}. Thus, with the motivation of bringing to our coarse-graining scenarios a quantum channel that addresses quantum correlation aspects between the subsystems, we define our last unitary dynamics,  
\begin{equation}\label{eq:InteractionChannel}
\mathcal{U}^{\sigma_z}_{B\vert A}(\cdot): = U_{\sigma_z}(\cdot)\ U_{\sigma_z}^\dagger.
\end{equation}
We will name it $z-$interaction channel.

Setting aside both unitary evolutions above, we now look at the most general state in $\mathcal{H}_A \simeq \mathds{C}^4$ , which, according to Bloch's representation \cite{gamel_entangled_2016}, is given by,
\begin{align} 
    \rho_0=\frac{1}{4} \Big (\mathds{I}\otimes\mathds{I}+\sum^3_{i=1}r_i\sigma_i\otimes\mathds{I} &+\sum^3_{j=1}\mathds{I}\otimes s_j\sigma_j  \nonumber \\ \label{rho-rst} &+ \sum^3_{i,j=1}T_{ij}\sigma_i\otimes\sigma_j \Big ),
\end{align}
where $\Vec{\sigma} = (\sigma_1, \sigma_2, \sigma_3)$ is the collection of the Pauli matrices.

With this in mind, our analysis of the coarse-graining problem in this section closely follows the line of research developed in \cite{rizzuti2020operational, valle_towards_2024, grossi_one_2023}, known as the operational approach, in which the authors seek to reformulate and, at times, reconstruct mathematical objects—usually taken as “given”—directly from the physical world that surrounds us, thereby enabling an interpretation of the theory in terms of what are called laboratory variables.

Thus, leveraging on the operational approach, we name the ordered set $\{\vec{r}, \vec{s}, T \}$ as the lab space, due to the intrinsic connection that these variables have with the physical space, as discussed in detail in \cite{valle_towards_2024, grossi_one_2023}. In short, this set corresponds to the variables describing the qubit on one part of the bipartite system under consideration ($\vec{r}$), the qubit on the other ($\vec{s}$) and the correlation matrix, $T$, that encodes the entanglement properties between them, with elements $T_{ij}$.

Due to the fact that our states evolves under particular coarse-graining maps and unitary evolutions, it is in our interest to introduce, in Fig. \ref{fig_Gamma_tilde}, the counterpart mappings $\tilde{\Lambda}^{\text{CG}}_{C\vert A}$, $\tilde{\Lambda}^{\text{CG}}_{D\vert B}$, $\tilde{\mathcal{U}}_{B\vert A}^t$ and $\tilde{\Gamma}_{D\vert C}^t$ related to the 
lab space variables. They are in one-to-one correspondence with the former maps and translates the action of, say $\Lambda^{\text{CG}}_{C\vert A}$ in $\rho_0$ to $\tilde{\Lambda}^{\text{CG}}_{C\vert A}$ applied to the triple $\{\vec{r}, \vec{s}, T\}$ representing $\rho_0$.

Moreover, they will be useful describing whenever an emergent dynamics is attainable and under what conditions $\Gamma_{D\vert C}^t$ (or $\tilde{\Gamma}_{D\vert C}^t$) is indeed definable in the 4 different scenarios to be addressed.
Such results shall be compared with the numerical evaluations carried out.


All subsequent combinations of coarse-graining and unitary transformations that will be explored in the following subsections can be represented by the diagram in the Fig.~\ref{fig_Gamma_tilde}.

\begin{figure}
    \centering \includegraphics[width=1\linewidth]{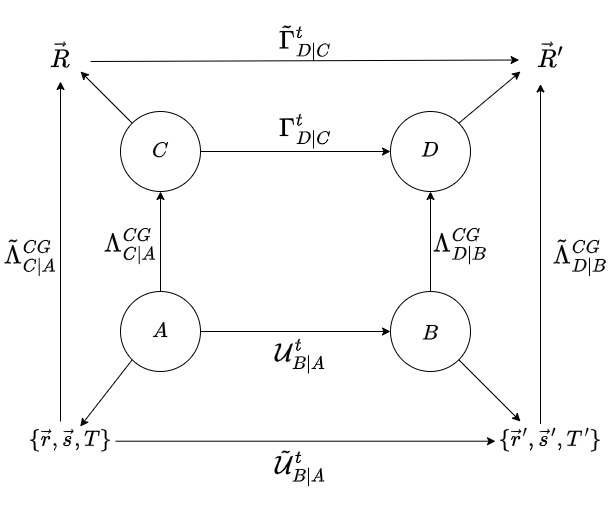}
    \caption{\textbf{Diagram representing the coarse-graining scenario with the addition of the mappings related to the lab space variables.} This diagram introduces, based on operational principles, the mappings $\tilde{\Gamma}_{D\vert C}^t$, $\tilde{\Lambda}^{\text{CG}}_{C\vert A}$, $\tilde{\Lambda}^{\text{CG}}_{D\vert B}$ and $\tilde{\mathcal{U}}_{B\vert A}^t.$ In essence, these new mappings are of the same nature as their counterparts; they merely serve as useful mathematical operational tool in describing our emergent dynamics. As for the the coarse-grained lab space vectors $\vec{R}$ and $\vec{R}'$, they describe their respective density operators in $\mathds{C}^2$ with $\tilde{\Gamma}_{D\vert C}^t(\vec{R}) = \vec{R}'$.}
    \label{fig_Gamma_tilde}
\end{figure}

\subsection{Scenario 1: Blurred and saturated detector and SWAP channel}\label{SubSec.FullyQuantumBnSSWAP}

The blurred and saturated detector basically consists of a device that cannot distinguish between different excited systems. Consider the case of one atom whose state is measured by means of fluorescence: when a laser is shined on it, we associate the state $\ket{1}$ if the light is scattered, and $\ket{0}$ otherwise. In this way, for a system composed of two neighboring atoms we have four possibilities: two excitations, $\ket{11}$; a single excitation, $\ket{10}$ or $\ket{01}$, corresponding to either the first or the second atom; and finally the case with no excitations, $\ket{00}$.

We obtain the blurred and saturated detector by assuming that our device easily saturates under the emission of a single atom and cannot distinguish the light scattered by each atom in the cases $\ket{10}$ and $\ket{01}$, effectively causing both blurring and saturation of the scattered light detected by its lens. 
Thus, the detectors would yield the same outcome whenever measuring the system in any of the states 
$\ket{11}$, $\ket{01}$ or $\ket{10}$.

This faulty detector is represented by the coarse-graining map $\Lambda_{\text{BnS}}:\mathcal{L}(\mathds{C}^4)\to\mathcal{L}(\mathds{C}^2)$, that acts on states through the Kraus representation,
with the following Kraus operators \cite{DCBM17},
\begin{align}
    K_1 &= \begin{pmatrix}
        1 & 0 & 0 & 0\\
        0 & 1/\sqrt{3} & 1/\sqrt{3} & 1/\sqrt{3}\\ 
    \end{pmatrix}; \nonumber \\ 
    K_2 &= \begin{pmatrix}
        0 & 0 & 0 & 0\\
        0 & 1/\sqrt{3} & 0 & -1/\sqrt{3}\\ 
    \end{pmatrix}; \nonumber\\ 
    K_3 &= \begin{pmatrix}
        0 & 0 & 0 & 0\\
        0 & 1/\sqrt{3} & -1/\sqrt{3} & 0\\ 
    \end{pmatrix};\nonumber\\ 
    K_4 &= \begin{pmatrix}
        0 & 0 & 0 & 0\\
        0 & 0 & 1/\sqrt{3} & -1/\sqrt{3}\\ 
    \end{pmatrix}. \nonumber
\end{align}

Acting with this coarse-graining on the general state given in \eqref{rho-rst}, we obtain:
\begin{equation}\label{BnSaction}
    \phi_0 = \Lambda_{\text{BnS}}(\rho_0) =  \frac{1}{2}\left(\mathds{I}+\vec{R}\cdot\Vec{\mathbf{\sigma}}\right),
\end{equation}
where  
\begin{align}
    R_1 &= \frac{r_1+s_1+t_{31}+t_{13}+t_{11}-t_{22}}{2\sqrt{3}}; \label{R1} \\ R_2 &= \frac{r_2+s_2+t_{32}+t_{23}+t_{12}+t_{21}}{2\sqrt{3}}; \label{R2} \\ R_3 &= \frac{r_3+s_3+t_{33}-1}{2}. \label{R3}
    \end{align}
The calculations presented above represent the arrow going from the quantum region $A$ to the quantum region $C$ in the diagram of Fig. \ref{fig_Gamma_tilde}. We still have to calculate the SWAP from $A$ to $B$, as well as the CG from $B$ to $D$, after the latter evolution. 

Upon acting on our state $\rho_0$ with the SWAP operator $\mathcal{U}_{swap}$, one obtains the corresponding result in the lab space variables,
\begin{equation}
    \{\vec{r}, \vec{s}, T \} \xrightarrow{\tilde{\mathcal{U}}_{swap}} \{\vec{s}, \vec{r}, T^\text{T} \}.
\end{equation}
That is, acting on the lab space with $\tilde{\mathcal{U}}_{swap}$ results in the exchange of the vectors $\Vec{r}$ and $\Vec{s}$ and the transposition of  the correlation matrix $T$.

Next, following the arrow from $B$ to $D$ in our diagram, by acting with the coarse-graining of the blurred and saturated detector on our evolved state under the SWAP, we obtain exactly the same result found in equation \eqref{BnSaction}.
The expressions \eqref{R1}-\eqref{R3} are clearly invariant under the exchange $\vec{r} \leftrightarrow \vec{s}$ as well as the transposition in the correlation sector. This result means that the only emergent map $\Gamma^{\text{BnS}}_{swap}$  attainable connecting the quantum regions $C$ and $D$ is represented by the identity on $\mathcal{L}(\mathds{C}^2)$.

This result is further reinforced when we examine the action of the SWAP operator on two qubits states, given by \begin{equation}\label{swap01}
    \ket{00}\to\ket{00}, \ \ \ket{10}\to\ket{01}, \ \ \ket{01}\to\ket{10}\  \ \text{e} \ \ \ket{11}\to\ket{11}. \nonumber
\end{equation} 
 Since our blurred and saturated detector does not distinguish between the states $\ket{01}$ and $\ket{10}$, and since the states $\ket{11}$ and $\ket{00}$ remain unchanged after the SWAP operation, we observe that the action of SWAP on our states does not modify the outcome produced by the blurred and saturated detector, thus corroborating our analysis of this scenario.
\subsection{Scenario 2: Blurred and saturated detector and $z-$interaction channel}\label{SubSec.FullyQuantumBnsSigmaz}

Focusing once more on the blurred and saturated detector, this time under a different unitary evolution represented by eq. \eqref{eq:InteractionChannel}. In contrast to the situation previously analyzed, we will see that it is precisely in this context that the example of the blurred and saturated detector reveals its limitations. 
The issue arises from the fact that, under certain conditions, the $\Gamma$ mapping may not even be consistently definable. As the analysis will indicate, its very existence depends on constraints linked to the initial state $\rho_0$---a fact that, when taken together with the types of emergence addressed in \cite{carroll2024emergencepossiblymean}, makes evident the restricted domain of applicability (for this type of coarse-graining description) within which an emergent dynamics can be defined. 

We now apply the $z-$interaction channel  \eqref{eq:InteractionChannel} to our state and then apply the coarse-graining once again.

For a general $4\times4$ matrix $A$ with elements $(A)_{ab} = \alpha_{ab}$, with $a,b=1,...,4$,
the unitary channel induces the transformation,
\begin{equation}\label{sigma_z-on-rho}
    U_{\sigma_z}A\ U_{\sigma_z}^\dagger = \begin{pmatrix}
        \alpha_{11} & e^{2it}\alpha_{12} & e^{2it}\alpha_{13} & \alpha_{14}\\
        e^{-2it}\alpha_{21} & \alpha_{22} & \alpha_{23} & e^{-2it}\alpha_{24}\\ 
        e^{-2it}\alpha_{31} & \alpha_{32} & \alpha_{33} & e^{-2it}\alpha_{34}\\
        \alpha_{41} & e^{2it}\alpha_{42} & e^{2it}\alpha_{43} & \alpha_{44}\\
    \end{pmatrix}.
\end{equation}

We apply this result to the initial state $\rho_0$, followed by the CG map obtaining,
\begin{equation} \label{phi_0linha}
    \phi_0' = \frac{1}{4}\begin{pmatrix}
        1+r_3+s_3+t_{33} & \frac{\beta_1}{\sqrt{3}} \\
        \frac{\beta_2}{\sqrt{3}} & 3-r_3-s_3-t_{33} \\ 
    \end{pmatrix},
\end{equation}
where
\begin{align}
    \beta_1 =& e^{2it}(r_1+s_1+t_{13}+t_{31}-i(r_2+s_2+t_{23}+t_{32})) \nonumber\\ &+t_{11}-t_{22}-i(t_{12}+t_{21}) ;\\ 
    \beta_1 =& e^{-2it}(r_1+s_1+t_{13}+t_{31}+i(r_2+s_2+t_{23}+t_{32}))\nonumber\\ &+t_{11}-t_{22}+i(t_{12}+t_{21}).
\end{align}

We now analyze and compare both states $\phi_0$, found in equation \eqref{BnSaction} of the previous scenario, and $\phi_0'$ of eq. \eqref{phi_0linha}. To do this, since only the variables $t_{11}, \  t_{22}, \  t_{12}$ and $ t_{21}$ are not multiplied by the factors $e^{2it}$ and $e^{-2it}$, it is in our interest to decompose the coarse-grained lab space vector $\Vec{R}$ present in the equation \eqref{BnSaction} into two parts,
\begin{equation}
    \Vec{R} = \vec{R}_a + \vec{R}_b,
\end{equation}
with 
\begin{align}
\vec{R}_a =& \frac{1}{2\sqrt{3}}[(r_1+s_1+t_{13}+t_{31})\hat{i}+(r_2+s_2+t_{23}+t_{32})\hat{j}]+\nonumber \\ &\frac{1}{2}(r_3+s_3+t_{23}-1)\hat{k}
\end{align}
and
\begin{equation}
    \vec{R}_b = \frac{1}{2\sqrt{3}}[(t_{11}-t_{22})\hat{i}+(t_{12}+t_{21})\hat{j}].
\end{equation}
Thus, we find that $\tilde{\Gamma}^{\text{BnS}}_{\sigma_z}$ acts on the vector $\Vec{R}$ in the following way,
\begin{equation}
    \tilde{\Gamma}^{\text{BnS}}_{\sigma_z}(\Vec{R}) = \mathcal{R}(t)\vec{R}_a + \vec{R}_b,
\end{equation}
where 
\begin{equation}
    \mathcal{R}(t) = \begin{pmatrix}
        \cos2t & \sin 2t & 0 \\
        -\sin 2t & \cos2t & 0 \\ 
        0 & 0 & 1 \\
    \end{pmatrix}.
\end{equation}

The action of $\tilde{\Gamma}^{\text{BnS}}_{\sigma_z}$ on our vector $\vec{R}$ is but a $z$-axis rotation matrix $\mathcal{R}(t)$ to specific components of this vector. In the particular case where $\vec{R}_b$ is zero, the operation can be represented purely by a single proper rotation. In this case where $\vec{R}_b = \vec{0}$, one can immediately write down the explicit form of the emergent dynamics, based on the intrinsic connection between $SU(2)$ and $SO(3)$ groups \cite{valle_rotations_2026}
\begin{equation}
\Gamma^{\text{BnS}}_{\sigma_z}(\cdot) = \mathcal{U}_z(t) (\cdot) \ \mathcal{U}^{*}_z(t).
\end{equation}
Here, $\mathcal{U}_z(t)$ is the $SU(2)$ element corresponding to the  $\mathcal{R}(t) \in SO(3)$ rotation,
\begin{equation}
    \mathcal{U}_z(t) = \begin{pmatrix}
        e^{it} & 0 \\
        0 & e^{-it}
    \end{pmatrix}.
\end{equation}

Clearly, this ideal scenario does not hold in general, as the components of $\vec{R}_b$ are not necessarily null. Consequently, a solution that ensures the commutation of our diagram, for all microscopic states, is not guaranteed in all cases. The feasibility of such a solution depends critically on the parameters $t_{11}, \, t_{22},\,  t_{12}$ and $t_{21}$, namely, $t_{11} = t_{22}$ and $t_{12} = -t_{21}$ (making $\vec{R}_b = \vec{0}$), which are determined by the matrix that encodes the entanglement properties of our state $\rho_0$. Hence, these parameters play a fundamental role in dictating the commutativity of the diagram we are aiming for.

Some recent works concerning thermodynamical descriptions of coarse-graining scenarios are interested in the reversibility of the emergent dynamics \cite{VallejosEtAl22}. Interestingly enough, the condition $\vec{R}_b = \vec{0}$, guaranteeing the very definition of our emergent dynamics, is also the condition that assures the invertibility of $\tilde{\Gamma}^{\text{BnS}}_{\sigma_z}$. This topic, together with the analysis of thermodynamic quantum processes via the conditional states formalism, will be explored elsewhere.

\subsection{Scenario 3: Partial Trace and SWAP channel}\label{SubSec.FullyQuantumTrSWAP}

One of the most fundamental and widely studied examples of coarse-graining maps is the partial trace. It is clear that it's definition naturally embodies the idea of disregarding inaccessible or unobserved degrees of freedom, that is, when a quantum system is described by a density operator $\rho_{SE}$ on the Hilbert space $\mathcal{H}_{S} \otimes \mathcal{H}_E$, the operation $\Tr_{E}(\rho_{SE})=\rho_S$ 
effectively discards any information stored in subsystem E. This is the unique map the consistently describes observable quantities for subsystems of a large system \cite{Nielsen_Chuang_2010}.

From an operational standpoint, this corresponds to a situation in which the observer or detector is unable—or simply not allowed—to access degrees of freedom associated with $E$. In this sense, the partial trace constitutes a paradigmatic example of a completely positive, trace-preserving (CPTP) map that embodies the essence of the coarse-graining procedure: it maps the full microscopic description of the joint system onto a reduced, macroscopic (or accessible) description, thereby suppressing correlations and coherences that lie beyond the resolution of the observation.

It would be particularly interesting to examine how the partial trace affects the state before and after the action of our chosen unitaries, given that this loss of information is inherently irreversible, since the full state $\rho_{SE}$ cannot, in general, be reconstructed from the reduced state $\rho_S$.

Starting with the simplest case, represented by the SWAP, as discussed in the scenario presented in \ref{SubSec.FullyQuantumBnSSWAP}, its action in our state $\rho_0$ has, as one of its consequences, the exchange of the vector $\Vec{r}$ with the vector $\Vec{s}$ in eq.  \eqref{rho-rst}. Thus, if we perform the partial trace over one of the bipartite system before applying the SWAP, we obtain, from \eqref{rho-rst}, 
\begin{equation}
\phi = \frac{1}{2}\left (\mathds{I}+ \vec{r}\cdot \vec{\sigma}\right ),
\end{equation}
whereas performing the same operation after the SWAP yields
\begin{equation}
\phi' = \frac{1}{2}\left (\mathds{I}+ \vec{s}\cdot \vec{\sigma}\right ).
\end{equation}
In the case of the $\tilde{\Gamma}^{Tr}_{swap}$ mapping, the situation becomes even more restrictive as compared to the above scenarios. Since the partial trace entails a complete loss of information about the second part on the bipartite system, no consistent definition of the $\tilde{\Gamma}^{Tr}_{swap}$ map can be established except in the trivial case where the local Bloch vectors $\vec{r}$ and $\vec{s}$ coincide. When this happens, we find that the $\Gamma^{Tr}_{swap}$ is equal to the identity map meaning that no evolution is predicted by the allegedly emergent dynamics. Indeed, for distinct initial states, with $\{\vec{r},\vec{s_1},T\}$ and $\{\vec{r},\vec{s_2},T\}$, not even a $\tilde{\Gamma}^{Tr}_{swap}$ map would be admissible, since one would be forced to have $\tilde{\Gamma}^{Tr}_{swap}(\vec{r})=\vec{s_1}$ and $\tilde{\Gamma}^{Tr}_{swap}(\vec{r})=\vec{s_2}$. We finally point out that this line of reasoning is completely independent of the correlation matrix $T$, since the coarse-graining rules it out.

\subsection{Scenario 4: Partial Trace and $z-$interaction channel}\label{SubSec.FullyQuantumTrSigmaz}

Finally, in our last scenario, since we already know how the $z-$interaction channel acts on our state, see eq. \eqref{sigma_z-on-rho}, it suffices to take the partial trace over $E$ of the $4\times 4$ matrix representing our our time-evolved state $\rho_t$.

We obtain
\begin{equation}
   \Tr_E (\rho_t)= \frac{1}{2}\left(\mathds{I}+\vec{r}\,'\cdot \vec{\sigma}\right),
\end{equation}
where
\begin{align}\label{43}
    \tilde{\Gamma}^{Tr}_{\sigma_z} (\vec{r}) = \vec{r}\,'= \mathcal{D}(t)\vec{r} + \sin 2t \vec{\tau}.
\end{align}
We are denoting $\mathcal{D}(t):= \mbox{diag}(\cos2t, \cos 2t,1)$ and $\vec{\tau}:= (t_{23}, -t_{13}, 0)$.

In this particular scenario, the dynamics is given by a linear transformation induced by $\mathcal{D}(t)$, together with a non-linear oscillating second term in \eqref{43}. This second term depends on specific entries of the correlation matrix that encode the entanglement structure of the composite state---in this case, the components $t_{23}$ and $t_{13}$. A linear (and reversible) dynamics is attained whenever 
\begin{equation}\label{t23=t13=0}
t_{23}=t_{13}=0.    
\end{equation}
If the condition above is satisfied, then we can explicitly write down the emergent dynamics. It is a time-varying phase flip. The Kraus operators are given by $\kappa_1 = \cos t \mathds{I}$ and $\kappa_2 = \sin t \sigma_z$.

Hence, we once again observe that the same underlying mechanism governs both situations, dependence on $T$, reinforcing the crucial role of these parameters in determining the existence and reversibility of  $\tilde{\Gamma}^{Tr}_{\sigma_z}$.

Our results concerning each scenario, with the necessary condition for the emergence of the corresponding emergent dynamics are summarized 
in the table of Fig \ref{fig:emergent_dynamics}.
\begin{figure}
\centering
\resizebox{\columnwidth}{!}{$
\begin{array}{c|c|c}
 \Gamma& \Lambda_{BnS} & \Lambda_{Tr} \\
\hline
\mathcal{U}_{swap}(\cdot) &
\begin{array}{c}
\text{Identity on } \mathcal{L}(\mathds{C}^2)
\end{array}
&
\begin{array}{c}
\text{Identity on } \mathcal{L}(\mathds{C}^2) \\
\text{whenever } \vec{r}=\vec{s}
\end{array}
\\
\hline
\mathcal{U}_{\sigma_z}(\cdot) &
\begin{array}{c}
\Gamma^{\text{BnS}}_{\sigma_z}(\cdot) = \mathcal{U}_z(t) (\cdot) \ \mathcal{U}^{*}_z(t) \\ \text{ whenever } \vec{R}_b=\vec{0}
\end{array}
&
\begin{array}{c}
\Gamma^{Tr}_{\sigma_z}(\cdot) = \kappa_1(\cdot)\kappa^{\dagger}_1 + \kappa_2(\cdot)\kappa^{\dagger}_2  \\
\text{ whenever } \vec{\tau} = \vec{0} 
\end{array}
\end{array}
$}
\caption{Explicit emergent dynamics for different coarse-graining problems.}
\label{fig:emergent_dynamics}
\end{figure}

From the analysis of the scenarios presented above, it becomes clear that the emergent dynamics of the coarse-graining problem under consideration possess a distinct and strong dependence on the diagram utilized, i.e., on the particular combination of coarse-graining and unitary evolutions being addressed. We can, however, identify certain similarities among these scenarios. For instance, in both cases where the $z$-interaction channel is used, we observe a clear dependence on the initial state $\rho_0$, specifically on the variables of the matrix $T$, that encodes the entanglement of $\rho_0$, a situation that will be numerically explored in Sec.~\ref{sec:NumericalResults}.

Similarly, when analyzing the scenarios involving the SWAP operator, in both cases, we have the identity map on $\mathcal{L}(\mathds{C}^2)$ as the resulting emergent dynamics, albeit under vastly different constraints. While there are no restrictions on the initial state in the case of the blurred and saturated detector, the partial trace scenario requires, for the existence of the emergent dynamics, the condition $\vec{r} = \vec{s}$ in Eq.~\eqref{rho-rst}.

It is evident that the analysis presented here is restricted to the specific coarse-graining discussed in this section, and that the similarities observed among these scenarios alone are by no means sufficient to establish a general pattern across different combinations of the coarse-graining problem. That being said, a general characterization of the coarse-graining problems will be the subject of future investigations on this topic.

For now, we will focus on a different method of analysis. Through a semidefinite programming approach and a computational perspective, we reinforce the analysis taken in this section, as well as explore the role of our proposed solution $\Gamma^{\text{Petz}}_{D \vert C, \rho_A}$ (Sec. \ref{SubSec.FullyQuantumPetz}) in these four paradigmatic coarse-graining scenarios.

\section{Semidefinite programming Analysis}\label{Sec.SDPAnalysis}

As discussed in  previous sections, the conditional states formalism, when adopted as an operational standpoint, gives to the coarse-graining problem a Bayesian inference character. Besides this, as noted in the discussions around Fig.\ref{fig:CGandCSF} and stated at eq. \eqref{eq:CommutRelationBP}, it becomes possible to look at the conventional commutativity relation \eqref{eq:CommutationRelation} as a function of the causal conditional states associated with each quantum channel exhibited in the diagram of Fig. \ref{fig:CGandCSF} and in the others shown along the section \ref{Sec.CGSolutionPossible}. However, since, by Definition \ref{def:CausalState}, such states are not positive semidefinite operators in general, this prevents us from directly approaching, at first glance, the coarse-graining scenario within this framework of conditional states through the implementations of semidefinite programs. 

Because of this, in order to establish the appropriate setting to analyze the problem from a semidefinite programming perspective, we rely on the positivity of the acausal conditional states associated with each quantum dynamics and employ these states to study coarse-graining scenarios. We also will adopt eq.  \eqref{eq:compositionRuleChoi} as the basis for expressing the conventional commutativity relation \eqref{eq:CommutationRelation} in terms of the respective Choi-isomorphic operators of each quantum dynamics. 

In this sense, we advise the reader that this representation slightly differs from the approach adopted throughout sections \ref{Sec.CGSolutionPossible} and \ref{Sec.CGSolutionNotPossible} for describing the action of quantum channels. Since acausal and causal conditional states are connected via a partial transposition (Definition \ref{def:CausalState}), we point out that the use of acausal conditional states will not alter this conventional point of view of a coarse-graining diagram, given that we will still be considering the causal connection in a quantum dynamics and focusing solely, for convenience, on their Choi-isomorphic operators.

On this basis, we engage in some questions that arise from our proposed  Bayesian, state-dependent solution to the coarse-graining problem, addressing its limitations and exploring additional aspects of this Bayesian point of view of the coarse-graining problem through semidefinite programs (SDPs) implementations. Such implementations will anchor the numerical benchmarking, conducted in more detail in the next section, of our proposed Bayesian solution when one employs it in the coarse-graining scenarios discussed along Sec. \ref{Sec.CGSolutionNotPossible}.


Before delving into this convex optimization treatment, it is reasonable for the purposes of this work to first introduce the diamond norm \cite{kitaev97}, represented as $\vert \vert \cdot\vert \vert _{\diamond}$. The diamond norm is an operational tool to measure the distinguishability between two quantum channels, which has efficient SDP formulations \cite{Skrzypczyk23,BS10, WatrousSDPnorms}. Aligned with that, we directly embed our following analysis with its operational SDP version---borrowed from ref. \cite{Skrzypczyk23} and adapted to the notations of the conditional states formalism for characterizing quantum channels. That is, this leads us to employ the following SDP formulation of the diamond norm:
\begin{align}\label{eq:DiamondNormSDP}
    &\text{Given }\rho_{B\vert A} \simeq \Phi_{B \vert A}\\
    & \vert \vert  \Phi_{B \vert A}\vert \vert _{\diamond} := \text{Minimise} \quad \epsilon \notag \\
    \notag\\ 
    & \text{subject to} \quad \rho_{B\vert A} = Z_{AB} - X_{AB};\notag\\
    \notag\\
    & \quad \quad \quad \quad \quad Z_{AB} \geq 0, \quad X_{AB} \geq0;\notag \\ 
    \notag\\
    & \quad \quad \quad \quad \quad \epsilon \mathds{I}_A \geq \Tr_B(Z_{AB} + X_{AB}), \notag
\end{align}
where we will denote SDP auxiliary variables, such as $X_{AB},Z_{AB} \in\mathcal{L}(\mathcal{H}_{AB})$, with the same joint notation adopted in previous sections to refer to their respective Hilbert spaces. Moreover, based on the above formulation of the diamond norm, we define a set $\mathcal{B}^{\epsilon, \Phi}_{\diamond}$---referred to here as the \textit{diamond norm ball} of radius $\epsilon$ around a given channel $ \Phi_{B \vert A}$---consisting of quantum channels $ \mathcal{N}_{B \vert A}$ that are $\epsilon$-close to $ \Phi_{B \vert A}$ as
\begin{align}\label{eq:diamondNormBall}
\mathcal{B}^{\epsilon,\Phi_{B \vert A}}_{\diamond}
:=\;& \{ \mathcal{N}_{B \vert A}: \|\Phi_{B \vert A} - \mathcal{N}_{B \vert A}\|_{\diamond} \leq \epsilon \} \\
=\;&  \left\{ \mathcal{N}_{B \vert A}:
\begin{aligned}
& \rho_{B \vert A} - \sigma_{B \vert A} = Z_{AB} - X_{AB}, \\
& X_{AB} \geq 0,\quad Z_{AB} \geq 0, \\
& \epsilon\,\mathds{I}_A \geq \Tr_B(X_{AB} + Z_{AB}), \\
& \sigma_{B \vert A} \geq 0,\quad \Tr_B(\sigma_{B \vert A}) = \mathds{I}_A
\end{aligned}
\right\}.
\notag
\end{align}

Now, we are in a position to analyze the limitations of our proposed Bayesian solution $\Gamma^{\text{Petz}}_{D\vert C, \rho_A}$ for the (fully-quantum) coarse-graining problem. Since the $\Gamma^{\text{Petz}}_{D|C, \rho_A}$ depends intrinsically on an initial state $\rho_A$ for its construction, it restricts our task to finding an emergent dynamics that makes the diagram in Fig. \ref{fig:CGandCSF} commute only in a state-by-state manner, at least for $\rho_A$ itself. Therefore, in order to remove the $\rho_A$ dependence out of the commutativity relation while maintaining it in the  $\Gamma^{\text{Petz}}_{D|C, \rho_A}$ construction, we start to face this state-dependence limitation through the following crucial inquiry: Is there another quantum dynamics $\Gamma^t_{D\vert C}$, closest in the diamond norm to $\Gamma^{\text{Petz}}_{D\vert C, \rho_A}$, which makes the diagram of Fig.\ref{fig:CGandCSF} commute for any initial state? A direct and conventional SDP formulation of such a question assumes the following form:
\begin{align}\label{eq:AuxSDP2}
     &\text{Given} \quad \Gamma^{\text{Petz}}_{D\vert C, \rho_A},\, \mathcal{U}^t_{B\vert A}, \, \Lambda^{\text{CG}}_{C\vert A}(\Lambda^{\text{CG}}_{D\vert B})\\
    &  \text{Minimise} \quad \epsilon \notag \\
    \notag\\ 
    & \text{subject to}\quad \Gamma^t_{D\vert C} \circ \Lambda_{C\vert A}^{\text{CG}} = \Lambda_{D\vert B}^{\text{CG}} \circ \mathcal{U}^t_{B\vert A} ;\notag\\
    \notag\\
    & \quad \quad \quad \quad \quad \vert \vert  \Gamma^{\text{Petz}}_{D\vert C, \rho_A} - \Gamma^t_{D\vert C}\vert \vert _{\diamond} \leq \epsilon;\notag \\ 
    \notag\\
    & \quad \quad \quad \quad \quad \Gamma^t_{D\vert C} \quad \text{is CPTP}.  \notag
\end{align}
Alternatively, we can rewrite the SDP above entirely within the conditional states formalism. This is achieved by replacing the diamond norm constraint by its SDP characterization---expressed in accordance with the notations of the SDP in \eqref{eq:DiamondNormSDP}---and by representing all quantum channels and the commutativity relation (eq.   \eqref{eq:CommutRelationBP}) through their respective conditional states. The resulting, equivalent SDP reads as  
\begin{tcolorbox}
\subsection*{\textbf{Searching for state-independent solutions close to $\Gamma^{\text{Petz}}_{D|C, \rho_A}$}}
    \vspace{-1.0cm}
\begin{align}\label{eq:SDP2}
     &\text{Given} \quad \rho^{\text{Petz}}_{D\vert C, \rho_A}, \rho_{B\vert A}, \rho_{C\vert A}(\rho_{D\vert B})\\
    &  \text{Minimise} \quad \epsilon \notag \\
    \notag\\ 
    & \text{subject to}\quad\Tr_{C} [(\mathds{I}_A \otimes \rho_{D\vert C})(\rho_{C\vert A}^{T_C} \otimes \mathds{I}_D)] = \notag\\ 
    & \quad \quad \quad \quad \quad\Tr_B[(\mathds{I}_A \otimes \rho_{D\vert B})(\rho_{B\vert A}^{T_B} \otimes \mathds{I}_D)];\notag\\
    \notag\\
    & \quad \quad \quad \quad \quad \rho^{\text{Petz}}_{D\vert C} - \rho_{D\vert C} = Z_{CD} - X_{CD};\notag \\ 
    \notag\\
    & \quad \quad \quad \quad \quad Z_{CD} \geq 0, \quad X_{CD} \geq 0;\notag \\ 
    \notag \\
     & \quad \quad \quad \quad \quad \epsilon \mathds{I}_C \geq \Tr_D(Z_{CD} + X_{CD}); \notag\\
     \notag \\
      & \quad \quad \quad \quad \quad  \rho_{D\vert C} \geq 0, \quad \Tr_D(\rho_{D\vert C}) = \mathds{I}_C \notag,
\end{align}
\end{tcolorbox} \noindent
thereby emphasizing the usefulness of the conditional states formalism in approaching the coarse-graining problem. We claim that, through the above SDP, we can start to understand how distant the proposed state-dependent solution is from another effective solution---if the latter actually exists.

To find an emergent dynamics compatible to a given microscopic dynamics and coarse-grained description, regardless of whether it is close to $\Gamma^{\text{Petz}}_{D\vert C, \rho_A}$, we can reformulate the above problem. It suffices to remove the diamond distance constraints and raise it to a feasibility SDP:
\begin{tcolorbox}
\subsection*{\textbf{Searching for state-independent solutions}}
    \vspace{-1cm}
\begin{align}\label{eq:SDP3}
     &\text{Given} \quad \rho_{B\vert A}, \rho_{C\vert A}(\rho_{D\vert B})\\
    &  \text{Find} \quad \rho_{D\vert C} \notag \\
    \notag\\ 
    & \text{subject to}\quad\Tr_{C} [(\mathds{I}_A \otimes \rho_{D\vert C})(\rho_{C\vert A}^{T_C} \otimes \mathds{I}_D)] = \notag\\ 
    & \quad \quad \quad \quad \quad\Tr_B[(\mathds{I}_A \otimes \rho_{D\vert B})(\rho_{B\vert A}^{T_B} \otimes \mathds{I}_D)];\notag\\
    \notag\\
    & \quad \quad \quad \quad \quad  \rho_{D\vert C} \geq 0, \quad \Tr_D(\rho_{D\vert C}) = \mathds{I}_C \notag.
\end{align}
\end{tcolorbox}
The SDPs presented above concern the existence of an emergent dynamics, related or not to our proposed solution $\Gamma^{\text{Petz}}_{D|C, \rho_A}$. The questions raised by the SDPs \eqref{eq:SDP2} and \eqref{eq:SDP3}---and the answers thereon---provide an initial direction by which we start to understand: (i) where our proposed solution is located in relation to a definitive solution, that is, an emergent dynamics which commutes the diagram for any initial state, and (ii) in what scenarios there exists such definitive solution, circumventing in that way the state-by-state limitation of our ones. From now on, we put our proposed Bayesian solution aside and concentrate our efforts to investigate additional aspects of the coarse-graining problem in this Bayesian inference picture here sketched.

To do so, we need first to introduce the concept of \textit{robustness}. Historically, the \textit{robustness measure}, proposed in ref. \cite{GR1999}, is a special case of an entanglement magnitude. Roughly speaking, it quantifies how much noise can be added to an entangled state in order to erase all entanglement contained in that state. Despite such a concept being first concerned about the quantum correlation properties of states, there are several other applications and variants of such a quantity proposed in the literature \cite{BRM21,JWW21,FWTB2020,YLLL22,HN03}. Hence, inspired by such studies and in line with the concept of the (non-)compatibility between the quantum channels employed in a coarse-graining scenario (Definition \ref{def:CompatibilityCG}), we propose a new robustness measure that quantifies how much noise can be added to a unitary evolution---compatible with a given coarse-graining map---while still remaining compatible with the corresponding coarse-grained description. In light of the above,  

\begin{defi}[CG-compatibility robustness]
 Let $\mathcal{U}^t_{B|A}$ be a unitary dynamics compatible with a given coarse-graining map $\Lambda^{\text{CG}}_{C|A}(\Lambda^{\text{CG}}_{D|B})$, that is, there exists a CPTP emergent dynamics $\Gamma^t_{D|C}$ such that $\Gamma^t_{D|C} \circ \Lambda^{\text{CG}}_{C|A} = \Lambda^{\text{CG}}_{D|B}\circ \mathcal{U}^t_{B|A}$. We define the \textit{CG-compatibility robustness} of $\mathcal{U}^t_{B|A}$ as 
\begin{equation}\label{eq:DefiRobustness}
r_c(\mathcal{U}^t_{B\vert A}) : = 1 - \Tilde{r}_c(\mathcal{U}_{B\vert A}),
\end{equation}
where the quantity $\Tilde{r}_c(\mathcal{U}_{B|A})$ is given explicitly by
\begin{align}\label{eq:DefAuxRobustness}
    & \Tilde{r}_c(\mathcal{U}^t_{B\vert A}) := \\ 
    &\text{min}\{\gamma \enspace \vert \enspace \gamma \mathcal{U}^t_{B\vert A} +  (1 - \gamma)\Phi^t_{B\vert A} =: \mathcal{N}^t_{B\vert A}\in \mathcal{C}(A,B),\notag\\
    & \Phi^t_{B\vert A} \in \mathcal{C}(A,B), \exists \Omega^t_{D\vert C} \in \mathcal{C}(C,D): \notag\\ 
    & \Omega^t_{D\vert C} \circ \Lambda^{\text{CG}}_{C\vert A} = \Lambda^{\text{CG}}_{D\vert B}\circ \mathcal{N}^t_{B\vert A}\} \notag.
\end{align}
\end{defi}

In the definition above, the notation $\mathcal{C}(A,B)$ represents the ser of all CPTP maps from $\mathcal{L}(H_A)$ to $\mathcal{L}(H_B)$. 

Intuitively, $r_c(\mathcal{U}^t_{B\vert A})$ quantifies how \textit{robust} the microscopic dynamics $\mathcal{U}^t_{B\vert A}$ is---for a given coarse-grained description---in allowing the existence of another emergent dynamics $\Omega^t_{D\vert C}$, even in the presence of any microscopic noise. In other words, the value assumed by $r_c(\mathcal{U}^t_{B|A})$ refers to the maximal addition of noise (dictated by $\Phi^t_{B|A}$) to a compatible unitary dynamics $\mathcal{U}^t_{B|A}$, with a fixed coarse-grained description $\Lambda^{\text{CG}}_{C|A}(\Lambda^{\text{CG}}_{D|B})$, before spoiling the compatibility. In this sense, the smaller $\Tilde{r}_c(\mathcal{U}^t_{B|A})$ is, the more $\mathcal{U}_{B|A}^t$ can be perturbed by the noise $\Phi^t_{B|A}$ while remaining compatible with the coarse-grained description $\Lambda^{\text{CG}}_{C|A}(\Lambda^{\text{CG}}_{D|B})$, and thus the greater its $CG$--compatibility robustness $r_c(\mathcal{U}_{B|A})$ is.
 
 We can look at the respective conditional state associated with each aforementioned quantum channel in order to propose the following SDP characterization to the auxiliary quantity $\Tilde{r}_c(\mathcal{U}_{B|A})$:
\begin{align}\label{eq:SDP4}
     &\text{Given} \quad \rho_{B\vert A}, \phi_{B\vert A}, \rho_{C\vert A}(\rho_{D\vert B}) \\
    & \Tilde{r}_c(\mathcal{U}^t_{B\vert A}) =  \text{Minimise} \quad \gamma \notag \\
    \notag\\ 
    & \text{Subject to}\quad \sigma_{B\vert A} = \gamma \rho_{B\vert A} + (1 -\gamma)\phi_{B\vert A}; \notag\\
    \notag \\
    &\quad \quad \quad \quad \quad\Tr_{C} [(\mathds{I}_A \otimes \omega_{D\vert C})(\rho_{C\vert A}^{T_C} \otimes \mathds{I}_D)] = \notag\\ 
    & \quad \quad \quad \quad \quad\Tr_B[(\mathds{I}_A \otimes \rho_{D\vert B})(\sigma_{B\vert A}^{T_B} \otimes \mathds{I}_D)];
    \notag\\
    \notag\\
    & \quad \quad \quad \quad \quad \sigma_{B\vert A} \geq 0, \quad \Tr_B(\sigma_{B\vert A}) = \mathds{I}_A ;
    \notag \\ 
    \notag\\
    & \quad \quad \quad \quad \quad \omega_{D\vert C} \geq 0, \quad \Tr_D(\omega_{D\vert C}) = \mathds{I}_C, \notag 
    \end{align}
and, as consequence, efficiently compute $r_c(\mathcal{U}_{B|A}^t)$ from it.

Note that, since we are interested in cases where the noisy, resultant microscopic dynamics $\mathcal{N}^t_{B\vert A}$ is CPTP, the very same constraints in the SDP which guarantee that its Choi-isomorphic state $\sigma_{B\vert A}$ is a valid acausal conditional state imply naturally that $r_c(\mathcal{U}^t_{B\vert A}) \in[0,1]$. Moreover, when $r_{c}(\mathcal{U}^t_{B\vert A})=0$, we say that the microscopic dynamics $\mathcal{U}^t_{B\vert A}$ is not robust---relative to a given compatible coarse-grained description---against any noise $\Phi^t_{B\vert A}$. 

Inspired by the discussions about the CG-compatibility robustness above, we take now a subtly different path. There is one last question that we want to address in this contribution: given a coarse-grained description, dictated by a coarse-graining map $\Lambda^{\text{CG}}_{C|A}(\Lambda^{\text{CG}}_{D|B})$, and an initial, non-compatible unitary dynamics $\mathcal{U}^t_{B|A}$, does there exist any microscopic dynamics $\Psi^t_{B|A} \in \mathcal{C}(A,B)$ such that, for some $\gamma \in [0,1]$, the noised, resultant dynamics
\begin{equation}\label{eq:ResultantMicroDynamics}
    \mathcal{J}^t_{B|A} := \gamma \mathcal{U}^t_{B|A} + (1-\gamma) \Psi^t_{B|A}
\end{equation}
is compatible with an emergent dynamics $\Theta^t_{D|C} \in \mathcal{C}(D,C)$, satisfying $\Theta^t_{D|C} \circ \Lambda^{\text{CG}}_{C|A} = \Lambda^{\text{CG}}_{D|B} \circ \mathcal{J}^t_{B|A}$? Put another way, is it possible to render a non-compatible unitary dynamics compatible? Such a question, when one lets $\rho_{B\vert A}$ and  $\rho_{C\vert A}(\rho_{D\vert B})$ be the conditional states associated with $\mathcal{U}^t_{B\vert A}$ and $\Lambda^{\text{CG}}_{C\vert A}(\Lambda^{\text{CG}}_{D\vert B})$, respectively, could be reformulated as a feasibility problem, as shown in the SDP below.
\begin{tcolorbox}
\subsection*{\textbf{Robustness inspired}}
    \vspace{-1cm}
\begin{align}\label{eq:SDP5}
     &\text{Given} \quad \rho_{B\vert A},\rho_{C\vert A}(\rho_{D\vert B}), \gamma \\
    & \text{Find} \quad \psi_{B\vert A} \notag \\
    \notag\\ 
    & \text{subject to}\quad \sigma_{B\vert A} = \gamma \rho_{B\vert A} + (1- \gamma)\psi_{B\vert A}; \notag\\
    \notag \\
    &\quad \quad \quad \quad \quad\Tr_{C} [(\mathds{I}_A \otimes \theta_{D\vert C})(\rho_{C\vert A}^{T_C} \otimes \mathds{I}_D)] = \notag\\ 
    & \quad \quad \quad \quad \quad\Tr_B[(\mathds{I}_A \otimes \rho_{D\vert B})(\sigma_{B\vert A}^{T_B} \otimes \mathds{I}_D)];
    \notag\\
    \notag\\
    & \quad \quad \quad \quad \quad \psi_{B\vert A} \geq 0, \quad \Tr_B(\psi_{B\vert A}) = \mathds{I}_A ;
    \notag \\ 
    \notag\\
    & \quad \quad \quad \quad \quad \theta_{D\vert C} \geq 0, \quad \Tr_D(\theta_{D\vert C}) = \mathds{I}_C. \notag 
\end{align}
\end{tcolorbox}
In other words, the SDP above provides a yes/no answer to the following question: if the initial microscopic dynamics of a quantum system is non-compatible with a given coarse-grained description, does there exist any CPTP map such that their convex combination becomes compatible?

This question seems to be deeply in the core of the coarse-graining problem, because even if we are unable to describe the underlying microscopic dynamics of a quantum system through a less complex, effective dynamics---promoted by a given coarse-grained description---by solving the optimization problem stated in \eqref{eq:SDP5}, one is able to check whether there exists some quantum dynamics whose combination with the initial ones makes it possible to operationally obtain such an effective description.

To numerically address the SDPs above, we implemented these optimization programs using the Python language and the CVXPY library \cite{diamond2016cvxpy}, employing MOSEK \cite{mosek} as the solver. 
With the aim of encouraging further research in this direction and promoting reproducibility of our numerical findings, the data and implementation codes used here to generate our numerical results are available on GitHub.\footnote{\url{https://github.com/thalesbsfr/CG-SDP-numerical-implementations}} 

\begin{figure*}[ht]
    \centering
    \subfigure[]{\label{subfig:ColorMapA}\includegraphics[width=0.24\textwidth]{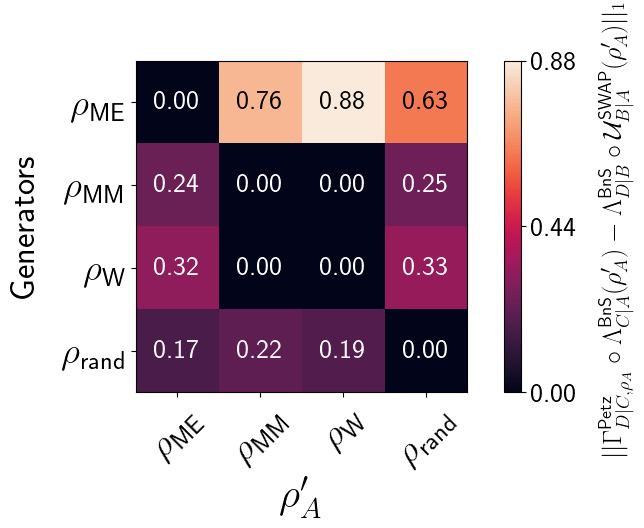}} 
    \subfigure[]{\label{subfig:ColorMapB}\includegraphics[width=0.24\textwidth]{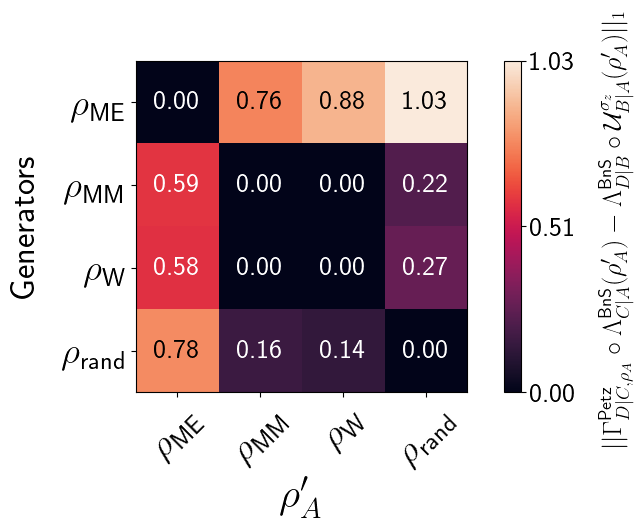}} 
    \subfigure[]{\label{subfig:ColorMapC}\includegraphics[width=0.24\textwidth]{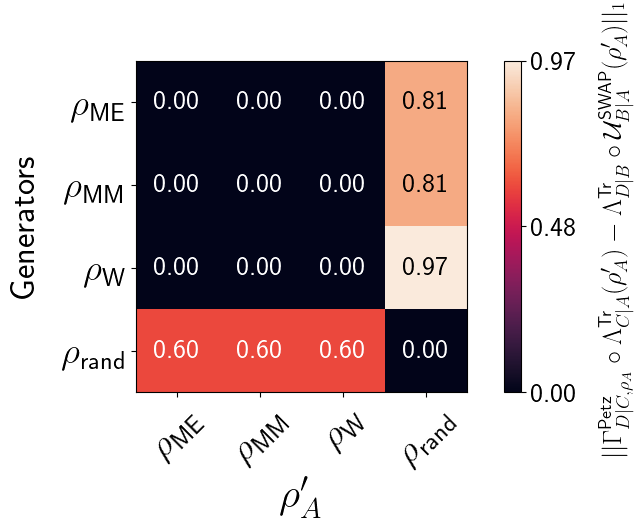}}
    \subfigure[]{\label{subfig:ColorMapD}\includegraphics[width=0.24\textwidth]{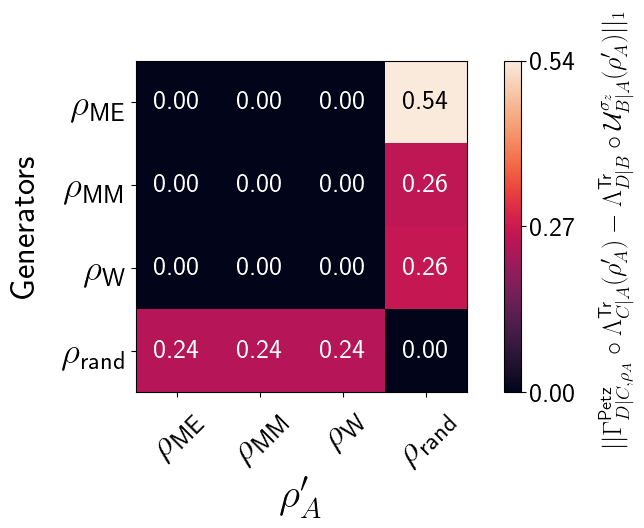}}
    \caption{Performance of the solution $\Gamma^{\text{Petz}}_{D|C, \rho_A}$ in four different coarse-graining scenarios for four different generators.
    The states on the rows on the left represents the generators, while those on the bottom of the columns represents the $\rho'_{A}$ states used in the norm of the commutativity relation in \eqref{eq:traceNormCommuting}. The performance of the solution $\Gamma^{\text{Petz}}_{D|C, \rho_A}$ in commuting the diagram, quantified by the norm in eq.~\eqref{eq:traceNormCommuting}, is represented through a color-scale mapping. Image (a) shows the results for our scenario 1, while images (b), (c) and (d), for the scenarios 2, 3 and 4, respectively. One can see that, in general, even in the worst scenario (based on the mean value of all results) in this picture for the commutativity using the proposed $\Gamma^{\text{Petz}}_{D|C, \rho_A}$ solution, scenario 2, image (b), there are states different from the generator in which the diagram commutes.}
    \label{fig:ColorMaps}
\end{figure*}

\section{Numerical results} \label{sec:NumericalResults}

\subsection{Benchmarking diagram commutativity} 
\label{subsec:DiagramCommutativity}

As a starting point, it was shown along this contribution that our proposed solution $\Gamma^{\text{Petz}}_{D\vert C, \rho_A}$ (for the fully-quantum coarse-graining scenario) has an intrinsic dependence on an initial state $\rho_A$. Moreover, as also discussed, for a given microscopic dynamics and coarse-grained description, it satisfies the commutativity relation at least for the same $\rho_A$ adopted in its own construction---its \textit{generator}. Nevertheless, fixing an initial generator $\rho_A$ in the $\Gamma^{\text{Petz}}_{D|C, \rho_A}$ construction one might pragmatically ask: how robust is the $\Gamma^{\text{Petz}}_{D|C, \rho_A}$ solution in making the diagram of Fig. \ref{fig:CGandCSF} commute for other states in the region A, not necessarily equal to its generator? In other words, letting $\rho'_A \in \mathcal{D}(\mathcal{H}_A)$ be any state of the initial region $A$, how close, in the trace norm, is 
\begin{equation} \label{eq:traceNormCommuting}
    \vert \vert \Gamma^{\text{Petz}}_{D\vert C, \rho_A} \circ \Lambda^{\text{CG}}_{C\vert A}(\rho'_A) - \Lambda^{\text{CG}}_{D\vert B} \circ \mathcal{U}^t_{B\vert A}(\rho'_A)\vert \vert _1  
\end{equation}
to zero, for a given configuration of unitary dynamics $\mathcal{U}^t_{B\vert A}$ and coarse-graining map $\Lambda^{\text{CG}}_{C\vert A}(\Lambda^{\text{CG}}_{D\vert B})$? 

Initially, we study four generators: (1) $\rho_{\text{ME}}:=  \ket{\Phi^+}\bra{\Phi^+} \in \mathcal{D} (\mathds{C}^2 \otimes \mathds{C}^2)$, representing the normalized maximally entangled state; (2) $\rho_{\text{MM}}: = \frac{1}{4}\mathds{I}_{4 \times 4}$, the maximally mixed state; (3) a state $\rho_{\text{rand}}$ generated randomly using a uniform random sampling method of the NumPy library \cite{NumPy2020}; and (4) the two-qubit Werner state
\begin{equation} \label{eq:rhoWerner}
    \rho_{\text{W}} := \lambda \ket{\Psi^-}\bra{\Psi^-} + \frac{1-\lambda}{4} \mathds{I}_{4\times 4},
\end{equation}
for $\ket{\Psi^-}:= \frac{1}{\sqrt{2}}(\ket{01} - \ket{10})$ and $\lambda = 1/3$. Figures \ref{subfig:ColorMapA}, \ref{subfig:ColorMapB}, \ref{subfig:ColorMapC} and \ref{subfig:ColorMapD} contain our initial results. Because scenarios 2 and 4 are time dependent, we decided to show the results initially for $t=\SI{1.0}{\second}$---other time steps will be investigated later in this work. Moreover, in order to simplify the calculations, we have set $J=\SI{1.0}{\second^{-1}}$ in the unitary $U_{\sigma_z}$ (eq.   \eqref{eq:U_z} of the $z$-interaction channel throughout the following numerical evaluations).
\begin{figure*}[t!]
    \centering
    \subfigure[]{\label{subfig:HistogramsA}\includegraphics[width=0.24\textwidth]{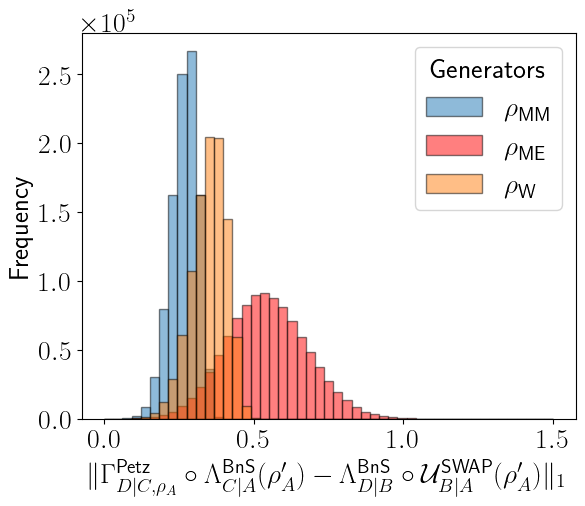}} 
    \subfigure[]{\label{subfig:HistogramsB}\includegraphics[width=0.24\textwidth]{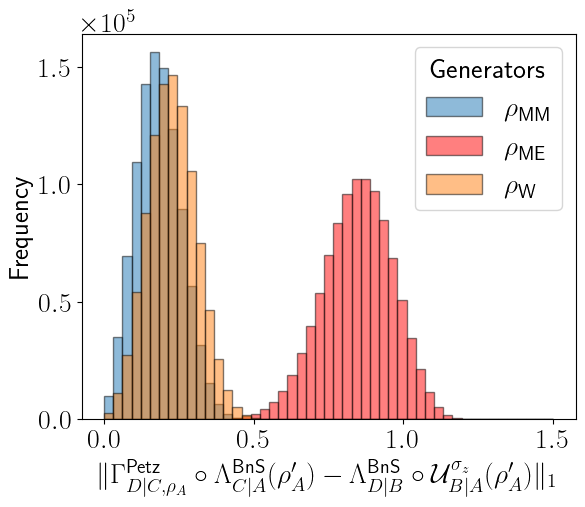}} 
    \subfigure[]{\label{subfig:HistogramsC}\includegraphics[width=0.24\textwidth]{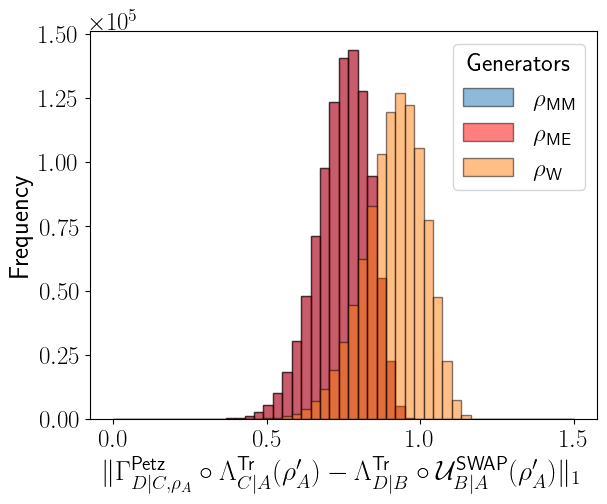}}
    \subfigure[]{\label{subfig:HistogramsD}\includegraphics[width=0.24\textwidth]{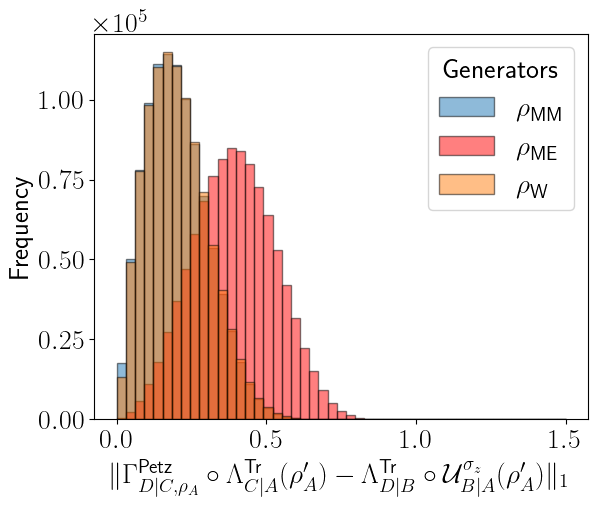}}
    \caption{Histograms showing the commutativity performance of $\Gamma^{\text{Petz}}_{D|C, \rho_A}$ (eq.   \eqref{eq:traceNormCommuting}) for a sample of $10^6$ random states in the four scenarios. We can see that, for scenarios 1 and 2, shown in images (a) and (b), respectively, the generator in which the $\Gamma^{\text{Petz}}_{D|C, \rho_A}$ achieves the worst results was $\rho_{\text{ME}}$, while regards the others two, their respective histograms presents a similar behavior, with $\rho_{\text{MM}}$ producing better results in general. In the last histograms---shown in images (c) and (d), corresponding to scenarios 3 and 4, respectively---a slightly different behavior emerges in comparison to the previously discussed histograms.  In scenario 3, since the $\Gamma_{D \vert C , \rho_A}^{\text{Petz}}$ generated by $\rho_{\text{ME}}$ and by $\rho_{\text{MM}}$ are identical, both $\rho_{\text{ME}}$ and $\rho_{\text{MM}}$ yield similar results, whereas $\rho_{\text{W}}$ performs worst in general. On the other hand, in scenario 4, we can see that $\rho_{\text{MM}}$ and $\rho_{\text{W}}$ achieves almost equal, best results and the $\rho_{\text{ME}}$ the worst one. Overall, one can notice that, in scenario 4, all distributions reach results closest to the zero in comparison to the others, while the scenario 3, on average, the most distant. Also, on average, one can see that, the generator in which produces the worst results was $\rho_{\text{ME}}$.}
\label{fig:Histograms}
\end{figure*}

From the results shown in Fig. \ref{fig:ColorMaps}, we observe that, in all four scenarios, deploying the solution $\Gamma^{\text{Petz}}_{D|C, \rho_A}$ as the emergent dynamics makes it possible to satisfy the commutativity relation for states different from its generators, as is the case, in all scenarios, when $\rho_{\text{W}}$ is the generator and $\rho_{\text{MM}}$ goes into the commutativity relation (or vice versa). In particular, for the scenario 4, Fig. \ref{subfig:ColorMapD}, it also suggests that, for a given generator, the $\Gamma^{\text{Petz}}_{D\vert C, \rho_A}$ makes the diagram commute not only for the generator itself, but additionally, there exists a set of states which satisfy the desired commutativity relation, achieving zero in \eqref{eq:traceNormCommuting}. This fact, concerning a pragmatic point of view of the coarse-graining problem, emphasizes numerically that, even in some specific scenarios where there does not exist an emergent dynamics that commutes the diagram for all initial states, i.e., which satisfies eq.   \eqref{eq:CommutationRelation}, or its respective causal conditional state in \eqref{eq:CommutRelationBP}, our proposed Bayesian solution $\Gamma^{\text{Petz}}_{D\vert C, \rho_A}$ can be employed for this task at least for a set of initial states. 

For sustaining this last numerical statement, we also engage with several evaluations of the expression \eqref{eq:traceNormCommuting}  through a sample of $10^6$ different random states as $\rho'_A$.
Since $\rho_{\text{rand}}$, as shown in Fig. \ref{fig:ColorMaps}, is the generator for which $\Gamma^{\text{Petz}}_{D|C, \rho_A}$ yields the worst performance in commuting the diagram for states different from itself, we restrict our subsequent calculations of the expression \eqref{eq:traceNormCommuting} to the other three generators. The results are summarized in the histograms of Fig.~\ref{fig:Histograms}. 

Notably, from the above histograms one can see that there are some scenarios, such as scenario 2 and 4 shown in Fig.~\ref{subfig:HistogramsB} and Fig.~\ref{subfig:HistogramsD}, respectively, in which our proposed solution $\Gamma^{\text{Petz}}_{D\vert C, \rho_A}$, with $\rho_{\text{MM}}$ and $\rho_{\text{W}}$ as generators, achieves results very close to zero for a considerable number of random states in the first bins of such histograms. Highlighting, in this way, our initial numerical statement that, from a pragmatic point of view, our solution can be employed to commute the diagram in such scenarios, not only for the generator itself, but also for a set of distinct states of the initial region. A more detailed view of the best results in the bin closest to zero in the histograms of scenarios 2 and 4 (Fig.~\ref{subfig:HistogramsB} and Fig.~\ref{subfig:HistogramsD}, respectively) is shown in Fig.~\ref{fig:Boxplots}.
\begin{figure}[ht]
    \centering
    \subfigure[]{\label{subfig:BP_setup2}\includegraphics[width=0.35\textwidth]{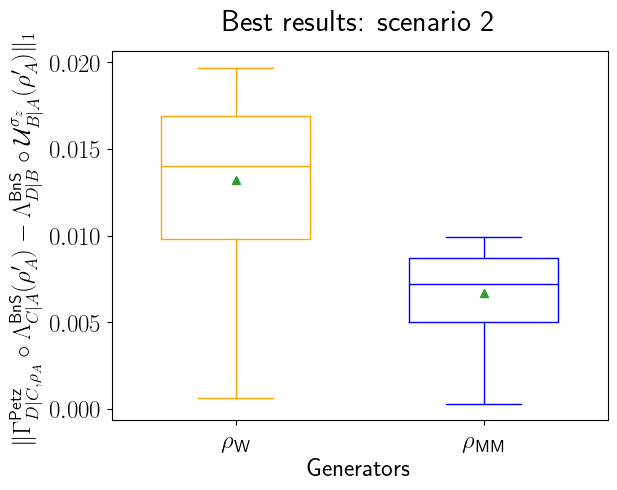}} 
    \subfigure[]{\label{subfig:BP_setup4}\includegraphics[width=0.35\textwidth]{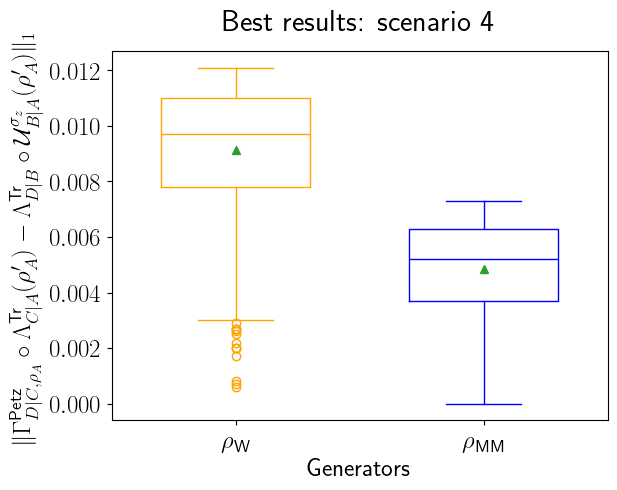}} 
    \caption{Boxplots of the best $10^3$ results obtained in scenarios 2 and 4. Each box represents the interquartile range, the central line marks the median, the green triangles indicates the mean values, and outliers as individuals circles. One can notice that, in general, the $\rho_{\text{MM}}$ as generator produces better results in comparison to the $\rho_{\text{W}}$. Moreover, for scenario 4, blue boxplot in image (b), it achieves the minimum result (zero) for $28$ different random states within these $10^3$ analyzed.}
\label{fig:Boxplots}
\end{figure}

\begin{figure*}[ht]
    \centering
    \subfigure[]{\label{subfig:Time_setup2}\includegraphics[width=0.45\textwidth]{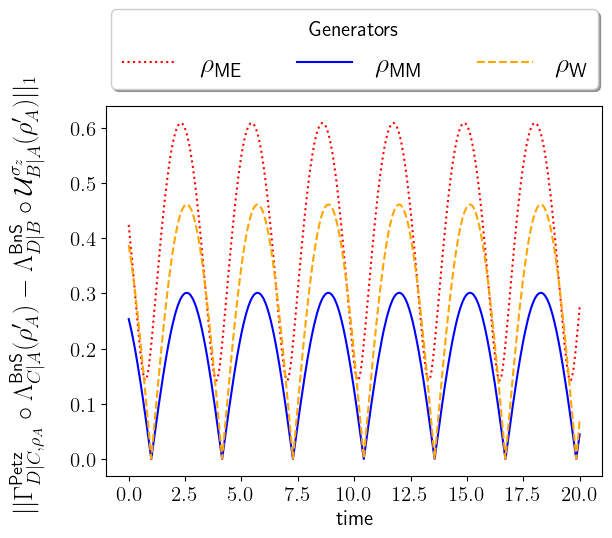}} 
    \subfigure[]{\label{subfig:Time_setup4}\includegraphics[width=0.45\textwidth]{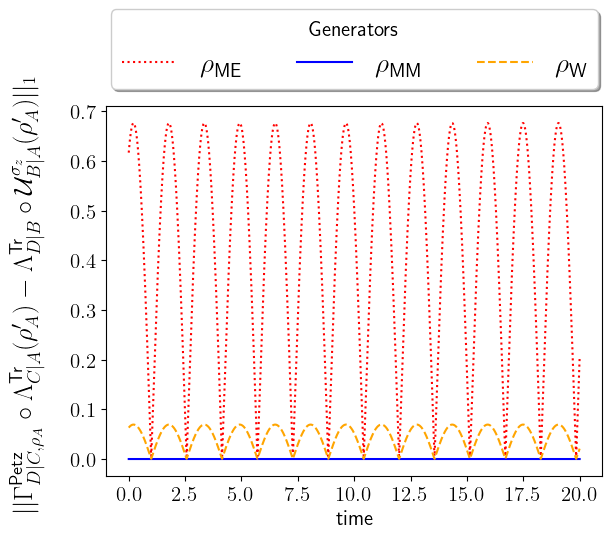}} 
    \caption{Time varying effects on the commutativity performance of $\Gamma^{\text{Petz}}_{D|C, \rho_A}$. Image (a) shows the results for scenario 2 and image (b) for the scenario 4. We can see, for scenario 2, changing the time $t$ of the microscopic dynamics $\mathcal{U}^{\sigma_z}_{B\vert A}$ can improve the results obtained for $\rho_{\text{ME}}$ generator; however, it neither approaches nor reaches zero as equal the results of the other generators. In general, it sustains the same behavior shown in Fig.~\ref{subfig:HistogramsB}, with the $\rho_{\text{MM}}$ achieving the best results, follows by the $\rho_{\text{W}}$  and, despite in some time-steps its results has improved, $\rho_{\text{ME}}$ remains the worst generator for this scenario. Regarding the scenario 4, the differences between the performance of each generator become highly evident through this time varying evaluation. $\rho_{\text{ME}}$, which achieves the worst results in Fig. \ref{subfig:HistogramsD}, here, for some timestep, it achieves results close to zero, in contrast to the results observed on image (a). Moreover, notably, the results for $\rho_{\text{W}}$ for all timestep, even in some specific ones it gets a little worse, they revolved around values very close to zero. For the $\rho_{\text{MM}}$ as generator, we can see a very peculiar behavior in comparison to the others. The selected state, which commutes the diagram of scenario 4 for $t = \SI{1.0}{\second}$ in the microscopic dynamics $\mathcal{U}^{\sigma}_{B\vert A}$, remains commuting the diagram, i.e., achieving zero in the expression \eqref{eq:traceNormCommuting} for all the timesteps in this analysis. Shining light to the usefulness of the $\rho_{\text{MM}}$ as $\Gamma^{\text{Petz}}_{D\vert C, \rho_A}$ generator in this coarse-graining scenario.}
\label{fig:Timeplots}
\end{figure*}
As mentioned previously, the results regarding scenarios 2 and 4 of Figures \ref{fig:ColorMaps}, \ref{fig:Histograms} and \ref{fig:Boxplots} were generated by setting $t = \SI{1.0}{\second}$ in the unitary dynamics $\mathcal{U}^{\sigma_z}_{B\vert A}$. Since such dynamics, as discussed around eq.   \eqref{eq:InteractionChannel}, can create or annihilate correlations between the two parts of the initial system depending on the value of the parameter $t$, it is reasonable to also analyze numerically what this fact could imply to diagram commutativity involving our interested scenarios. Taking the random state which generates the best result in Fig. \ref{subfig:HistogramsB} and \ref{subfig:HistogramsD} for each generator, one can see, in Fig. \ref{fig:Timeplots}, how varying the time parameter $t$ of the microscopic dynamics directly affects the commutativity of the diagram of scenarios 2 and 4.

From the results shown in Fig. \ref{fig:Timeplots}, one can see that, for scenario 2 (Fig. \ref{subfig:Time_setup2}), the time dependence of the microscopic dynamics $\mathcal{U}^{\sigma_z}_{B\vert A}$ does really interfere in the commutativity performance of all generators. In particular, it is worth mentioning that, in comparison to the results depicted in Fig. \ref{subfig:HistogramsB} for $t = \SI{1.0}{\second}$, there are some values of $t$ where the results produced with all generators were improved. Furthermore, regarding the results of scenario 4 (Fig. \ref{subfig:Time_setup4}), we would like to highlight that, performing the same analysis with the other $28$ random states in which the diagram of scenario 4 commutes when one uses the $\rho_{\text{MM}}$ as the $\Gamma^{\text{Petz}}_{D\vert C, \rho_A}$ generator and sets $t = \SI{1.0}{\second}$ in $\mathcal{U}^{\sigma_z}_{B\vert A}$, it also presents the same behavior and result observed in the blue-continuous line of Fig. \ref{subfig:Time_setup4}. Reinforcing, in that way, the usefulness of the $\Gamma^{\text{Petz}}_{D\vert C, \rho_A}$ generated by $\rho_{\text{MM}}$ in commuting the diagram not only for the generator itself, but also for a set of random states, regardless of the value of $t$ in the microscopic dynamics. We also notice that all those $28$ states satisfy the condition \eqref{t23=t13=0}, corroborating our analytical analysis. 

To conclude this numerical investigation of the diagram commutativity for our proposed $\Gamma^{\text{Petz}}_{D\vert C,\rho_A}$, we engage in a last but not least inquiry within this paradigm. The $\rho_{\text{W}}$ generator here employed, in the way it was defined in eq.   \eqref{eq:rhoWerner}, is a separable state for $\lambda \in [-1/3,1/3]$ and entangled for $\lambda \in(1/3,1]$ \cite{Werner1989}. Given that, in its very definition, we have stated $ \lambda = 1/3$ and conducted all of our above numerical evaluation with it being a separable state. One might ask what changing in the value of the parameter $\lambda$ could imply for the performance of $\rho_{\text{W}}$ as a generator. Taking each random state which produces the best result in each of the four scenarios\footnote{As a remark, for generated this analysis with emphasis on the effect of changing the value of the parameter $\lambda$ in the $\rho_{\text{W}}$ generator, we consider $t = \SI{1.0}{\second}$ on the microscopic dynamics $\mathcal{U}^{\sigma_z}_{B\vert A}$  of scenarios 2 and 4.} depicted in Fig. \ref{fig:Histograms} with $\rho_{\text{W}}$ as generator, an enlightenment upon such a question can be seen in Fig. \ref{fig:werner_graph}. 

\begin{figure}[ht]
    \centering
    \includegraphics[width=0.85\linewidth]{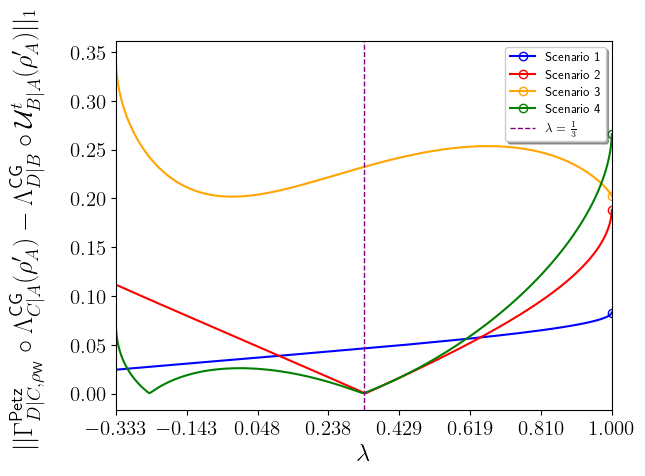}
    \caption{Performance of $\Gamma^{\text{Petz}}_{D|C, \rho_{\text{W}}}$ as a function of the parameter $\lambda$ in $\rho_{\text{W}}$. For each scenario, the random state $\rho'_A$ that goes into \eqref{eq:traceNormCommuting} was the one that achieves the best results in Fig. \ref{fig:Histograms} histograms of $\rho_{\text{W}}$ generator for $\lambda = 1/3$.  One can see that, with this vertical purple dashed line indicating where $\lambda =1/3$, how changing this parameter affects the commutativity performance of $\rho_{\text{W}}$ in all four scenarios. In particular, regarding the results of scenario 2 and scenario 4, one can see, notably, a inflection point and a worsening in the results displayed in their respective line graphs, right at $\lambda = 1/3$. Since for $\lambda > 1/3$ the $\rho_{\text{W}}$ becomes entangled and the worst result for these scenarios occurred when $\lambda \approx 1$\footnote{Given that, for $\lambda=1$, $\Lambda_{\text{BnS}}(\rho_{\text{W}})$ is not invertible and, because of this, we cannot construct the $\Gamma^{\text{Petz}}_{D\vert C,\rho_A}$, the open circles 
    highlight that we only consider $\lambda$ values close to $1$ at the end of $x-$axis.} ($\rho_{\text{W}}$ approximately maximally entangled), drawing a parallel with the results of the maximally entangled generator $\rho_{\text{ME}}$ shown in Fig. \ref{subfig:HistogramsB} and \ref{subfig:HistogramsD}, this might be an evidence that: the more entangled the $\Gamma^{\text{Petz}}_{D\vert C, \rho_A}$ generator is, the worse is the commutativity performance of $\Gamma^{\text{Petz}}_{D\vert C, \rho_A}$ in such scenarios.} 
    \label{fig:werner_graph}
\end{figure}

\subsection{SDP implementations} \label{SubSec:NumericalSDPs}
 After establishing the performance evaluation of our proposed Bayesian solution, we now turn our attention to some additional numerical inquiries raised by the semidefinite programs introduced throughout Sec. \ref{Sec.SDPAnalysis}. Although our SDPs were designed to work in full generality, to validate them against concrete benchmarks. We focus on the analysis of bipartite systems composed of two-qubits. In this way, we can determine interesting properties without having to care about the quirks of multipartite, high-dimensional quantum scenarios.

It was shown by the results displayed on Fig. \ref{fig:Histograms} and along the discussions around Fig. \ref{fig:Boxplots} that, for a given coarse-graining scenario, the choice of the $\Gamma^{\text{Petz}}_{D\vert C, \rho_A}$ generators interferes directly in its commutativity performance. Besides this, we can claim from the aforementioned result that, for scenario 4, within a sample of $10^6$ random states, the $\Gamma^{\text{Petz}}_{D\vert C, \rho_A}$ generated by $\rho_{\text{MM}}$ can be employed as a candidate of emergent dynamics for only a set of $28$ distinct states plus the generator itself. As said before, all of them satisfy the condition in eq. \eqref{t23=t13=0} In order to circumvent such limitations while pursuing a more general treatment of commutativity, we numerically solve the proposed SDP \eqref{eq:SDP2} for the four scenarios considered and the three generators depicted in Fig. \ref{fig:Histograms}. Among them, the only scenario in which it was possible to find an optimal emergent dynamics, $\Gamma^{\text{opt}}_{D\vert C}$, that commutates the diagram for all initial states was in scenario 1. We report in Table \ref{tab:1}, for each generator, the optimal value achieved of $\epsilon$ which minimizes such an optimization program.
\begin{table}[H]
\centering
\begin{tabular}{| c| c| }
Generator & $\vert \vert  \Gamma^{\text{Petz}}_{D\vert C, \rho_A} - \Gamma_{D\vert C}^{\text{opt}} \vert \vert _{\diamond}$ \\ 
\hline
 $\rho_{\text{ME}}$ & $1.66$ \\
 $\rho_{\text{MM}}$ & $0.42$  \\
 $\rho_{\text{W}}$ & $0.55$ \\
\end{tabular}
\caption{Optimal values achieved by solving SDP \eqref{eq:SDP2} for the three generators. The scenario 1 was the only scenario in which such a solution was feasible.}
\label{tab:1}
\end{table}

Additionally, in agreement with the above results, by solving numerically the feasibility SDP described in eq. \eqref{eq:SDP3} for the four scenarios here approached, the only one for which the program was solvable was setup 1; with the optimal (feasible) solution given explicitly by
\begin{equation} \label{eq:ChoiId}
\rho^{\text{opt}}_{D\vert C} = 
\begin{pmatrix}
    1 & 0 & 0 & 1 \\
    0& 0& 0& 0 \\
    0& 0& 0& 0 \\
    1& 0& 0 &1 
\end{pmatrix}.
\end{equation}
Since this solution is the Choi-isomorphic operator to the identity channel, we can conclude that, among all the scenarios evaluated, the scenario in which the microscopic dynamics is compatible with the coarse-grained description is the one composed of the blurred and saturated detector and the SWAP channel. With the emergent, macroscopic dynamics given solely by the identity channel, this fact, in agreement with the operational construction shown along Sec. \ref{SubSec.FullyQuantumBnSSWAP}, shows the consistency of our proposed SDP to efficiently finding such a general solution---when it exists---for the coarse-graining problem in the quantum Bayesian formalism here employed. Furthermore, tracing a parallel with the results displayed in Sec. \ref{subsec:DiagramCommutativity}, we would like to emphasize that, even in the scenarios where there is no general emergent dynamics---by the fact that the microscopic dynamics is not compatible with the coarse-grained description, such as scenarios 2, 3 and 4---our proposed Bayesian solution $\Gamma^{\text{Petz}}_{D\vert C, \rho_A}$, with the right choice of its generator, can be employed pragmatically to solve the problem---at least for the generator itself and a limited number of initial states. 

We now showcase the robustness aspect of such microscopic (compatible) dynamics. More specifically, given that from the above discussions we know that the SWAP channel is compatible with the coarse-grained description represented by the blurred and saturated detector map, we restrict our attention to the discussions around the proposed robustness measure and its SDP characterization displayed in eq. \eqref{eq:SDP4}. Essentially, since the microscopic dynamics $\mathcal{U}^{\sigma_z}_{B|A}$ is not compatible with the $\Lambda_{\text{BnS}}$ map, we initially take it to be the added noise dynamics. By numerically solving the SDP enunciated in eq.  \eqref{eq:SDP4} with its respective causal conditional states, we obtain, for the $CG$-compatibility robustness of the SWAP channel with respect to $\Lambda_{\text{BnS}}$ and this noise, that $r_{c}(\mathcal{U}^{\text{swap}}_{B|A}) \approx 0$. In other words, up to numerical imprecision, the compatibility of $\mathcal{U}^{\text{swap}}_{B|A}$ with the coarse-grained description $\Lambda_{\text{BnS}}$ is not robust against the noise generated by $\mathcal{U}^{\sigma_z}_{B|A}$.  

Moreover, conducting the same analysis except for changing the noise dynamics by: (1) a random unitary evolution $\mathcal{U}^{rand (1)}_{B\vert A} (\cdot): = U^{(1)} (\cdot) U^{(1) \dagger} $ where $U^{(1)}$ is some random element\footnote{The random unitary matrices used in this analysis were generated using SciPy's~\cite{2020SciPy-NMeth} \href{https://docs.scipy.org/doc/scipy/reference/generated/scipy.stats.unitary_group.html}{\texttt{scipy.stats.unitary\_group.rvs}} method.} of the group $U(4)$; and (2) a random product unitary evolution $\mathcal{U}^{rand (2)}_{B\vert A} (\cdot): = U^{(2)} (\cdot) U^{(2) \dagger} $ where $U^{(2)}:= U_A \otimes U_B$ for some random elements $U_A, U_B$ of the group $U(2)$, the results obtained are in agreement with the one shown previously; with the $CG$-compatibility robustness of the SWAP channel, up to numerical imprecision, being equal to zero. Therefore, at least for these three types of noise, we can conclude that although the SWAP channel is compatible with the coarse-grained description raised by the blurred and saturated detector map, its compatibility is not robust.

This fact alone suggests how difficult the existence of a macroscopic, emergent dynamics is in the presence of any undesirable microscopic noise. It also suggests that, since we have chosen the noisy, resulting microscopic dynamics to be CPTP, even in this noisy scenario one is able to use the $\Gamma^{\text{Petz}}_{D|C, \rho_A}$ solution to solve the problem at least for the generator itself---and for a limited number of initial states. Thus, remarking a valuable property of the $\Gamma^{\text{Petz}}_{D|C, \rho_A}$ solution.

Implementing the robustness-inspired SDP proposed in eq.  \eqref{eq:SDP5} and taking into account the previous verifications conducted above, we now investigate numerically whether there exists a microscopic dynamics which allows the existence of a general emergent dynamics for the non-compatible scenarios 2, 3, and 4. In particular, we are willing to answer, whether there exists a microscopic dynamics such that its convex combination with the initial one becomes compatible with that coarse-grained description.

Letting our standpoint be the discussions around eq.   \eqref{eq:ResultantMicroDynamics}, with the convex combination parameter $\gamma$ assuming $1.5 \times10^3$ numerical values evenly spaced within the interval $[0,1]$, we solve numerically the feasibility SDP proposed in \eqref{eq:SDP5} for the three aforementioned scenarios. In Table \ref{tab:2}, we report the sub-intervals of $\gamma$ in which, by solving such feasibility SDP, it was possible to find: (1) a (optimal) feasible microscopic dynamics $\Psi^{t}_{B\vert A}$ which, together with the respective initial (non-compatible) microscopic dynamics $\mathcal{U}^t_{B\vert A}$ of the scenario, allows us to construct the resultant dynamics $\mathcal{J}^t_{B\vert A}$ (eq.   \eqref{eq:ResultantMicroDynamics}), compatible with the respective coarse-grained description; and (2) an effective macroscopic dynamics $\Theta^t_{D\vert C}$ that satisfies the commutativity relation \eqref{eq:CommutRelationBP} for this resultant configuration.
\begin{table}[H]
\centering
\begin{tabular}{| c| c| }
Scenarios & Solution ($\Theta^{t}_{D\vert C}$) attainable \\ 
\hline
  2 ($\mathcal{U}^t_{B\vert A} = \mathcal{U}^{\sigma_z}_{B\vert A}$) & $0\leq\gamma \leq 0.557$ \\
  3 ($\mathcal{U}^t_{B\vert A} = \mathcal{U}^{swap}_{B\vert A}$)& $ 0\leq\gamma \leq 0.249$ \\
  4 ($\mathcal{U}^t_{B\vert A} = \mathcal{U}^{\sigma_z}_{B\vert A}$)& $ 0\leq\gamma \leq 0.524$ \\
\end{tabular}
\caption{Sub-intervals of $\gamma$ in which we can promote compatibility in the three non-compatible scenarios.}
\label{tab:2}
\end{table}
From these results, we can claim that, by running the proposed SDP \eqref{eq:SDP5} for a given non-compatible coarse-graining scenario, one is able---at least for a sub-interval of $\gamma$ values---to find a resultant microscopic dynamics compatible with a given coarse-grained description. Furthermore, from a pragmatic viewpoint, since the SDP \eqref{eq:SDP5} also returns an effective macroscopic dynamics for that resultant scenario, together with the results on the performance of $\Gamma_{D \vert C, \rho_A}^{\text{Petz}}$ discussed in Sec. \ref{subsec:DiagramCommutativity}, two conclusions can be drawn. First, the problem can be solved for all initial states whenever, for each non-compatible scenario considered, the parameter $\gamma$ lies within the sub-intervals reported in Table \ref{tab:2}. Second, for values of $\gamma$ outside these intervals, given that the resultant microscopic dynamics remains CPTP, one is able to solve the problem in a state-by-state manner by using our proposed $\Gamma^{\text{Petz}}_{D\vert C, \rho_A}$, at least for its generator or for a limited number of underlying initial states.

\section{Conclusion}\label{Sec.Conclusion}

Coarse-grained descriptions are ubiquitous in the sciences. Being virtually unescapable, they represent rational attempts to make sense of complex phenomena when an account of all the involved degrees of freedom is impossible or simply unreasonable---either because there are far too many or because we are limited to interacting with only a few of them. Working under this perspective, in this paper, we further explored the emergence of a compatible coarse-grained dynamics when information is lost or mixed (or a combination of the two) along the way. The situation we investigated is best represented in the diagram of Fig.~\ref{fig:Coarse_graining_diagram}.

Under this framework, given a microscopic dynamics $\mathcal{U}_{t}$ and a coarse-graining map $\Lambda_{\text{CG}}$, thought as representing the physical process of loss of information, questioning whether there exists an emergent CPTP map $\Gamma_t$, such that the following identity holds true 
\begin{align}
    \Gamma_{t} \circ \Lambda_{\text{CG}} = \Lambda_{\text{CG}} \circ \mathcal{U}_{t}
\end{align}
has already been shown to be a highly non-trivial problem~\cite{DCBM17, DATM20}. What we saw in this work is that such a non-triviality potentially arises from the correlations present between the coarse-grained system itself and the traced-out degrees of freedom. Depending on $\Lambda_{\text{CG}}$ and what the microscopic evolution $\mathcal{U}_{t}$ does with the system, there might not be enough information left to make the diagram commute. Nonetheless, this problem certainly fades away when $\Lambda_{\text{CG}}$ is unitary, so that one can essentially reverse the direction of the leftmost arrow in the diagram of Fig.~\ref{fig:Coarse_graining_diagram}, therefore pointing to a potential solution. In other words, we could define an emergent map as below:
\begin{align}
    \Gamma_{t}:= \Lambda_{\text{CG}} \circ \mathcal{U}_{t} \circ \Lambda_{\text{CG}}^{-1}
\end{align}
Even though this reversal does not work in general, whenever $\Lambda_{\text{CG}}$ is not unitary, one could re-frame the original coarse-graining problem into a quantum (Bayesian) inference scenario and ask for the best guess for $\Gamma_{t}$, given an apriori microscopic state $\rho_{A}^{\mbox{priori}}$. Under these lenses, a reversal for $\Lambda_{\text{CG}}$ is well-defined and it is given by its associated Petz recovery map. By doing so, an effective emergent dynamic can be defined---see section~\ref{Sec.CGAsBayesianInference}. This potential solution does come with its own inherent costs, though. It explicitly depends (non-linearly) on the a priori state $\rho_{A}^{\mbox{priori}}$---leaving open the door for the question, `what is the optimal a priori state?'---and may not be an optimal solution overall: as we argued in Sec.~\ref{Sec.SDPAnalysis}, in scenarios where a solution does exist, when comparing the Petz-inspired emergent map with the actual solution, the former does not recover the latter. In other words, the Petz reversal map solution should be viewed as an educated guess for those situations where no other solution is possible. There might be a possible refinement of this line of research. Recently, the authors of ref.~\cite{JM23} investigated a scenario of state retrieval beyond the paradigm of Bayesian inference. Although one of the tenets of our standpoint was to advance a Bayesian subjectivist agenda for interpreting the state vector, we believe that moving beyond this paradigm may yield better results for physically motivated reversals that produce emergent, coarse-grained dynamics. We will explore this line of thought in a forthcoming work.

Beyond the inference reframing, one could determine analytically the possibility of an emergent map consistent with the diagram of Fig.~\ref{fig:Coarse_graining_diagram}. That is precisely what we did for several scenarios. Interestingly, whenever one of the horizontal levels of the diagram is classical, it is possible to find an adequate solution---in other words, hybrid diagrams where the coarse-graining is nothing but a preparation or a measurement, or even a combination of both, admit a compatible emergent coarse-grained dynamics. As we mentioned earlier, this suggests the fundamental role that correlations play in our framework. The same role can be seen when we addressed the problem in a component-wise manner for four paradigmatic scenarios: (a) blurred and saturated detector with a SWAP channel; (b) blurred and saturated detector with a z-interaction channel; (c) partial trace with a SWAP channel; and (d) partial trace with a z-interaction channel. In all those examples, we were able to pinpoint necessary conditions for the emergence of well-defined macroscopic dynamics. 

Lastly, motivated by the state-dependence limitation of our proposed Bayesian solution, we conducted a numerical evaluation in order to benchmark its performance in four concrete coarse-graining scenarios. Using the trace distance of the difference between the two sides of the conventional commutativity relation (eq.   \eqref{eq:traceNormCommuting}), we investigated the capacity of our solution in solve the problem, for a given coarse-graining scenario, in terms of different generator states in its construction. Our findings reveal a strong dependence in the choice of the generator state, with the solution performing better in the four scenarios here approached when it was generated from the maximally mixed state. Moreover, we observed that the solution can be employed, for a given coarse-graining scenario, not only for the a priori state itself but also for a limited number of random states.
Note that throughout this work, we have picked up states randomly according to the uniform measure. In theory, we might have used a different form of sorting out random states, but since we are not investigating asymptotic behaviors (where concentration effects start to take place), nor are we circumscribed by an operational task demanding a more adapted measure (where measures should be compatible with allowed transformations), we have opted for a simpler random process.

Since the evaluated Bayesian solution depends explicitly on the Petz recovery map, we claim that this finding might suggest a strong connection with the best performance observed for the Petz recovery map, as shown previously in ref. \cite{LFN22}, when it is generated from the maximally mixed state. This might constitute evidence, in parallel with that previous study, that the best choice of reference state for the Petz recovery map---not only for reverting the action of the dephasing and depolarizing channels, but also for the blurred and saturated detector coarse-graning map and for the partial trace---is the maximally mixed state. Additionally, we observed that, when varying the parameter of the Werner state taken as the reference state of the Petz recovery map in our solution, the more entangled this state is, the worse the results in reverting the coarse-graining maps. Suggesting, in that way, a fruitful direction for future investigations involving the role played by the reference state's quantum correlations---the generator of our solution, and consequently of the Petz recovery map---in coarse-graining scenarios beyond those addressed here.
 
 Furthermore, in our semidefinite programming investigation, we have proposed novel convex-optimization strategies to overcome the state-dependence limitation of our Bayesian solution, such as searching for general, state-independent solutions, as shown in SDPs \eqref{eq:AuxSDP2}-\eqref{eq:SDP3}. We also address additional aspects of the coarse-graining problem from the 
 perspective of the conditional state formalism as a whole. By viewing the compatibility of quantum dynamics in a coarse-graining scenario as a resource, we propose a new robustness measure (Definition \ref{eq:DefiRobustness} to quantify how much noise an initial compatible unitary dynamics can tolerate before its compatibility with a given coarse-grained description vanish. Additionally, drawing inspiration from it, we formulate a semidefinite program (SDP \eqref{eq:SDP5}) that searches a compatible microscopic dynamics---and, consequently, a compatible macroscopic, emergent dynamics---for initially non-compatible coarse-graining scenarios. It should be remarked that this constitute a step toward a comprehensive computational study---in tandem with the ones approached in \cite{DATM20}---of the essence of the conventional coarse-graining problem, as stated in ref. \cite{DCBM17}, within the quantum Bayesian framework for quantum theory. In addition to this, since we have benchmarked and implemented numerically our proposed programs for only four concrete coarse-graining scenarios, we hope that they might serve as a basis for further research in these directions, especially in evaluating our proposed optimization routines across other coarse-graining scenarios.

To conclude, we want to emphasise that this paper should be read in tandem with an earlier work by some of the authors~\cite{DATM20}. There, the authors tried to investigate the emergence of a well-defined coarse-grained dynamics via four mathematical perspectives. Our work not only deepens but also adds to the pool of tools the authors explored in that paper. While there, the authors were preoccupied with solving the coarse-graining problem in full generality; here, we set ourselves the task of reframing the whole question as an inference problem and analysing classes of cases that could shed a brighter light on the existence of a compatible emergent dynamics. It was precisely this more modest case study that motivated us to define quantifiers, such as the CG-compatibility robustness measure.

\begin{acknowledgments}
The authors thank Carlos Humberto and Lucas Porto for all the thought-provoking discussions. CD thanks the hospitality of the Institute of Quantum Studies and the camaraderie of Matthew Leifer, whose infinite patience was fundamental for this work. L L. Brugger thanks the support from the Fundação de Amparo à Pesquisa do Estado de Minas Gerais (FAPEMIG) for his PhD grant and also thanks the Programa de Bolsas de Pós-Graduação da Universidade Federal de Juiz de Fora (PPGF -UFJF) for his master grant. This study was financed in part by the Coordenação de Aperfeiçoamento de Pessoal de Nível Superior – Brasil (CAPES) – Finance Code 001. Thales B. S. F. Rodrigues thanks CAPES for his master's grant, as well as the PPGF - UFJF for its hospitality. This study was financed, in part, by the São Paulo Research Foundation (FAPESP), Brasil. Process Number 2025/27776-6. V. G. Valle thanks the Conselho Nacional de Desenvolvimento Científico e Tecnológico (CNPq) for financial support through his master’s grant and the PPGF–UFJF for its support and resources. This work was supported by Grant 63209 from the John Templeton Foundation. The opinions expressed in this publication are those of the authors and do not necessarily reflect the views of the John Templeton Foundation. This work was supported by CNPq through two grants from the Conhecimento Brasil Program (Lines 1 and 2). 
\end{acknowledgments}

%

\begin{widetext}
\appendix

\section{Coarse-Graining Mechanism and Quantum Conditional States: a brief overview of the formalism}\label{Sec.Formalism}
\setcounter{thm}{0}

This work should be seen as part of a larger project, which advances a particular standpoint: that quantum theory can be seen as a theory of (subjective) probabilistic assignment, one that appropriately emphasises the role played by the concept of agency. To begin our exploration of this path, we investigate the emergence of macroscopic quantum dynamics through the lens of (quantum) Bayesian inference. Two main ingredients are, then, necessary: the coarse-graining scenario of ref. \cite{DCBM17} and the conditional states formalism we borrow from \cite{LS13}. Each is addressed in much more detail elsewhere~\cite{LS14,DATM20}. Still, we briefly review either topic so that our work is self-consistent and the notation and nomenclature are accordingly uniformised. 
\subsection{The Coarse-Graining Problem}\label{SubSec.CGProblem}
From a numerical perspective, a full description of many-body quantum systems seems to be a computationally impossible task~\cite{Feynman1982, PerezGarcia2007}. The complexity and the number of parameters of such descriptions grows exponentially with their degrees of freedom, and a typical simulation of a system containing $6\times10^{23}$ particles, which takes into account all particle-particle interactions and does not leverage any approximation method (e.g., mean-field approximations and Monte Carlo methods), would take an entire lifetime to be executed on a classical computer. 

One way to address this limitation within the quantum information paradigm is to work with effective descriptions, as in refs.~\cite{DCBM17, COVM21, Duarte20,DATM20}. There, a coarse-grained dynamics is determined by a coarse-graining map and the underlying microscopic evolution. Such a coarse-graining is a completely positive and trace-preserving map~\cite{DCBM17}, not only responsible for reducing the dimensionality of the system---representing the incapacity to access all degrees of freedom of the microscopic system---but also models more complex effects, arising, for example, from faulty measurement devices~\cite{DCBM17}. This motivates the following definition:

\begin{defi}[Coarse-graining map] Let $\mathcal{H}_D \simeq \mathds{C}^{D}$  and $\mathcal{H}_d \simeq \mathds{C}^{d}$ be Hilbert spaces assigned, respectively, to a $D-$dimensional and a $d-$dimensional quantum systems. Let $\mathcal{L}({\mathcal{H}_D})$ and $\mathcal{L}({\mathcal{H}_d})$ be the sets of linear operators acting on $\mathcal{H}_D$ and $\mathcal{H}_d$, respectively. A coarse-graining map $\Lambda_{\text{CG}}$ is defined to be a CPTP linear map: 
    \begin{equation}\label{EDef.coarsegraining}
        \Lambda_{\text{CG}}: \mathcal{L}({\mathcal{H}_D}) \to \mathcal{L}({\mathcal{H}_d}),  
    \end{equation}
such that $D> d$.
\end{defi}
The idea is that by defining a coarse-graining map $\Lambda_{\text{CG}}$ in this way, one can obtain a macroscopic, less complex, yet informative description of quantum systems subjected to noise and information loss. 

For the underlying microscopic dynamics, it is more than reasonable to assume that the system evolves unitarily \cite{Nielsen_Chuang_2010}. So, there is a unitary quantum channel 
\begin{align}
    \mathcal{U}_t : \mathcal{D}{(\mathcal{H}_D)} \to \mathcal{D}{(\mathcal{H}_D)}
\end{align}
governing the system's microscopic evolution. As usual,  $\mathcal{D}(\mathcal{H}_D) = \{ \rho \in \mathcal{L}(\mathcal{H}_D ) \vert  \rho \geq 0, \Tr(\rho) = 1\}$ is the state space over $\mathcal{H}_D$ formed by positive-semidefinite operators of trace equals one. 

\begin{figure}
    \centering
    \includegraphics[width=0.30\linewidth]{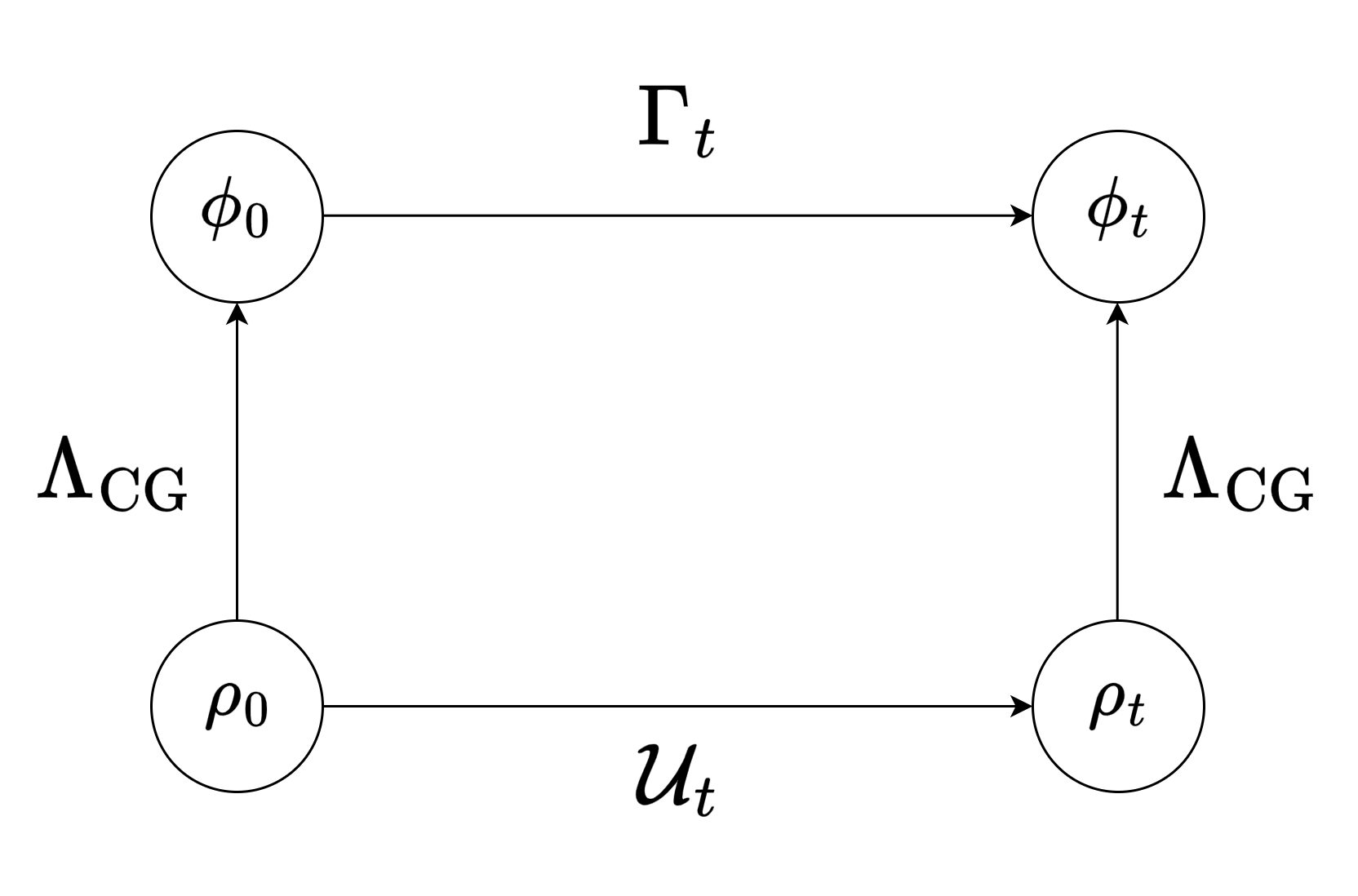}
    \caption{\textbf{The coarse-graining problem diagram.} The horizontal arrow at the bottom represents the unitary evolution of a closed many-body quantum system. Vertical arrows represent information loss, noise, or lack of access to all degrees of freedom, while the horizontal arrow at the top represent, whenever it exists, the emergent, less complex and macroscopic dynamics. }
    \label{fig:Coarse_graining_diagram}
\end{figure}

In this framework, the central question is: given a closed microscopic quantum system, whose state space is described by $\mathcal{D}(\mathcal{H}_D)$, what is the effective, emergent dynamics 
\begin{align}
    \Gamma_t : \mathcal{D}(\mathcal{H}_d) \to \mathcal{D}(\mathcal{H}_d)
\end{align}
dictated by the action of a coarse-graining map $\Lambda_{\text{CG}}$ and the unitary dynamics $\mathcal{U}_t$? Such a question, primarily presented in ref. \cite{DCBM17}, is known as the \textit{coarse-graining problem}. This problem, which has a neat diagrammatic representation, seee Fig.~\ref{fig:Coarse_graining_diagram}, is mathematically expressed by the attempt to satisfy the following relation: 
\begin{equation}\label{eq:CommutationRelation}
    \Gamma_{t} \circ \Lambda_{\text{CG}} = \Lambda_{\text{CG}} \circ \mathcal{U}_t,
\end{equation}
that in itself is another way to say that the diagram of Fig.~\ref{fig:Coarse_graining_diagram} commutes. One of the key aspects of the coarse-graining problem, which lies behind the commutativity relation seen in Eq.~\eqref{eq:CommutationRelation}, is the fact that, if such emergent dynamics $\Gamma_t$ exist---a property that does not hold in general--- one can describe the evolution of the underlying many-body quantum system through a less complex and accessible framework, rather than the $\mathcal{U}_t$ itself. In other words, looking again at the diagram of Fig. \ref{fig:Coarse_graining_diagram}, this regards obtaining the effective quantum state $\phi_t$ either by $\Lambda_{\text{CG}} \circ \mathcal{U}_t (\rho_0) = \Lambda_{\text{CG}}(\rho_t)$ or by $\Gamma_t \circ \Lambda_{\text{CG}}(\rho_0) = \Gamma_t(\phi_0)$, interchangeably. 

Upon this exposition, we define the notion of compatibility in a coarse-graining scenario.
\begin{defi}[Compatibility in a coarse-graining scenario]\label{def:CompatibilityCG}
Let $\mathcal{H}_D$ be a $D-$dimensional Hilbert space and let $\mathcal{H}_d$ be a $d-$dimensional Hilbert space, with $D > d$. Let $\mathcal{U}_t: \mathcal{L}(\mathcal{H}_D) \to \mathcal{L}(\mathcal{H}_D)$ be a unitary dynamics and let $\Lambda_{\text{CG}}: \mathcal{L}(\mathcal{H}_D) \to \mathcal{L}(\mathcal{H}_d)$ be a coarse-graining map. We say that $\mathcal{U}_t$ is compatible with $\Lambda_{\text{CG}}$ (or, equivalently, that the coarse-graining scenario itself is compatible) if there exists a CPTP emergent dynamics $\Gamma_t: \mathcal{L}(\mathcal{H}_d) \to \mathcal{L}(\mathcal{H}_d)$ such that 
    \begin{align}
    \Gamma_t \circ \Lambda_{\text{CG}} = \Lambda_{\text{CG}} \circ \mathcal{U}_t.
    \end{align}
\end{defi}\noindent
As we have mentioned before, this commutativity relation is the core of the coarse-graining problem advanced in \cite{DCBM17}. 

More details on the coarse-graining question, as well as some specific examples and applications of the formalism, can be found in refs. \cite{DCBM17, CM20,CF19}. We also explore four paradigmatic scenarios later on in the present work. Readers who need more examples may want to skip to subsections~\ref{SubSec.FullyQuantumBnSSWAP}-\ref{SubSec.FullyQuantumTrSigmaz}. We pass on to discuss the other ingredient we will need in our approach: the conditional quantum states.

\subsection{Conditional States Formalism}\label{SubSec.CQS}

\subsubsection{Channel-Operator Connection}\label{SubSubSec.ChannelOperatorConnection}
 
This section contains a brief overview of an adapted version of the conditional states formalism developed in ref.~\cite{LS13}. The main difference between the two formulations is just a matter of convenience---it is essentially a decision of where to write down a partial trace. Because we wanted to investigate the coarse-graining problem with semidefinite programming, we then choose to work with the Choi isomorphism~\cite{Choi:1975nug}, as in definition~\ref{def:ChoiIso} below, instead of the Jamio\l{}kowski's found in \cite{LS13, JAMIOLKOWSKI1972275}.

\begin{defi}[Choi Isomorphism] \label{def:ChoiIso} Let $\mathcal{H}_A$ and $\mathcal{H}_B$ be two finite-dimensional Hilbert spaces. Let
\begin{equation}
    \mathcal{E}_{B|A}: \mathcal{L}(\mathcal{H}_A) \to \mathcal{L}(\mathcal{H}_B) 
\end{equation}
be a linear map. The Choi image of $\mathcal{E}_{B|A}$ is given by an operator $\rho_{B|A} \in \mathcal{L}(\mathcal{H}_A \otimes \mathcal{H}_B)$ such that 
\begin{align}
    \rho_{B|A} &:= (\text{id}_A \otimes \mathcal{E}_{B|A}) \ket{\Phi^+}\bra{\Phi^+} \nonumber \\ 
    &= (\text{id}_A \otimes \mathcal{E}_{B|A})(\sum_{i,j = 0}^{d-1} \ket{i}\bra{j} \otimes \ket{i}\bra{j}) , 
\end{align}
where $d := dim(\mathcal{H}_A)$,
$\ket{\Phi^+} = \sum_{i=0}^{d-1}\ket{i} \otimes \ket{i}$ is an unnormalized maximally entangled state of $\mathcal{H}_A \otimes \mathcal{H}_A$ with respect to some preferred orthonormal basis $\{\ket{k}\}_{k=0}^{d-1}$ of $\mathcal{H}_A$, and $\text{id}_A$ is the identity map acting on $\mathcal{L}(\mathcal{H}_A)$. Also, for some $\sigma_{A} \in \mathcal{L}({\mathcal{H}_A})$, the action of $\mathcal{E}_{B|A}$ on $\mathcal{L}(\mathcal{H}_A)$ is recovered by 
\begin{equation}\label{eq:ChoiActing}
    \mathcal{E}_{B|A}(\sigma_{A}) = \Tr_A[\rho_{B|A}(\sigma_{A}^{T} \otimes \mathds{I}_B) ] \in \mathcal{L}{(\mathcal{H}_B)}.
\end{equation}
\end{defi}
The isomorphic connection between the Choi state $\rho_{B|A}$ and $\mathcal{E}_{B|A}$ can be promptly checked, and we do so in the Appendix~\ref{Sec.AppProof}. One relevant application of the Choi(-Jamio\l{}kowski) isomorphism, which is one of the cornerstones of the conditional states formalism, is that it maps any quantum channel to a positive semidefinite operator. This fact is highlighted in the following theorem, whose proof is also displayed in Appendix \ref{Sec.AppProof}. 
\begin{thm}\label{teo:ConnectsCPTPtoPSD}
    Let $\mathcal{E}_{B|A}: \mathcal{L}(\mathcal{H}_A) \to \mathcal{L}(\mathcal{H}_B)$ be a linear map and let $\rho_{B|A} \in \mathcal{L}(\mathcal{H}_A \otimes \mathcal{H}_B)$ be its Choi-isomorphic operator. Then, it follows that $\rho_{B|A}$ satisfies: 
    \begin{enumerate}
        \item $\rho_{B|A} \geq0 ;$
        \item $\Tr_B(\rho_{B|A}) = \mathds{I}_A$,
    \end{enumerate}
    if, and only if, $\mathcal{E}_{B|A}$ is CPTP.
\end{thm}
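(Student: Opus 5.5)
The plan is to prove the two directions separately. For each I would reduce the CP part to Choi's theorem and the TP part to a direct partial-trace calculation using eq.~\eqref{eq:ChoiActing}. Throughout I use the unnormalized $\ket{\Phi^+}=\sum_i \ket{i}\otimes\ket{i}$, so that $\Tr_B(\rho_{B|A})$ and $\mathds{I}_A$ are compared with no dimension factor.

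\emph{Trace preservation.} First I would compute, for any $\sigma_A\in\mathcal{L}(\mathcal{H}_A)$, using eq.~\eqref{eq:ChoiActing} and cyclicity of the partial trace over $B$,
\begin{equation*}
\Tr[\mathcal{E}_{B|A}(\sigma_A)] = \Tr_A\!\left[\Tr_B(\rho_{B|A})\,\sigma_A^{T}\right].
\end{equation*}
If $\Tr_B(\rho_{B|A})=\mathds{I}_A$, the right-hand side is $\Tr(\sigma_A^T)=\Tr(\sigma_A)$, so $\mathcal{E}_{B|A}$ is TP. Conversely, suppose $\Tr[\mathcal{E}_{B|A}(\sigma_A)]=\Tr(\sigma_A)$ for all $\sigma_A$. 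Taking $\sigma_A=\ket{j}\bra{i}$ gives $\bra{i}\Tr_B(\rho_{B|A})\ket{j}=\delta_{ij}$. Alternatively, this entrywise identity follows directly from $\Tr_B(\rho_{B|A})=\sum_{i,j}\ket{i}\bra{j}\Tr[\mathcal{E}_{B|A}(\ket{i}\bra{j})]$. Either way, $\Tr_B(\rho_{B|A})=\mathds{I}_A$.

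\emph{Complete positivity, forward direction.} If $\mathcal{E}_{B|A}$ is CP, then $\text{id}_A\otimes\mathcal{E}_{B|A}$ is positive. Since $\ket{\Phi^+}\bra{\Phi^+}\ge 0$, it follows immediately that $\rho_{B|A}\ge 0$.

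\emph{Complete positivity, converse (the main obstacle).} Assume $\rho_{B|A}\ge0$. I would spectrally decompose it as $\rho_{B|A}=\sum_k \ket{v_k}\bra{v_k}$, where the $\ket{v_k}$ are subnormalized eigenvectors scaled by the square roots of the eigenvalues. Each vector in $\mathcal{H}_A\otimes\mathcal{H}_B$ can be written as $\ket{v_k}=(\mathds{I}_A\otimes K_k)\ket{\Phi^+}$ for a unique $K_k:\mathcal{H}_A\to\mathcal{H}_B$, namely $K_k\ket{i}=(\bra{i}\otimes\mathds{I}_B)\ket{v_k}$. Hence
\begin{equation*}
\rho_{B|A}=\sum_k(\mathds{I}_A\otimes K_k)\ket{\Phi^+}\bra{\Phi^+}(\mathds{I}_A\otimes K_k^\dagger).
\end{equation*}
The map $\mathcal{E}'(X)=\sum_k K_kXK_k^\dagger$ then has the same Choi operator as $\mathcal{E}_{B|A}$. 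By injectivity of the Choi correspondence (the reconstruction formula~\eqref{eq:ChoiActing}), this gives $\mathcal{E}_{B|A}=\mathcal{E}'$. A map in Kraus form is manifestly CP, because $(\text{id}_R\otimes\mathcal{E}')(Y)=\sum_k(\mathds{I}\otimes K_k)Y(\mathds{I}\otimes K_k)^\dagger\ge0$ for every $Y\ge0$.

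The delicate points are the vector–operator correspondence and keeping the transpose convention of eq.~\eqref{eq:ChoiActing} consistent. I would verify both on the basis elements $\ket{i}\bra{j}$.

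Combining the two parts shows that conditions 1 and 2 together are equivalent to $\mathcal{E}_{B|A}$ being CPTP.
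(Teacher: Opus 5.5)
Your proof is correct and takes essentially the same route as the paper. You get trace preservation from $\Tr[\mathcal{E}_{B|A}(\sigma_A)]=\Tr_A[\Tr_B(\rho_{B|A})\sigma_A^{T}]$ and $\Tr_B(\rho_{B|A})=\sum_{i,j}\ket{i}\bra{j}\Tr[\mathcal{E}_{B|A}(\ket{i}\bra{j})]$; CP implies positivity of $(\text{id}_A\otimes\mathcal{E}_{B|A})(\ket{\Phi^+}\bra{\Phi^+})$; and for the converse you build Kraus operators from the eigenvectors of $\rho_{B|A}$ via $\ket{v_k}=(\mathds{I}_A\otimes K_k)\ket{\Phi^+}$. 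The only cosmetic difference is that you invoke injectivity of the Choi correspondence, whereas the paper writes $\mathcal{E}_{B|A}(\ket{j}\bra{n})=(\bra{j}\otimes\mathds{I}_B)\rho_{B|A}(\ket{n}\otimes\mathds{I}_B)=\sum_l V_l\ket{j}\bra{n}V_l^{\dagger}$ explicitly and applies it to the rank-one spectral pieces of a positive input.
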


Note that the conditional notation we adopted for $\mathcal{E}_{B|A}$ and $\rho_{B|A}$ captures the idea of a generalised conditional probability~\cite{LS13, LS14}. The Choi image, $\rho_{B|A}$, is a semidefinite positive operator, and the trace over its outputs' space is the appropriate identity. These indices, as it will become clearer in a minute, are also more than simply notational shorthands for conditioning. They also denote the character of the Hilbert space, or the region, these operators are associated with. Simply put, an \textit{elementary region} is designed to denote any portion of the space-time where an agent might possibly make a single intervention on a quantum system through an experiment, either by making a measurement or by preparing a state. A \textit{region} (classical or quantum) denotes a collection of elementary regions. To each elementary region $A$ we associate a Hilbert space $\mathcal{H}_{A}$. To each composed region, say $AB$, we associate a Hilbert space $H_{AB}:=\mathcal{H}_{A} \otimes \mathcal{H}_{B}$, where  $\mathcal{H}_{A}$ is associated to $A$ and $\mathcal{H}_{B}$ to $B$. 

This focus on \textit{elementary regions} and \textit{regions} is an effort by the authors of refs.~\cite{LS13,LS14} to obtain a causally neutral theory of Bayesian quantum inference. From their standpoint, bipartite scenarios and physical processes governed by a CPTP map are supposed to be treated on the same footing, with a composed region $AB$ and an associated Hilbert space $H_{AB}:=\mathcal{H}_{A} \otimes \mathcal{H}_{B}$. Recall that usually these two scenarios are treated discreptantly. Bipartite states are positive semidefinte trace class objects belonging to $\mathcal{L}(\mathcal{H}_{A} \otimes \mathcal{H}_{B})$, while channels are completely positive, trace-preserving linear maps like $\mathcal{E}_{B|A}: \mathcal{L}(\mathcal{H}_{A}) \rightarrow \mathcal{L}(\mathcal{H}_{B})$. Even with this focus, the conditional states formalism has its inherent limitations. What it succeeds, though, is on establishing a formal parallel between concepts of quantum theory with classical probability formulas---see refs.~\cite{LS13,LS14}. In that sense, a classical random variable $X$ (or any other uppercase Latin letter near the end of the alphabet) finds its counterpart within the formalism through a quantum region that encodes its values. These regions are called classical regions and their associated Hilbert spaces come equipped with a preferred basis $\{\ket{x}\}_{x \in \text{Out}(X)}$ where every density operator in question is diagonal on that basis.

\subsubsection{Belief Propagation}\label{SubSubSec.BeliefPropagation}

We are now in a position to delve into the Bayesian conditioning aspect of the formalism, which, as shown in the subsequent section, enables us to reframe the coarse-graining problem as a Bayesian inference problem. To do so, we will adopt the `given' notation---drawing a parallel to the one commonly used in the context of classical conditional probabilities. So, when a linear map is written as $\mathcal{M}_{B|A}$, it means that
\begin{equation}
    \mathcal{M}_{B\vert A} : \mathcal{L}(\mathcal{H}_A) \to \mathcal{L}(\mathcal{H}_B),
\end{equation}
 and, extending by analogy, the operator isomorphic to it will be denoted by $\omega_{B\vert A} \in \mathcal{L}(\mathcal{H}_{AB})$. With this notation in hand, we are ready to define a certain class of conditional operators as follows. 
 
 \begin{defi} \label{def:acausalStates}
 An acausal conditional state $\rho_{B\vert A}\in \mathcal{L}(\mathcal{H}_{A} \otimes\mathcal{H}_{B})$ is an operator satisfying:
 \begin{enumerate}
     \item $\rho_{B\vert A} \geq 0$;
     \item  $\Tr_B(\rho_{B\vert A}) = \mathds{I}_A$,
 \end{enumerate}
 where $\mathds{I}_A$ is the identity operator acting on $\mathcal{H}_A$.
 \end{defi}
Note that, in the light of Theorem \ref{teo:ConnectsCPTPtoPSD}, any operator $\rho_{B\vert A}\in \mathcal{L}(\mathcal{H}_{AB})$, Choi-isomorphic to a linear map $\mathcal{E}_{B\vert A}$, is an acausal conditional state if and only if the map $\mathcal{E}_{B\vert A}$ is CPTP. Because of its meaning, we state it as a proposition, but it is essentially the same as Theorem~\ref{teo:ConnectsCPTPtoPSD} combined with the definition above

\begin{prop}\label{Thm.AcausalCPTP}
    Let $\mathcal{E}_{B|A}: \mathcal{L}(\mathcal{H}_A) \to \mathcal{L}(\mathcal{H}_B)$ be a linear map and let $\rho_{B|A} \in \mathcal{L}(\mathcal{H}_A \otimes \mathcal{H}_B)$ be its Choi-isomorphic operator. Then, the following are equivalent:
    \begin{enumerate}
        \item $\rho_{B|A}$ is an acausal state;
        \item $\mathcal{E}_{B|A}$ is CPTP.
    \end{enumerate}
\end{prop}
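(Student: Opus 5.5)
Proposition~\ref{Thm.AcausalCPTP} is essentially a restatement of Theorem~\ref{teo:ConnectsCPTPtoPSD} in the language of Definition~\ref{def:acausalStates}, so the plan is to obtain each implication by unfolding that definition and invoking the theorem. If a self-contained argument is wanted, the plan also reproduces the content of Theorem~\ref{teo:ConnectsCPTPtoPSD}. The two conditions, positivity and the partial-trace normalization, are treated separately: the first corresponds to complete positivity and the second to trace preservation.

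For (2)$\Rightarrow$(1), assume $\mathcal{E}_{B|A}$ is CPTP. Positivity follows because $\rho_{B|A}=(\mathrm{id}_A\otimes\mathcal{E}_{B|A})(\ket{\Phi^+}\bra{\Phi^+})$ is the image of a positive operator under a positive map. For the normalization, expand
\[
\Tr_B\rho_{B|A}=\sum_{i,j}\ket{i}\bra{j}\,\Tr\big(\mathcal{E}_{B|A}(\ket{i}\bra{j})\big).
\]
Trace preservation gives $\Tr\big(\mathcal{E}_{B|A}(\ket{i}\bra{j})\big)=\delta_{ij}$, so the sum equals $\mathds{I}_A$. Hence $\rho_{B|A}$ is an acausal conditional state.

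For (1)$\Rightarrow$(2), assume $\rho_{B|A}\geq 0$ and $\Tr_B\rho_{B|A}=\mathds{I}_A$, and use the recovery formula \eqref{eq:ChoiActing}.
\begin{itemize}
\item \textbf{Trace preservation.} Taking the full trace of \eqref{eq:ChoiActing} gives $\Tr\,\mathcal{E}_{B|A}(\sigma_A)=\Tr_A[(\Tr_B\rho_{B|A})\sigma_A^T]=\Tr\sigma_A^T=\Tr\sigma_A$.
\item \textbf{Complete positivity.} This is the step I expect to be the main, though mild, obstacle. Decompose $\rho_{B|A}=\sum_k\ket{v_k}\bra{v_k}$ and write each $\ket{v_k}=\sum_i\ket{i}\otimes K_k\ket{i}$, which defines operators $K_k:\mathcal{H}_A\to\mathcal{H}_B$. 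Substituting into \eqref{eq:ChoiActing} and carefully tracking the transpose should yield $\mathcal{E}_{B|A}(\sigma)=\sum_k K_k\sigma K_k^\dagger$, a Kraus form.
\end{itemize}
A Kraus form is manifestly completely positive. The only care needed is the basis-dependent transpose in \eqref{eq:ChoiActing}, which must cancel consistently with the convention $\ket{\Phi^+}=\sum_i\ket{i}\ket{i}$.
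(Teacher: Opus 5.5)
Your proposal is correct and follows essentially the same route as the paper: the paper says the proposition is Theorem~\ref{teo:ConnectsCPTPtoPSD} combined with Definition~\ref{def:acausalStates}, and its appendix proof of that theorem uses your ingredients. These are positivity of $(\mathrm{id}_A\otimes\mathcal{E}_{B|A})(\ket{\Phi^+}\bra{\Phi^+})$, $\Tr\,\mathcal{E}_{B|A}(\ket{i}\bra{j})=\delta_{ij}$, the partial-trace identity for trace preservation, and Kraus operators read off from a rank-one decomposition of $\rho_{B|A}$ (the paper's $V_l$ are your $K_k$). The paper only differs cosmetically, checking complete positivity by applying $\mathrm{id}_C\otimes\mathcal{E}_{B|A}$ to a spectrally decomposed test operator. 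Your transpose bookkeeping does cancel as expected, giving $\mathcal{E}_{B|A}(\sigma)=\sum_k K_k\sigma K_k^\dagger$.
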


So far, we have defined what an acausal conditional state is. They are those operators in a one-to-one correspondence with CPTP maps via the Choi isomorphism. Note that in ref.~\cite{LS13} the authors choose to use the Jamio\l{}kowski isomorphism instead; and this is where they get their nomenclature from. Now, recall that given two (classical) random variables $X$ and $Y$, we can define a conditional probability distribution $P(X\vert Y)$, independently of the causal relation that holds between them. In standard probability theory, we do not need to establish whether there is a mechanism connecting the two variables; we simply assume a joint sample space in which they can be measured simultaneously. Thus, in line with treating quantum theory as a neutral theory of Bayesian inference, we need to define a second object, so we can deal with causal and acausal relationships alike.  

\begin{defi}\label{def:CausalState}
A causal conditional state of $B$ given $A$ is an operator $\varrho_{B\vert A} \in \mathcal{L}(\mathcal{H}_{AB})$ that can be denoted as
 \begin{equation}
     \varrho_{B\vert A} = \rho_{B\vert A}^{T_A},
 \end{equation}
for some acausal conditional sate $\rho_{B\vert A} \in \mathcal{L}(\mathcal{H}_{AB})$. $T_A$ represents the partial transpose with respect to some basis on $\mathcal{H}_A$.
\end{defi}

Based on this definition, one can directly recognize that, if $\rho_{B\vert A}$ is the Choi-image operator of a given quantum dynamics $\mathcal{E}_{B\vert A}$, then the causal conditional state $\varrho_{B\vert A}$ is the Jamio\l{}kowski-isomorphic \cite{JAMIOLKOWSKI1972275} operator to it. Furthermore, due to the partial transposition operation, the causal conditional state $\varrho_{B\vert A}$ defined upon a acausal conditional state $\rho_{B|A}$ is generally not a positive operator \cite{LS13}. This fact permeates one of the major limitations of this formalism, and is discussed in more detail in the main text. 

\textbf{Remark.} As we mentioned before, we defined an acausal conditional state with the Choi isomorphism. In~\cite{LS13,LS14} the authors have opted to define an acausal state via the Jamio\l{}kowski isomorphism. Essentially, the difference between our conditional states and theirs is simply where one sticks the partial transposition. We agree that, under their definition, the physical intuitions behind `causal' and `acausal' scenarios are clearer. However, because we are investigating coarse-graining scenarios using semi-definite programming, we thought it would be wiser to stick to objects in the cone of semi-definite matrices---a similar move was also done in~\cite{DuarteEtAl25}. If the reader wants to get rid of this potentially confusing issue, they can safely replace `acausal conditional states' with `Choi-conditional states' and  `causal conditional states' with `Jamio\l{}kowski-conditional states', and the results would be exactly the same. In this sense, given an arbitrary linear map $\mathcal{N}_{B\vert A}: \mathcal{L}(\mathcal{H}_A) \to  \mathcal{L}(\mathcal{H}_B)$ and the Jamio\l{}kowski-isomorphic operator $\varrho_{B\vert A}$ associated to it, the former is a quantum channel if and only if $\varrho^{T_A}_{B\vert A} \geq 0$ and $\Tr_{B}(\varrho_{B\vert A}^{T_A}) = \mathds{I}_A$, i.e., when $\varrho_{B\vert A}^{T_A}$ is a valid acausal conditional state.


  

From the inferential standpoint we want to advance, because $\varrho_{B\vert A}$ is supposed to carry the \textit{causal} connection between the regions $A$ and $B$, it can be used to propagate an agent's beliefs across these two regions. Basically, this \textit{belief propagation} takes states from region $A$ to region $B$ as below:
%
\begin{equation}
    \mathcal{N}_{B\vert A} (\rho_A) = \Tr_A[\varrho_{B\vert A}(\rho_A\otimes \mathds{I}_B)] \in \mathcal{D}(\mathcal{H}_B).
\end{equation}
Note that this expression differs slightly from eq.   \eqref{eq:ChoiActing}. Here we are now looking at the action of $\mathcal{N}_{B\vert A}$ in terms of its Jamio\l{}kowski-isomorphic operator rather than the Choi-isomorphic one---so, as anticipated, we miss the partial transposition. Fig.~\ref{Fig2} summarizes this discussion, which also advances hybrid operators and its relation to measurements and ensemble preparation--- the main topic of items \textit{(c)} and \textit{(d)} below.

\begin{figure}[h]
\centering
\subfigure[]{
     \includegraphics[scale=0.3]{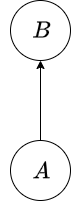}\label{subfig:2a}}\hspace{0.5cm}
     \hspace{1.0cm}
\subfigure[]{
    \includegraphics[scale=0.3]{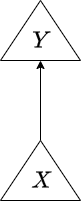}\label{subfig:2b}}\\
\subfigure[]{
    \includegraphics[scale=0.3]{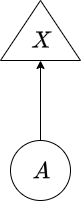}\label{subfig:2c}}\hspace{0.5cm}
     \hspace{1.0cm}
\subfigure[]{
    \includegraphics[scale=0.3]{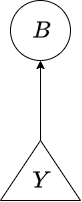}\label{subfig:2d}}
\caption{\textbf{Conditional States Formalism in Diagrams.} Classical regions are denoted by triangles, and quantum regions by circles. Arrows represent causal connections and the directions of causal influence. (a) \textit{General quantum dynamics:} the region $A$ represents the input of a quantum channel $\mathcal{N}_{B\vert A}$ and $B$ represents the output; (b) \textit{Classical Measurements:} this process takes the random variable $X$ as input as gives $Y$ as output. The conditional state $\rho_{Y\vert X} \in \mathcal{L}(\mathcal{H}_X \otimes \mathcal{H}_Y)$ carries the conditional probability distribution $P(Y\vert X)$; (c) \textit{Quantum measurement:} the hybrid operator $\rho_{X\vert A} \in{\mathcal{L}(\mathcal{H}_A \otimes \mathcal{H}_X)}$, designed to propagates beliefs in this case, carries a POVM $\{E_x^A\}_{x\in \text{Out(X)}} \subset \mathcal{L}(\mathcal{H}_A)$ in its composition. The Born rule is recovered as an instance of belief propagation; (d) \textit{Ensemble preparation:} a hybrid operator $\rho_{B\vert Y}\in \mathcal{L}(\mathcal{H}_Y \otimes \mathcal{H}_B)$ carries a set $\{\rho_{y}^B\}_{y\in \text{Out}(Y)} \subset \mathcal{D}(\mathcal{H}_B)$ in its definition. Belief propagation, in this case, gives the ensemble average state $\rho_{B} = \sum_y P(Y=y) \rho_y^B\in \mathcal{D}(\mathcal{H}_B)$.} 
\label{Fig2}
\end{figure}

\subsubsection{Hybrid States: Measurements}\label{SubSubSec.HybridStatesMeasurements}

When delving into a broad class of well-established references in quantum theory, the reader will inevitably encounter extensive discussions emphasizing that the process of measuring a quantum system lies at the very core of the theory. Although many philosophical questions arise from these discussions, in standard quantum theory textbooks~\cite{Nielsen_Chuang_2010,WatrousLectureNotes,cohen,LS13}, a measurement $\mathcal{M}$ with classical outcomes $x \in \mathrm{Out}(\mathcal{M})$ is represented by a POVM $\{E_{x}^{A}\}_{x \in \mathrm{Out}(\mathcal{M})}$ acting on the state space $\mathcal{H}_{A}$, such that
\begin{equation}
    \sum_{x\in Out(\mathcal{M})} E_{x}^{A} = \mathds{1}_{A}.
\end{equation}

Focusing on the measurement process itself—namely, the classical outcomes obtained through the measurement procedure and the corresponding probabilities of obtaining them—we observe that it can be represented by a channel that takes a quantum state as its input and outputs a classical state. In this way, the measurement process is represented by a channel that maps a quantum region to a classical one, formally given by $\mathcal{E}_{X\vert A}: \mathcal{L}(\mathcal{H}_{A}) \rightarrow \mathcal{L}(\mathcal{H}_{X})$, and naturally withing the conditional states formalism has a representation that we call as a \textit{hybrid conditional state} associated with the measurement process, and is defined as,
\begin{equation}\label{Ex.DefMesState}
    \varrho_{X\vert A} = \sum_{x\in Out(\mathcal{M})} \ket{x}\bra{x} \otimes E_{x}^{A}.
\end{equation}
Furthermore, the hybrid state given by expression~\eqref{Ex.DefMesState} satisfies both the definitions of acausal and causal states. A detailed demonstration and discussion of this fact are provided in ref.~\cite{LS13}.

\subsubsection{Hybrid States: Preparation}\label{SubSubSec.HybridStatesPreparations}

Besides measurement, another physical process extensively addressed in quantum theory is the ensemble preparation \cite{Nielsen_Chuang_2010,WatrousLectureNotes,cohen,LS13}. In this case, the process consists of constructing quantum states conditioned on classical measurement outcomes. Analogously, the ensemble preparation procedure maps classical states to quantum states and can therefore be interpreted as a channel from a classical region to a quantum region. Essentially, this channel is given by $\mathcal{E}_{A\vert X}: \mathcal{L}(\mathcal{H}_{X}) \rightarrow \mathcal{L}(\mathcal{H}_{A})$. Therefore, this physical process arises naturally within the conditional states formalism, and this is the second type of hybrid conditional states that we will be working with. Thus, this hybrid state is defined \cite{LS13} as follows,
\begin{equation}\label{Def.measurementHybrid}
    \varrho_{A\vert X} = \sum_{x\in \mathrm{Out}(\mathcal{M})} \ket{x}\bra{x} \otimes \rho^{A}_{x},
\end{equation}
where $\{\ket{x}\}_{x\in \mathrm{Out}(\mathcal{M})}$ is a preferred basis to the classical region and $\{\rho^{A}_{x} \in \mathcal{L}(\mathcal{H}_A)\}$ is a set of positive and normalized operators indexed by the values of $x$. The hybrid state, defined in the expression \eqref{Def.measurementHybrid}, also satisfies both definitions of causal and acausal conditional state. And once more, a detailed demonstration and discussion of this fact are provided in ref.~\cite{LS13}.

\subsubsection{Conditional States: Measure and Prepare}\label{SubSubSec:MeasureAndPrepare}

As previously presented, the conditional states isomorphic to the channels are then given by,
\begin{align}\label{Ex.hmxa}
    \varrho_{X\vert A} = \sum_{x} \ket{x}\bra{x} \otimes E_{x}^{A}, \\ \label{Ex.hepcx}
    \varrho_{C\vert X} = \sum_{x} \ket{x}\bra{x} \otimes \rho_{x}^{C}, \\ \label{Ex.hmyd}
    \varrho_{Y\vert D} = \sum_{y} \ket{y}\bra{y} \otimes E_{y}^{B}, \\ \label{Ex.hepyb}
    \varrho_{Y\vert B} = \sum_{y} \ket{y}\bra{y} \otimes \rho_{y}^{B}, 
\end{align}
where $y\in Out(Y)$ and $x \in Out(X)$.

Via Theorem \ref{teo:ChannelComposition}, we can find the conditional state resulting from the process of measuring a quantum system and then preparing an ensemble from the measurement outputs. More specifically, taking the hybrid states \eqref{Ex.hmxa} and \eqref{Ex.hepcx}, that corresponds to the left branch of the Fig.~\ref{fig_measure_and_prepare}, we have,
\begin{align}\label{Ex.apmeasureandpreparestate}
    \varrho_{C\vert A} &= \text{Tr}_{X}\left(\varrho_{C\vert X} \varrho_{X \vert A} \right) \nonumber \\ 
   &= \text{Tr}_{X}\left[\sum_{xx'} \left(\ket{x}\bra{x}\otimes \rho_{C}^{x}\right)\left(\ket{x'}\bra{x'}\otimes E_{x'}^{A}\right)\right] \nonumber \\ 
   &= \sum_{x} \rho_{C}^{x} \otimes  E_{x}^{A}.
\end{align}
The conditional state presented in the expression \eqref{Ex.apmeasureandpreparestate} is what we will call here the \textit{measure-and-prepare conditional state}. Again, by Theorem~\ref{teo:ChannelComposition} we have that its associated channel is $\mathcal{E}_{C\vert A}^{MP}$.

It is also direct to check that with $\{\rho_{C}^{X}\}$ being a set of normalized states and $\{E_{x}^{A}\}$ a POVM over $\mathcal{L}(\mathcal{H}_A)$, the conditional state as given in expression \eqref{Ex.measureandpreparestate} satisfies both Def.~\ref{def:acausalStates} and Def.~\ref{def:CausalState} of acausal and causal states.\footnote{This check can be carried out in a manner analogous to that employed for hybrid conditional states in ref.~\cite{LS13} and won't be done here.}

Considering that we have access to an initial state of the region $A$, namely $\rho_{A}$ we can, therefore, find ---via belief propagation ---the state $\rho_C$, that is,
\begin{align}\label{ex.bpmprhoc}
    \rho_{C} &= \text{Tr}_{A}\left( \varrho_{C\vert A}\rho_{A} \right) \nonumber \\ \nonumber
    &= \text{Tr}_{A}\left[\left(\sum_{x} \rho_{C}^{x} \otimes  E_{x}^{A}\right) \rho_{A}\right] \\ 
    &=\sum_{x} \text{Tr}\left(E_{x}^{A}\rho_{A}\right)\rho_{C}^{x}.
\end{align}

\subsubsection{Quantum Bayes Inversion}\label{SubSubSec.QuantumBInversion}

We show that it is possible to write a quantum version of the Bayes' rule within the conditional states formalism. To create a more transparent parallel between the classical and quantum versions of the inversion rule, we will first define an auxiliary, non-associative binary product \cite{LS13}
\begin{defi}[Star product]\label{AppDef.StarProduct}
    For any $M_{AB} \in \mathcal{L}(\mathcal{H}_{AB})$ and $N_A \in \text{Pos}(\mathcal{H}_{A})$, we define the $\star-$product as follows,
\begin{equation}
    M_{AB} \star N_A := (N_A^{\frac{1}{2}} \otimes \mathds{I}_B) M_{AB}(N_A^{\frac{1}{2}} \otimes \mathds{I}_B). 
\end{equation}
\end{defi}

Classically, recall that when $X$ and $Y$ are two random variables, the Bayes inversion formula for $P(X|Y)$ and $P(Y|X)$ is given by~\cite{james2004probabilidade}
\begin{equation}
   P(Y\vert X) = \frac{P(X \vert Y) P(Y)}{P(X)},
   \end{equation}
where $P(X\vert Y) = P(X,Y)P(Y)^{-1}$, given the joint distribution $P(X,Y)$. Quantumly, we can do essentially the same to derive a \textit{quantum Bayes' inversion} formula \cite{LS13} to obtain a conditional quantum state $\sigma_{A\vert B}$ of $A$ given $B$ from the conditional $\sigma_{B|A}$. To do so, it suffices to remind that the joint state of a pair of regions $AB$ is determined by $\sigma_{AB}=\sigma_{B|A} \star \rho_{A}^{-1}$, and that analogously $\sigma_{AB}=\sigma_{A|B} \star \rho_{B}^{-1}$; where $\rho_{A}:=\mbox{Tr}_{B}(\sigma_{AB})$ and $\rho_{B}:=\mbox{Tr}_{A}(\sigma_{AB})$. Consequently,
\begin{equation}\label{eq:QBayesTheorem}
    \sigma_{A\vert B} = \sigma_{B\vert A} \star (\rho_A \rho_B^{-1}). 
\end{equation} 

In the causal case, when $\varrho_{B\vert A}$ is Jamio\l{}kowski-isomorphic to a channel $\mathcal{E}_{B\vert A}$, if $\rho_A$ is the state that describes the agent's beliefs about a region A and $\rho_B$ is the result of the belief propagation from $A$ to $B$ determined by $\rho_B = \mathcal{E}_{B\vert A}(\rho_A) = \Tr_A (\varrho_{B|A} \rho_A)$, then the operator $\varrho_{A\vert B}$---dubbed the \textit{quantum Baeysian inversion} of $\varrho_{B\vert A}$---can be written as 
\begin{equation}
    \varrho_{A\vert B} = \varrho_{B\vert A} \star(\rho_A \rho_B^{-1}).
\end{equation}

While $\varrho_{B \vert A}$ is responsible for propagating beliefs from $A$ to its causal future $B$, its Bayesian inversion $\varrho_{A \vert B}$ retrospects beliefs from $B$ to $A$. Although naive, this digression is fundamental. It emphasises, in a natural way, the potential Bayesian character we can give to the coarse-graining formalism as a whole, as shown in more detail in the main text.

\subsubsection{Petz Recovery Map}\label{SubsubSec.PetzInversion}

Strikingly, given an initial state $\rho_{A}$ and a quantum channel $\mathcal{N}_{B|A}$, the Jamio\l{}kowski-inverse image of state associated with the quantum Bayesian inversion $\rho_{A|B}$ is exactly the well-known Petz recovery channel $\mathcal{R}_{A\vert B}$ \cite{LS13,Petz1986}, whose expression is given explicitly by: 
\begin{equation} \label{eq:PetzMap}
    \mathcal{R}_{A\vert B} (\cdot) = \rho_A^{\frac{1}{2}} \{\mathcal{N}^{\dagger}_{B\vert A} [\rho_B^{-\frac{1}{2}} (\cdot) \rho_B^{-\frac{1}{2}}]\}\rho_A^{\frac{1}{2}},
\end{equation}
where $\mathcal{N}^\dagger_{B\vert A} : \mathcal{L}(\mathcal{H}_B) \to \mathcal{L}(\mathcal{H}_A)$ is the adjoint of $\mathcal{N}_{B\vert A}$ and $\rho_{B}:=\mathcal{N}_{B|A}(\rho_{A})$. 

Such a map, which also appears in the literature on quantum computation, is recognised as the channel that achieves near-optimal quantum error correction when the initial state and quantum channel are both known~\cite{BK02}. What is remarkable is that in the conditional states formalism, it is nothing but a Bayes inversion from the original channel $\mathcal{N}_{B|A}$---which suggests that the quantum Bayes rule is the (quasi-)optimal way to propagate beliefs backwards from $B$ to $A$ creating a neat parallel between quantum (Bayesian) inference and quantum error correction~\cite{LS13}.


\subsubsection{Channel Composition Rule}\label{SubSubSec.CompositionRule}

To conclude, we present the channel composition rule viewed as an instance of belief propagation \cite{LS13}. Such a result, as seen in the main text, is one of the key ingredients we used to reframe the coarse-graining problem in the light of the conditional states paradigm.

\begin{thm}\label{teo:ChannelComposition}
Let $\mathcal{E}_{B\vert A}, \mathcal{E}_{C\vert B}$ and $\mathcal{E}_{C\vert A}$  be linear maps, and $\varrho_{B\vert A}, \varrho_{C\vert B}$ and $\varrho_{C\vert A}$, their respective Jamio\l{}kowski isomorphic operators. Then, it holds that $\mathcal{E}_{C\vert A} = \mathcal{E}_{C\vert B} \circ \mathcal{E}_{B\vert A}$ if, and only if, the Jamio\l{}kowski isomorphic operators satisfy  
\begin{equation}\label{eq:compositionRuleJami}
    \varrho_{C\vert A} = \Tr_B[(\mathds{I}_A\otimes\varrho_{C\vert B}) (\varrho_{B\vert A} \otimes \mathds{I}_C)].
\end{equation}
\end{thm}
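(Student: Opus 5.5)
The plan is to reduce both directions to a single identity: the right-hand side of \eqref{eq:compositionRuleJami} is exactly the Jamio\l{}kowski operator of the composite map $\mathcal{E}_{C\vert B}\circ\mathcal{E}_{B\vert A}$. Once that is established, the theorem follows from the injectivity of the Jamio\l{}kowski isomorphism.

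To fix conventions, I use the convention implicit in Definition~\ref{def:CausalState}. The Jamio\l{}kowski operator of a linear map $\mathcal{E}_{B\vert A}$ is
\[
\varrho_{B\vert A}=\sum_{i,j}\ket{j}\bra{i}_A\otimes \mathcal{E}_{B\vert A}(\ket{i}\bra{j}).
\]
This is the partial transpose $T_A$ of the Choi operator of Definition~\ref{def:ChoiIso}. Its action is recovered by $\mathcal{E}_{B\vert A}(\sigma_A)=\Tr_A[\varrho_{B\vert A}(\sigma_A\otimes\mathds{I}_B)]$. This recovery formula follows from eq.~\eqref{eq:ChoiActing} together with the identity $\Tr_A[X^{T_A}(\sigma\otimes\mathds{I})]=\Tr_A[X(\sigma^T\otimes\mathds{I})]$. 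The map $\mathcal{E}\mapsto\varrho$ is a linear bijection between $\mathcal{L}(\mathcal{L}(\mathcal{H}_A),\mathcal{L}(\mathcal{H}_B))$ and $\mathcal{L}(\mathcal{H}_{AB})$. It is injective because the recovery formula determines $\mathcal{E}$ from $\varrho$, and counting dimensions then gives bijectivity. I take this as the isomorphism already checked in Appendix~\ref{Sec.AppProof}.

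The key computation has two steps. First, I substitute the expansion of $\varrho_{B\vert A}$ into the right-hand side of \eqref{eq:compositionRuleJami}:
\[
\Tr_B[(\mathds{I}_A\otimes\varrho_{C\vert B})(\varrho_{B\vert A}\otimes\mathds{I}_C)]=\sum_{i,j}\ket{j}\bra{i}_A\otimes \Tr_B\big[\varrho_{C\vert B}\,(\mathcal{E}_{B\vert A}(\ket{i}\bra{j})\otimes\mathds{I}_C)\big].
\]
Here the $A$ factor passes through the trace over $B$ untouched, since the operators act on distinct tensor factors. Second, I apply the recovery formula for $\mathcal{E}_{C\vert B}$, which holds for arbitrary operators, not only states, by linearity. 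The inner trace then equals $\mathcal{E}_{C\vert B}(\mathcal{E}_{B\vert A}(\ket{i}\bra{j}))$. The sum is therefore precisely the Jamio\l{}kowski operator of $\mathcal{E}_{C\vert B}\circ\mathcal{E}_{B\vert A}$.

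Both directions now follow. If $\mathcal{E}_{C\vert A}=\mathcal{E}_{C\vert B}\circ\mathcal{E}_{B\vert A}$, their Jamio\l{}kowski operators coincide, and that is \eqref{eq:compositionRuleJami}. Conversely, if \eqref{eq:compositionRuleJami} holds, then $\varrho_{C\vert A}$ equals the Jamio\l{}kowski operator of the composite, and injectivity gives equality of the maps.

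The main obstacle is bookkeeping rather than any conceptual difficulty. The operator ordering and the transposition convention must be consistent with the paper's swapped Choi/Jamio\l{}kowski conventions. In particular, the tensor ordering of $\varrho_{C\vert B}\in\mathcal{L}(\mathcal{H}_{BC})$ has to match the placement of identities in \eqref{eq:compositionRuleJami}. I would handle this by working entirely in the basis expansion above and checking the recovery formula once, as a separate lemma, before using it.
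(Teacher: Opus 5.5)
Your proof is correct and is essentially the paper's argument: both rest only on the belief-propagation formula $\mathcal{E}(\sigma)=\Tr_A[\varrho(\sigma\otimes\mathds{I})]$ and on the fact that a Jamio\l{}kowski operator and its map determine each other. The only difference is organizational: you prove the operator identity once, via the basis expansion of $\varrho_{B\vert A}$, and use injectivity for the converse. The paper instead tests both sides against an arbitrary $M_A$ and uses injectivity in the forward direction, leaving it implicit in its final ``$\implies$''.
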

The proof can be found in Appendix \ref{Sec.AppProof}. Obviously, an analogue of this Theorem can be directly derived when one considers the Choi-isomorphic operators $\rho_{C\vert A}, \rho_{C\vert B}$ and $\rho_{B\vert A}$, associated, respectively, with each of these maps rather than the Jamio\l{}kowski ones. The only difference lies in the necessity, on the right-hand side of eq. \eqref{eq:compositionRuleJami}, to partially transpose the Choi operator associated with the map $\mathcal{E}_{B\vert A}$ on the conditioned region $B$, yielding 
\begin{equation}\label{eq:compositionRuleChoi}
    \rho_{C\vert A} = \Tr_B[(\mathds{I}_A\otimes\rho_{C\vert B}) (\rho_{B\vert A}^{T_B} \otimes \mathds{I}_C)].
\end{equation}
We point out that this version, expressed in terms of the Choi-isomorphic operators, is extensively used in the SDP formulations presented in this work, particularly when representing the conventional compatibility relation in eq.~\eqref{eq:CommutationRelation} in terms of the appropriate conditional states. 


\section{Some Solutions to Variations of the CG Problem}\label{Sec.AppSupMaterial}

\subsection{Fully Classical Case}\label{SubSec.AppSupMaterialFullyClassical}

As we look critically at the conditional state obtained in expression~\eqref{Ex.classicalCSXYMainText}, we notice that the state $\rho_{R\vert X}$ is required in the composition in order to obtain the full form of the conditional state $\rho_{Y\vert X}$---similar to the fully quantum case where we used $\varrho_{A\vert C}$. Since we initially have access only to $\rho_{X\vert R}$, we must determine $\rho_{R\vert X}$ via the Bayes' rule.

Considering that we have access to a marginal state $\rho_{X}$ of region $X$, we can find the joint-state of the regions $X$ and $R$ as,
\begin{equation}
    \rho_{XR} = \rho_{X\vert R} \star \rho_R,
\end{equation}
and marginalize it in order to obtain,
\begin{equation}
    \rho_{X} = \text{Tr}_{R}\left(\rho_{XR}\right).
\end{equation}
Then, we do make use of the Bayesian inversion as presented in the expression \eqref{eq:QBayesTheorem}, obtaining,
\begin{equation}
    \rho_{R\vert X} = \rho_{X\vert R} \star \left(\rho_{R} \rho_{X}^{-1} \right).
\end{equation}
Hence, expression \eqref{Ex.classicalCSXYMainText} takes the form
\begin{align}\label{Ex.classicalXYfinalform}
    \rho_{Y\vert X} = \sum_{r,s,x,y}  P(Y\vert S)P(S\vert R)P(R\vert X) \ket{y}\bra{y}\otimes\ket{x}\bra{x},
\end{align}
where for simplicity we suppressed the indexes $r, s,x$ and $y$ in all terms such as $P(R=r)$ and the alike.

We can also verify that the state \eqref{Ex.classicalXYfinalform} respects the Def.~\ref{def:acausalStates}. In fact,
\begin{equation}
    \text{Tr}_{Y}\left( \rho_{Y\vert X} \right) = \sum_{r,s,x,y} P(Y\vert S)P(S\vert R)P(R\vert X)\ket{x}\bra{x},
\end{equation}
where fixing $S$ and summing over all $Y$ we have $\sum_{y}P(Y\vert S) =1$. Performing the analogous process for $P(S\vert R)$ and $P(R\vert X)$ , we obtain
\begin{equation}
     \text{Tr}_{Y}\left( \rho_{Y\vert X} \right) = \sum_{x} \ket{x}\bra{x} = \mathds{I}_X.
\end{equation}
Also, we can readily see that $\rho_{Y\vert X}$ is a acausal and causal state, satisfying both both definitions Def.~\ref{def:acausalStates} and Def.~\ref{def:CausalState}. 

\begin{figure}
    \centering
    \includegraphics[width=0.3\linewidth]{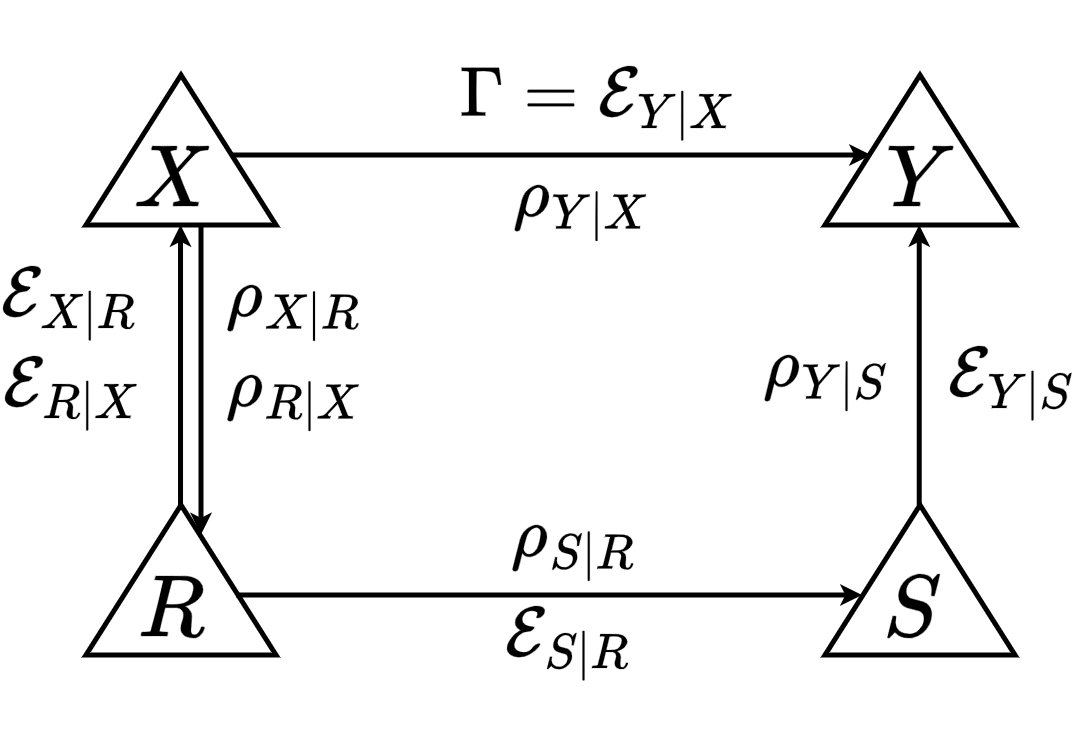}
    \caption{\textbf{Diagram representing the classical emergent dynamics}. In order to obtain the desired solution, we invert the left branch of the diagram via quantum Bayesian inference. In doing so, we were able to define $\Gamma = \mathcal{E}_{Y\vert X}$.}
    \label{fig_classical_emergent_dynamics}
\end{figure}

According to Theorem \ref{teo:ChannelComposition}, we stablish the isomorphism between conditional states and channels. Hence, the CPTP map in ~\eqref{Ex.ClassicalCPTPcomp}, namely
\begin{equation}
    \mathcal{E}_{Y\vert X} = \mathcal{E}_{Y\vert S} \circ \mathcal{E}_{S\vert R} \circ \mathcal{E}_{R\vert X}
\end{equation}
represents the emergent dynamics we are seeking for. This concatenation is pictured in Fig.~\ref{fig_classical_emergent_dynamics}.
Moreover, since our strategy for obtaining it combines Theorem~\eqref{teo:ChannelComposition} with the Bayesian inversion of one the arms of the diagram, and because we are dealing exclusively with classical regions, we encounter no fundamental limitations as we did in the full quantum case. 

The investigation carried out here, beyond merely establishing the emergent dynamics in classical scenarios, leads to an interesting result. Specifically, when we consider the scenario represented in Fig.~\ref{fig_classical_emergent_dynamics}, we find that, within the CSF framework, we are able to reproduce an instance of classical probability theory.

In fact, let us assume that we have access to the classical conditional probabilities $P(X\vert R)$, $P(S\vert R)$ and $P(Y\vert S)$, together with $P(R)$. We can find $P(X)$ by a belief propagation
\begin{equation}
    P(X) = \sum_r P(X\vert R) P(R)
 \end{equation}
as well as one can invert $P(X\vert R)$ via Bayes' theorem, obtaining $P(R\vert X)$. One can infer $P(Y)$ starting from $P(X)$. It is done by propagating $P(X)$ from region $X$ to $R$, then from $R$ to $S$ and, finally, from $S$ to $Y$, that is
\begin{equation}
    P(Y) = \sum_x P(Y\vert X)P(X),
\end{equation}
where
\begin{equation}\label{ex.classicalconditional}
    P(Y\vert X) = \sum_{r,s}P(Y \vert S)P(S\vert R) P(R \vert X),
\end{equation}
with expression \eqref{ex.classicalconditional}
being a valid conditional probability. 
The steps above are represented in 
Fig.~\ref{fig_4regclassical}.

Therefore, we can see that the conditional state \eqref{Ex.classicalXYfinalform} we achieved considering all the four regions to be classical recovers exactly the conditional probability \eqref{ex.classicalconditional} that we have obtained when doing belief propagation from $X$ to $Y$.
\begin{figure}
    \centering
    \includegraphics[width=0.6\linewidth]{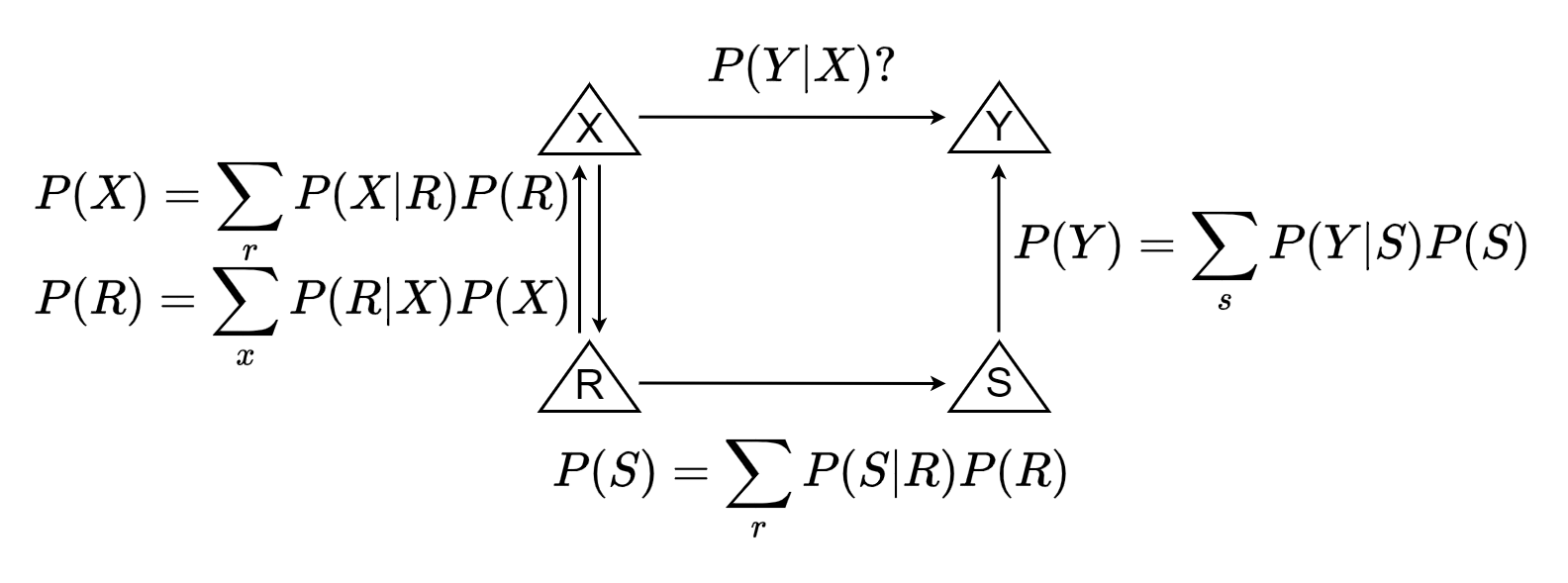}
    \caption{\textbf{Diagram representing the marginal and conditional probabilities together with their associated random variables.} Assuming that we have access to the conditional probabilities between the random variables and to the respective marginal probabilities distributions, we aim to find the belief propagation between the uppermost random variables, namely $X$ and $Y$.}
    \label{fig_4regclassical}
\end{figure}

\section{Hybrid Case - I:  Measurements}\label{SubSec.SolutionQuantumClassical}

As presented in section~\ref{Sec.CGAsBayesianInference}, in this work, we address coarse-graining as a mapping that reduces the dimensionality of the physical system under investigation. In this way, if we interpret a measurement as a process that maps a quantum region of dimension $D$ to a classical region of dimension $d < D$, we may regard this physical procedure as a form of coarse-graining. From this perspective, we are naturally led to ask what happens to the coarse-graining problem when the vertical arrows of the diagram of Fig.~\ref{fig:Coarse_graining_diagram} are replaced by measurement channels that emulate the coarse-graining, and, moreover, how the emergent dynamics behaves in this case---see Fig.~\ref{fig_diagram_hybrid_measurements_no_upper_line}.

The scenario investigated here can be described as follows. We consider two quantum regions, namely $A$ and $B$, connected by a channel $\mathcal{E}_{B\vert A}$, which is therefore associated with a causal conditional state $\varrho_{B\vert A}$. Subsequently, measurements are performed on both quantum regions, mapping region $A$ to a classical region $X$ and region $B$ to a classical region $Y$ and are, respectively, represented by the hybrid conditional states $\varrho_{X\vert A}$ and $\varrho_{Y\vert B}$---see Appendix~\ref{SubSubSec.HybridStatesMeasurements}. This concatenation of measurements is represented as follows,
\begin{equation}\label{Ex.classicaljointhybrid}
    \varrho_{XY} = \text{Tr}_{AB}\left( ( \varrho_{X\vert A} \varrho_{Y\vert B}) \varrho_{AB} \right),
\end{equation}
with expression \eqref{Ex.classicaljointhybrid} being the joint-state associated with the measurement outcomes. The joint-state of the composed region $AB$ can be obtained by the joint rule for conditional states, assuming that we have access to an initial state of the region $A$, namely $\rho_{A}$, we obtain $\varrho_{AB} = \varrho_{B \vert A} \star \rho_{A}$.

The reader may wonder why we resort to the joint state in expression~\eqref{Ex.classicaljointhybrid}, given that, \textit{(i)} in expression~\eqref{Ex.hybridjointstateXY} we again make use of the joint-state rule to recover the conditional state, and \textit{(ii)} considering that we are working within the coarse-graining problem, would it not be more expedient—as in previous sections—to work directly with the conditional state? The main reason for explicitly introducing $\varrho_{AB}$ here is that $\varrho_{B\vert A}$ is not, by itself, a physical state~\cite{LS13}.\footnote{Considering that a physical state of the composed region is represented by an operator $\rho \in \mathcal{D}(\mathcal{H}_{AB})$, we can see that $\rho_{B \vert A}$($\varrho_{B \vert A}$) might even not be a positive operator.} Consequently, since we are investigating measurement processes (and later, ensemble preparation), these operations must be performed on physically valid states.\footnote{Nonetheless, care must be taken when dealing with causal joint states, as these states might not be positive in general. However, since they yield the corrects marginal states of the composed region \cite{LS13}, therefore they remain valid to our analysis.}

In ref.~\cite{LS13}, a proposal similar to the one presented here can be found. However, in contrast to our approach, the authors in that work assume that the quantum regions are acausally connected, leading to a solution that differs slightly from ours. Here, we choose a causal connection between the two quantum regions because, as hinted earlier in this section, our goal is to investigate the coarse-graining problem when the coarse-graining is a measurement. 
\begin{figure}
    \centering
    \includegraphics[width=0.3\linewidth]{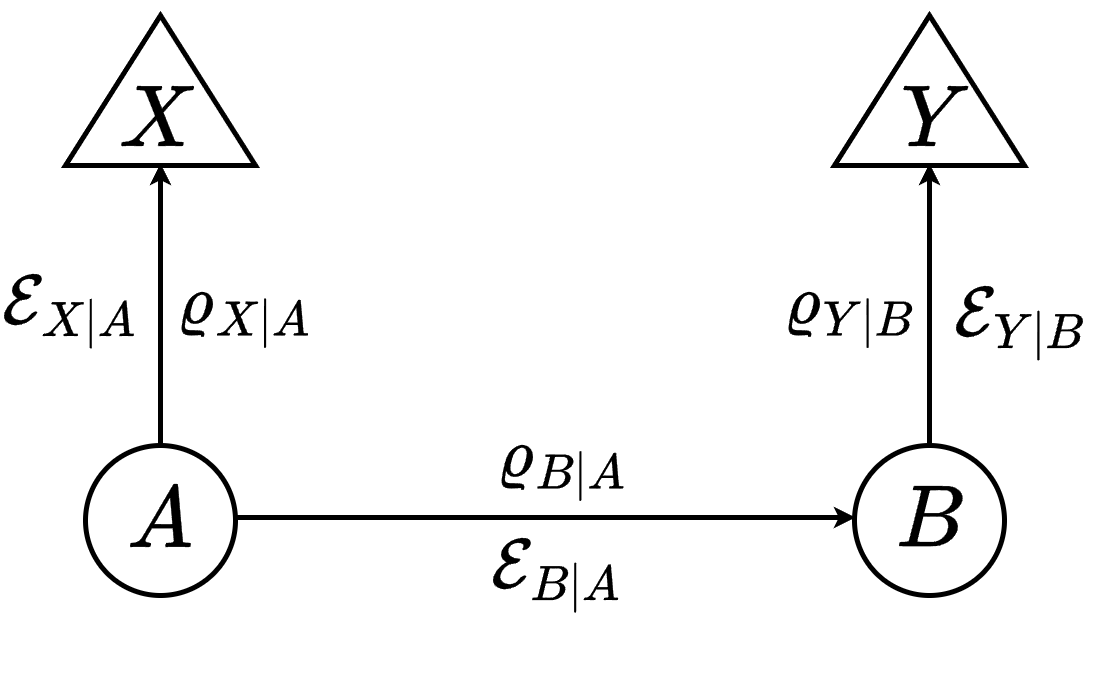}
    \caption{\textbf{Diagram representing the proposal of two measurements being made on the quantum regions $A$ and $B$.} In the bottom part of the diagram we have the two quantum regions $A$ and $B$ connected via an CPTP channel $\mathcal{E}_{B\vert A}$ and the associated conditional state $\varrho_{B\vert A}$. In the vertical branches of the diagram we have the two arrows, one that originates in the quantum region $A$ and terminates in the classical region $X$ and the second that originates in the quantum region $B$ and terminates in the classical region $Y$ both representing the measurement process in the respective quantum regions. The hybrid conditional states $\varrho_{X\vert A}$ and $\varrho_{Y\vert B}$ are the ones that represent the measurements in the CSF, isomorphic to the corresponding channels $\mathcal{E}_{X \vert A}$ and $\mathcal{E}_{Y \vert B}$.}
\label{fig_diagram_hybrid_measurements_no_upper_line}
\end{figure}

Starting from the expression \eqref{Ex.classicaljointhybrid}, and by the Bayesian inversion rule for conditional states, $\varrho_{X\vert A} = \varrho_{A\vert X} \star (\rho_{X} \rho_{A}^{-1})$, we obtain
\begin{align}\label{Ex.hybridjointstateXY}
    \varrho_{XY} &= \text{Tr}_{AB}\left[ (\varrho_{A\vert X} \star (\rho_{X} \rho_{A}^{-1})) \varrho_{Y\vert B} (\varrho_{B\vert A} \star \rho_{A}) \right] \nonumber \\ \nonumber
    &= \text{Tr}_{AB} \left[ \rho_{X}^{1/2} \rho_A^{-1/2} \varrho_{A\vert X} \rho_{X}^{1/2} \rho_{A}^{-1/2}  \varrho_{Y\vert B} \rho_{A}^{1/2} \varrho_{B\vert A} \rho_{A}^{1/2} \right] \\ \nonumber
    &= \text{Tr}_{AB} \left[ \rho_{X}^{1/2}\varrho_{A\vert X} \rho_{X}^{1/2} \varrho_{Y\vert B} \varrho_{B\vert A}\right] \\ 
    &= \text{Tr}_{AB} \left[ \varrho_{Y\vert B} \varrho_{B\vert A} \varrho_{A\vert X}\right] \star \rho_{X}.
\end{align}
By closely examining expression~\eqref{Ex.hybridjointstateXY} and once more applying the joint rule for conditional states, we are able to derive that
\begin{align}
\text{Tr}_{AB} \left( \varrho_{Y\vert B} \varrho_{B\vert A} \varrho_{A\vert X}\right) = \varrho_{Y\vert X},
\end{align}
and this conditional state is precisely the one associated with the propagation of beliefs between the classical region $X$ to the classical region $Y$. That is,
\begin{equation}\label{Ex.leftorightprop}
    \varrho_{Y\vert X} = \text{Tr}_{AB} \left( \varrho_{Y\vert B} \varrho_{B\vert A} \varrho_{A\vert X}\right),
\end{equation} 
where we used that hybrid, causal and acausal states are the same.

What is curious is that this problem is symmetrical. If we had initially access to the unitary channel $\mathcal{U}_{A\vert B}=\mathcal{E}_{A\vert B}$ instead of the unitary channel $\mathcal{U}_{B\vert A}$, in other words, if we had the lower arrow in the diagram of Fig.~\ref{fig_diagram_hybrid_measurements_no_upper_line} to be in the opposite direction, then we would work with the hybrid state $\varrho_{B\vert Y}$, obtaining the conditional state
\begin{equation}\label{Ex.righttoleftprop}
    \varrho_{X\vert Y} = \text{Tr}_{AB} \left( \varrho_{X\vert A} \varrho_{A\vert B} \varrho_{B\vert Y}\right),
\end{equation} 
associated with the belief propagation from $Y$ to $X$. It is important to emphasize that obtaining $\varrho_{A\vert B}$ through Bayesian inversion of $\varrho_{B\vert A}$, and consequently its associated channel, is not a trivial task within CSF, as argued in the section~\ref{SubSec.FullyQuantumPetz}. However, the Bayesian inversion of the hybrid conditional states that appears in the vertical branches of the diagram in Fig.~\ref{fig_diagram_hybrid_measurements_no_upper_line} shows no major limitations. The limitations directly related to more general scenarios with will deal with later on do not necessarily apply when the two regions at the bottom of the diagram are acausally related, and this picture is extensively discussed in ref.~\cite{LS13}.

Therefore, through the characterisation provided by the Theorem~\ref{teo:ChannelComposition}, we thus have that the expression \eqref{Ex.leftorightprop} is associated with the channel,
\begin{equation}\label{Ex.channellefttoright}
   \Gamma^{LR}_{M} = \mathcal{E}_{Y\vert X} = \mathcal{E}_{Y\vert B}^{M} \circ \mathcal{U}_{B\vert A} \circ \mathcal{E}_{A\vert X}^{M},
\end{equation}
and expression \eqref{Ex.righttoleftprop} is associated with the channel,
\begin{equation}\label{Ex.channelrighttoleft}
    \Gamma^{RL}_{M} = \mathcal{E}_{X\vert Y} = \mathcal{E}_{X\vert A}^{M} \circ \mathcal{U}_{A\vert B} \circ \mathcal{E}_{B\vert Y}^{M},
\end{equation}
where the channels presented above represent the emergent dynamics that connects the uppermost part of the diagram in~\ref{fig_diagram_hybrid_measurements_no_upper_line}, with $LR$ (left-to-right) and $RL$ (right-to-left) propagations. That is, starting from joint measurements in two quantum regions connected via a channel, under the umbrella of the coarse-graining problem, we construct a emergent dynamics that connects the two coarse-grained regions. Such elaboration is diagrammatically represented in the Fig.~\ref{Fig_twomeasurementshybrid}.
\begin{figure}[h]
\centering
\subfigure[]{
     \includegraphics[width=0.3\linewidth]{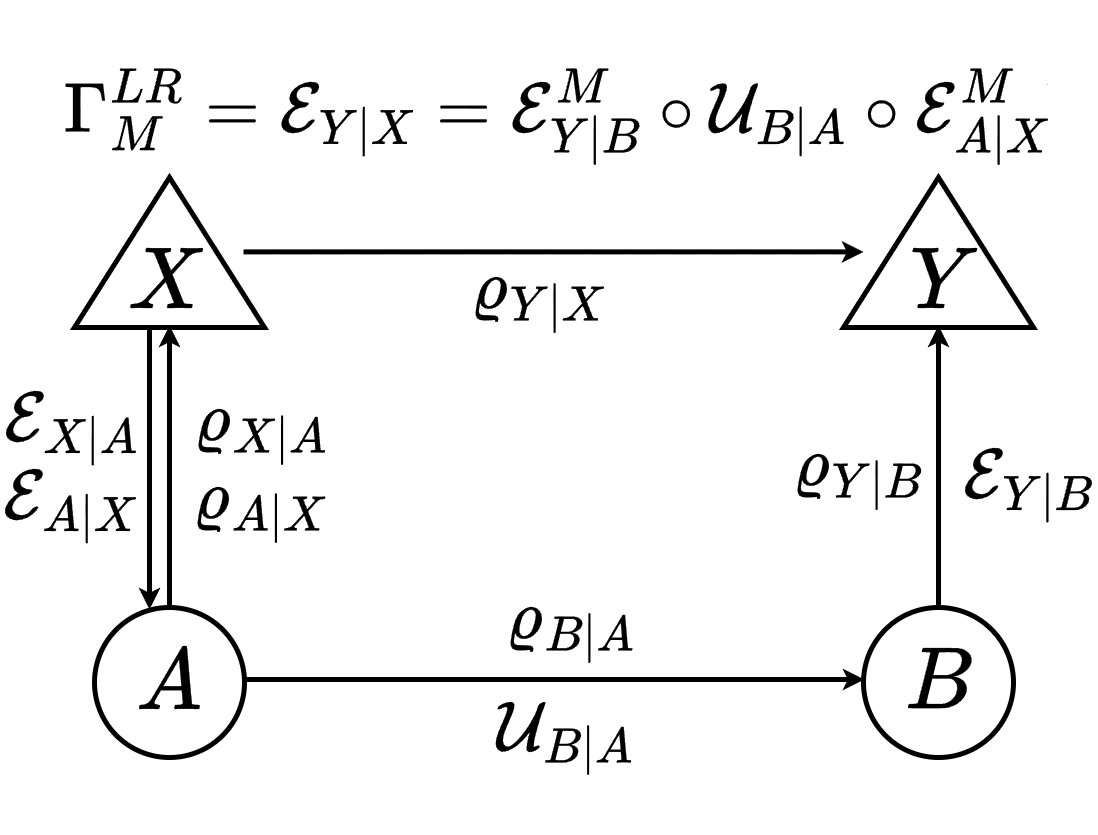}\label{subfig_measurements_leftotoright}}\hspace{0.5cm}
     \hspace{1.0cm}
\subfigure[]{
    \includegraphics[width=0.3\linewidth]{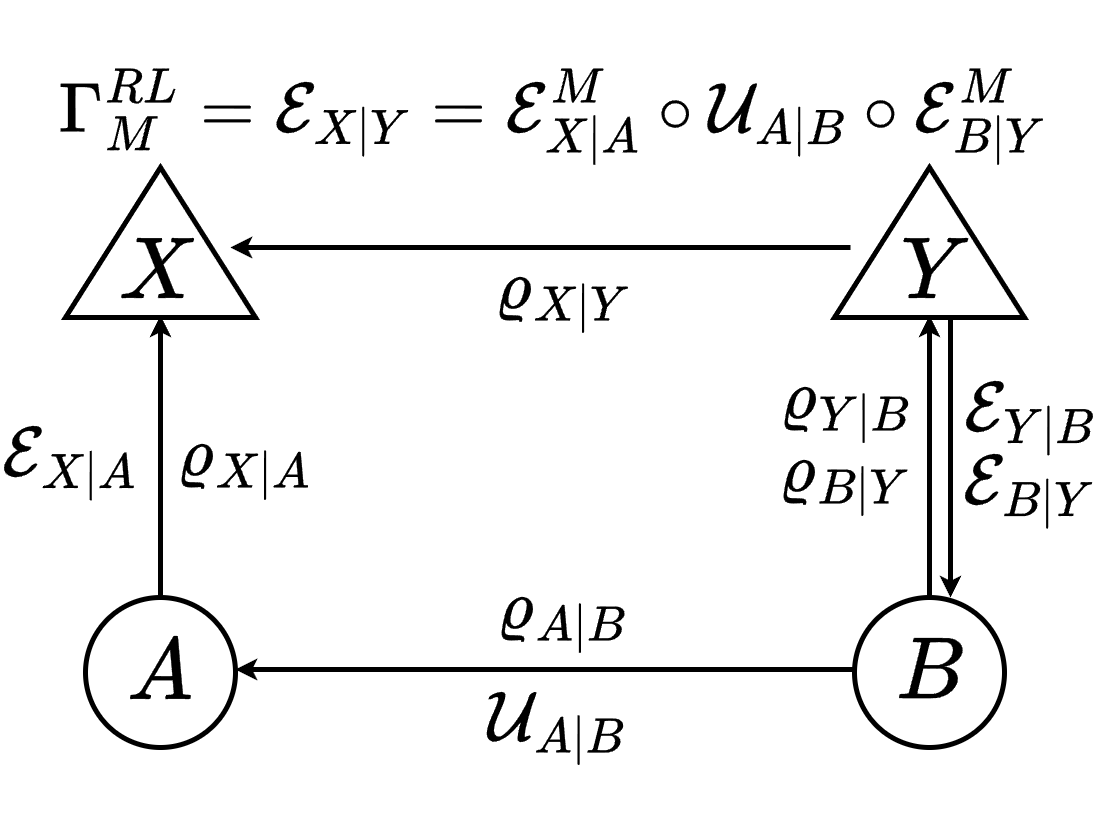}\label{subfig_measurements_righttoleft}}\\
\caption{\textbf{Diagram representing the two emergent dynamics obtained through the CSF}. In (a) the uppermost branch of the diagram shows the left-to-right belief propagation that was obtained through the inversion of the left branch of the diagram, that is, the channel $\mathcal{E}_{A\vert X}$ was obtained through the Bayesian inversion of the conditional state associated with the measurement procedure. In this way we were able to obtain the channel $\mathcal{E}_{Y\vert X}$. In (b) if instead of having access to the channel $\mathcal{U}_{B\vert A}$ we had access to a channel $\mathcal{U}_{A\vert B}$ we would be able to find the belief propagation right-to-left using a similar strategy of the previous case, that is, making the Bayesian inversion of the right branch of the diagram.} 
\label{Fig_twomeasurementshybrid}
\end{figure}

The expressions \eqref{Ex.channellefttoright} and \eqref{Ex.channelrighttoleft} could equally be obtained solely by using Theorem~\ref{teo:ChannelComposition}. However, unlike what we did here, this way we would only be thinking about belief propagation instead of simultaneous measurements in quantum regions. Thus, this approach constitutes one of the main differences from what we did in the previous sections.

We have shown that when dealing with the coarse-graining problem as a joint-measurement problem, we were able not only to describe the belief propagation between classical regions, but also to determine its direction based on the orientation of the channel connecting the quantum regions in the lower part of the diagram. Specifically, when the channel is directed from $A$ to $B$, the resulting emergent dynamics is directed from $X$ to $Y$. Conversely, when the channel runs from $B$ to $A$, the emergent dynamics correspondingly runs from $Y$ to $X$. Furthermore, the calculations above demonstrate once again the computational power of the CSF, since we can switch from the conditional state to the joint state and vice versa without major limitations and in a practical manner.

\section{Hybrid Case - II: Preparations}\label{SubSec.SolutionClassicalQuantum}

Taking a closer look at an ensemble preparation procedure, we assume that this process maps a classical region of dimension $n$ to a quantum region of dimension $d < n$. Under this assumption, we can also interpret this process within the coarse-graining problem setup. In accordance with the measurement case, our goal is also to replace the vertical arrows in the diagram of Fig.~\ref{fig:Coarse_graining_diagram}, with ensemble preparation channels. In this setting, since the process maps a classical region to a quantum one, the lower part of the diagram consists of two classical regions connected by a channel, in contrast to the measurement case, where the lower part contained two quantum regions.

We structure our investigation here as follows: we consider two classical regions $X$ and $Y$ connected via a unitary channel $\mathcal{U}_{Y\vert X}: \mathcal{L}(\mathcal{H}_{X}) \rightarrow \mathcal{L}(\mathcal{H}_{Y})$ and ensemble preparations are performed concomitantly in the classical regions, taking the classical region $X$ to the quantum region $A$ and the classical region $Y$ to the quantum region $B$. We picture this scenario in the diagram of the Fig.~\ref{fig_ensembleprep_noupperarrow}, and once more we set sail to seek for the emergence of an effective dynamics.
\begin{figure}
    \centering
    \includegraphics[width=0.3\linewidth]{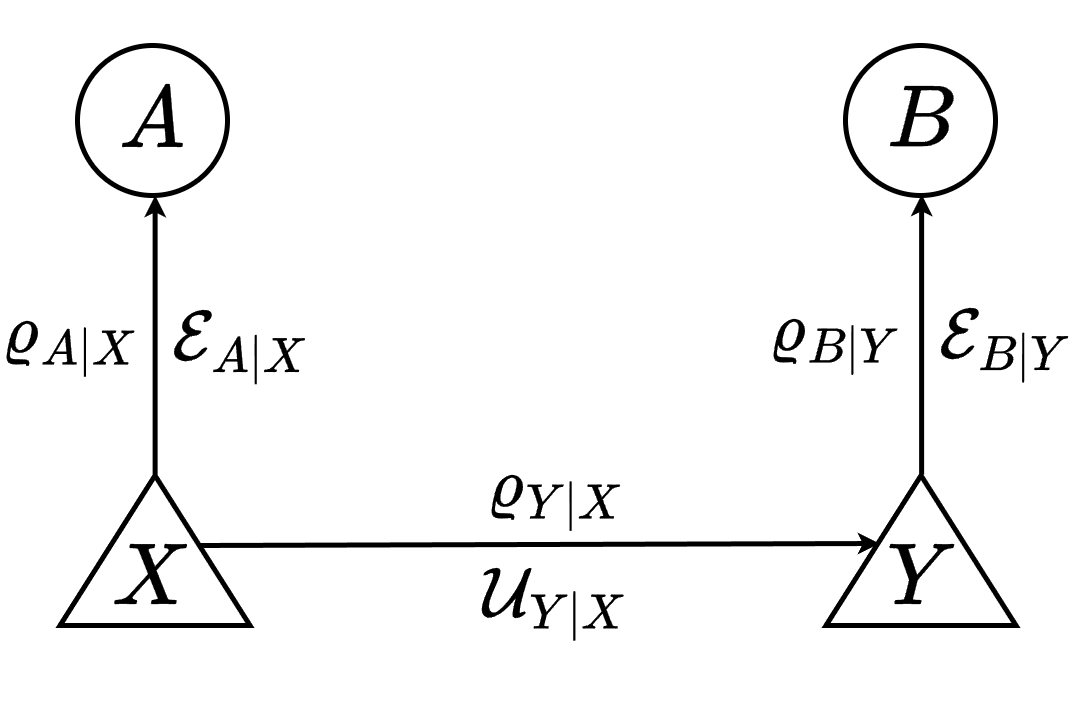}
    \caption{\textbf{Diagram representing the coarse-graining scenario when we have to ensemble preparations.} In this diagram, we have to classical regions $X$ and $Y$ connected via a CPTP channel $\mathcal{U}_{Y\vert X}$ and a ensemble preparation in made over the region $X$ leading to a quantum region $A$ and an ensemble preparation is made in the region $Y$ leading to the quantum region $B$. The CPTP channels associated with the ensemble preparation and their associated conditional states are represented in the vertical branches of the diagram.}
    \label{fig_ensembleprep_noupperarrow}
\end{figure}

Therefore, by an argument similar to the measurement case, we obtain here,
\begin{equation}
    \varrho_{AB} = \operatorname{Tr}_{XY}\left[(\varrho_{A\vert X}\varrho_{B\vert Y}) \varrho_{XY} \right],
\end{equation}
where $\varrho_{XY}$ can be obtained assuming that we have access to an initial state $\rho_{X}$ of the region $X$ and using the joint state rule yielding $\varrho_{XY} = \varrho_{Y\vert X} \star \rho_{X}$ with $\varrho_{Y\vert X}$ being the conditional state isomorphic to the channel $\mathcal{U}_{Y\vert X}$.

In the same way that we proceeded with the calculations in the expression \eqref{Ex.hybridjointstateXY}, we obtain
\begin{align} \label{Ex.jointprepcoarse}
    \varrho_{AB} &= \operatorname{Tr}_{XY}\left[(\varrho_{A\vert X}\varrho_{B\vert Y}) \varrho_{XY} \right] \nonumber \\ 
    & = \operatorname{Tr}_{XY}\left[\varrho_{B\vert Y}\varrho_{Y \vert X}\varrho_{X \vert A} \right] \star \rho_{A}.
\end{align}
Analyzing expression \eqref{Ex.jointprepcoarse} with the help of the joint state rule, we are able to infer that the term $\operatorname{Tr}_{XY}\left[\varrho_{B\vert Y}\varrho_{Y \vert X}\varrho_{X \vert A} \right] = \varrho_{B\vert A}$ is precisely the conditional state  associated with the belief propagation of the upper quantum regions. And by the characterization provided by Theorem~\ref{teo:ChannelComposition}, we have that resulting channel is given by,
\begin{equation}\label{Ex.channelprepLR}
    \Gamma_{EP}^{LR}=\mathcal{E}_{B\vert A} =  \mathcal{E}_{B\vert Y}^{EP} \circ  \mathcal{U}_{Y\vert X} \circ  \mathcal{E}_{X\vert A}^{EP},
\end{equation}
with expression \eqref{Ex.channelprepLR} being the emergent effective dynamics. 

Different from the measurement case, where we started from two connected quantum regions, here, because we depart from the classical ones, we does not have the need to require an opposite quantum dynamics in order to obtain the emergent dynamics to be in opposite direction. Rather, because we are dealing with two lower classical regions, we can therefore use the Bayesian inversion rule for conditional states and obtain $\varrho_{X\vert Y}$,\footnote{It's important to emphasize that with fully classical conditional states we do not have any problem in representing as a causal or acausal notation simply because these states respects both definitions.} and with similar calculations we would obtain
\begin{equation}\label{Ex.channelprepRL}
    \Gamma_{EP}^{RL} = \mathcal{E}_{A\vert B} =  \mathcal{E}_{A\vert X}^{EP} \circ  \mathcal{U}_{X\vert Y} \circ  \mathcal{E}_{Y\vert B}^{EP}.
\end{equation}

\begin{figure}[h]
\centering
\subfigure[]{
     \includegraphics[width=0.3\linewidth]{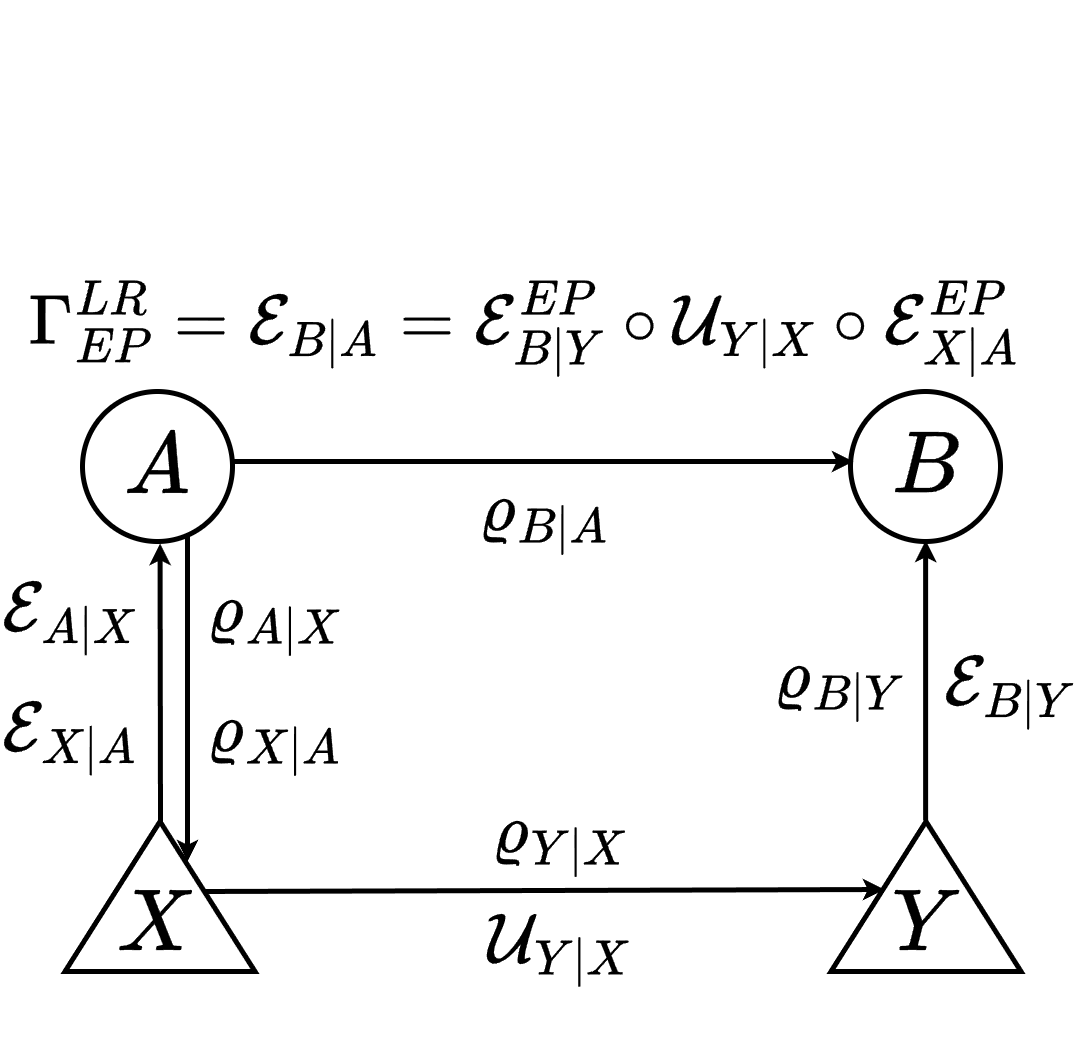}\label{subfig_prop_leftotoright}}\hspace{0.5cm}
     \hspace{1.0cm}
\subfigure[]{
    \includegraphics[width=0.3\linewidth]{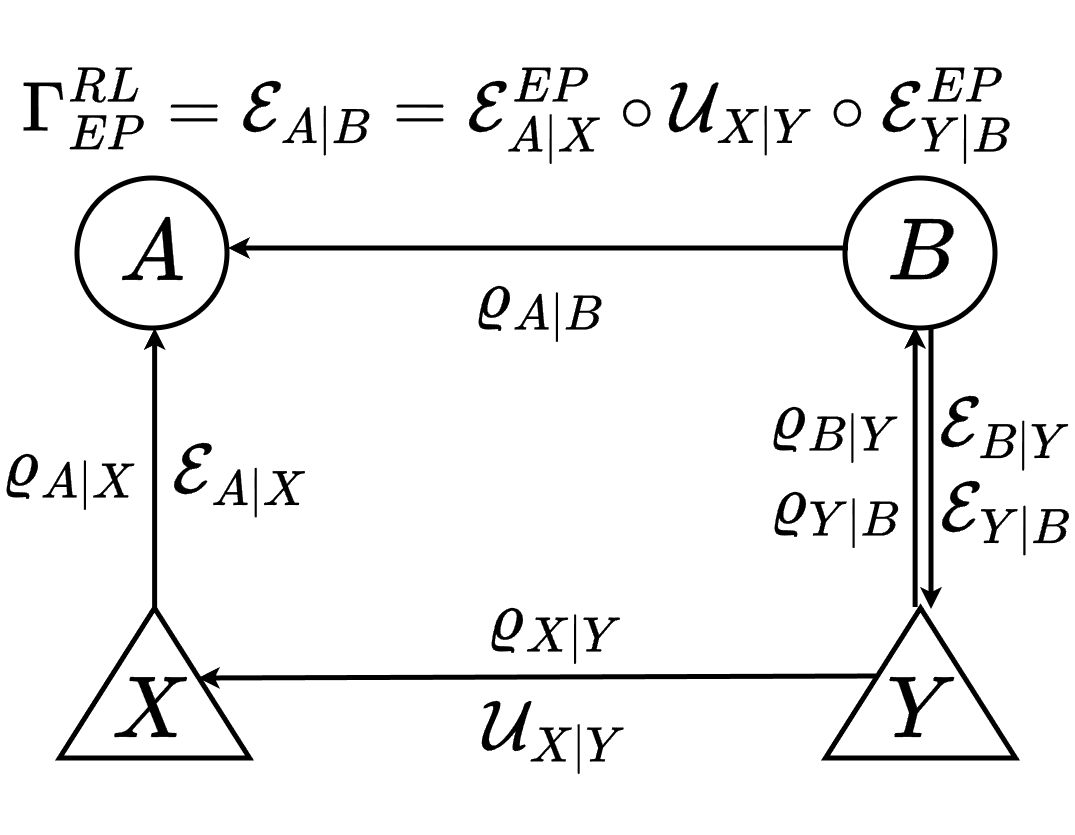}\label{subfig_prop_righttoleft}}
\caption{\textbf{Diagram representing the two possible solutions to the ensemble preparation case.} In the figure a) the uppermost branch of the diagram shows the left-to-right belief propagation that was obtained through the inversion of the left branch of the diagram, that is, the channel $\mathcal{E}_{X\vert A}$ was obtained through the Bayesian inversion of the conditional state associated with the ensemble preparation procedure, in this way we where able to obtain the channel $\mathcal{E}_{B\vert A}$. In the figure b) since we are dealing with two classical regions in the lower part of the diagram, we are able to do de Bayesian inversion of the conditional state $\varrho_{Y\vert X}$ to obtain $\varrho_{X\vert Y}$ and consequently find his isomorph channel to finally be able to obtain the channel $\mathcal{E}_{A\vert B}$.} 
\label{Fig_twoprepemergent_channels}
\end{figure}

The diagrams presented in the Figs.~\ref{subfig_prop_leftotoright} and \ref{subfig_prop_righttoleft} illustrates both cases. Thus, we observe that, when considering ensemble preparation processes, we are likewise led to an emergent effective dynamics that depend directly on the direction of the arrow connecting the lower regions. In other words, when the ensemble preparation scenario is treated as a particular instance of the coarse-graining problem, we are able to obtain, once more, a solution without major limitations.

\section{Petz Solution}\label{APX.PetzSolution}

The conditional state $\varrho_{A\vert C}$ can be obtained as follows. Consider that all we have access to is the state $\varrho_{C\vert A}$ and an initial state of region $A$, namely $\rho_{A}$. We can use the joint-state rule to obtain
\begin{equation}
    \varrho_{AC} = \varrho_{C \vert A} \star \rho_{A},
\end{equation}
and marginalize it in order to obtain,
\begin{equation}
    \rho_{C}  = \operatorname{Tr}_{A}(\varrho_{AC}).
\end{equation}
Therefore via the Bayesian inversion of operators, we achieve,
\begin{equation}\label{Ex.apxbayesinverfullyquantum}
    \varrho_{A\vert C} =  \varrho_{C\vert A} \star (\rho_{A}\rho_{C}^{-1}).
\end{equation}
That is, through the initial state $\rho_{A}$ and the marginal state $\rho_{C}$, we find the expression \eqref{Ex.bayesinverfullyquantum} to be the Bayesian inversion of expression \eqref{Ex.isomorphoCA}.

However, the above inversion falls into a limitation. To help understanding this limitation, let us briefly turn our attention to another scenario structured as follows: assume we have 3 quantum regions $A$, $B$, and $C$, and that these regions are connected to each other via the respective channels $\mathcal{E}_{B\vert A} : \mathcal{L}(\mathcal{H}_A) \rightarrow \mathcal{L}(\mathcal{H}_B)$ and $\mathcal{E}_{C\vert B} : \mathcal{L}(\mathcal{H}_B) \rightarrow \mathcal{L}(\mathcal{H}_C)$. We can additionally interpret that the region $B$ is in the causal future of region $A$ and that region $C$ is in the causal future of region $B$. Their corresponding isomorphic states are then $\varrho_{B\vert A}$ and $\varrho_{C\vert B}$. Assuming we have access to an initial state $\rho_{A}$, we could try to obtain the state $\rho_{C}$ as,
\begin{align}\label{ex.rho_cdynamics}
      \rho_C = \operatorname{Tr}_B(\varrho_{C\vert B}\operatorname{Tr}_A(\varrho_{B\vert A}\rho_A)).
\end{align}
On the other hand, we could try to find a causal joint-state for the three regions, that is, $\varrho_{ABC}$ such that, by taking the trace over $AB$, we would obtain a marginal state $\rho_C$ consistent with the one presented in the expression \eqref{ex.rho_cdynamics}. That is, we would expect that
\begin{equation}\label{ex.triequalitynonhold}
\operatorname{Tr}_{B}\bigl(\varrho_{C|B}\operatorname{Tr}_{A}(\varrho_{B \vert A}\rho_A)\bigr) = \rho_C = \operatorname{Tr}_{AB}(\varrho_{ABC}).
\end{equation}
Since $\varrho_{AB} = \varrho_{B \vert A}\star \rho_A$ is a joint-state of the region $AB$, we could guess that $\varrho_{ABC}$ would be of the form,
\begin{equation}\label{Ex.triregionjoint}
       \varrho_{ABC} = \varrho_{C|B} \star (\varrho_{B \vert A}\star \rho_A).
\end{equation}
However, since $\varrho_{B \vert A}$ is not a positive operator in general,\footnote{Since the causal conditional state defined through the partial transpose of an acausal conditional state, as highlighted in Def.~\ref{def:CausalState}, is a locally positive operator, but it is not a positive operator in general.} the expression \eqref{Ex.triregionjoint} does not hold true for states, because given the star product, we need to take the square root of that operator, that might assume negative eigenvalues. And therefore, we have that expression \eqref{ex.triequalitynonhold} is not achievable for states.
%

The scenario illustrated above is exactly what we have in the left arm of the Fig.~\ref{fig:CGandCSF}. We go from $A$ to $C$ and return to $A$. Because of this, we suffer these limitations in obtaining the marginal states, which impacts directly in the desired Bayesian inversion and consequently our proposed solution to the emergent dynamics $\Gamma^t$. Nonetheless, within the CSF itself, we have a way to overcome such limitations.

\section{Proof of Theorems}\label{Sec.AppProof}


\begin{lemma}
\label{lemm:ChoiActing} Let $\mathcal{H}_A$ and $\mathcal{H}_B$ be two finite-dimensional Hilbert spaces. Let
\begin{equation}
    \mathcal{E}_{B|A}: \mathcal{L}(\mathcal{H}_A) \to \mathcal{L}(\mathcal{H}_B) 
\end{equation}
be a linear map. If an operator $\rho_{B\vert A} \in \mathcal{L}(\mathcal{H}_A \otimes \mathcal{H}_B)$ is the Choi image of $\mathcal{E}_{B\vert A}$, then the action of $\mathcal{E}_{B\vert A}$ on any state $\sigma_A  \in \mathcal{L}(\mathcal{H}_A)$ can be written as
\begin{equation}
    \mathcal{E}_{B|A}(\sigma_{A}) = \Tr_A[\rho_{B|A}(\sigma_{A}^{T} \otimes \mathds{I}_B) ] \in \mathcal{L}{(\mathcal{H}_B)}.
\end{equation}

\end{lemma}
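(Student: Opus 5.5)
The plan is to check the identity on the matrix units $\ket{k}\bra{l}$ of the preferred basis $\{\ket{k}\}_{k=0}^{d-1}$ of $\mathcal{H}_A$ used in Definition~\ref{def:ChoiIso}, and then extend by linearity. Both sides of the claimed formula are linear in $\sigma_A$: the left side because $\mathcal{E}_{B|A}$ is linear, and the right side because transposition, tensoring with $\mathds{I}_B$, multiplication by $\rho_{B|A}$ and the partial trace are all linear. So it suffices to verify it for $\sigma_A=\ket{k}\bra{l}$ for all $k,l$. Here the transpose is taken in the same basis that defines $\ket{\Phi^+}$.

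First, by linearity of $\mathrm{id}_A\otimes\mathcal{E}_{B|A}$, the Choi image can be written as
\begin{equation}
\rho_{B|A}=\sum_{i,j=0}^{d-1}\ket{i}\bra{j}\otimes\mathcal{E}_{B|A}(\ket{i}\bra{j}).
\end{equation}
Next, substitute $\sigma_A^{T}=(\ket{k}\bra{l})^{T}=\ket{l}\bra{k}$ and compute
\begin{equation}
\rho_{B|A}(\ket{l}\bra{k}\otimes\mathds{I}_B)=\sum_{i,j}\ket{i}\bra{j}l\rangle\bra{k}\otimes\mathcal{E}_{B|A}(\ket{i}\bra{j})=\sum_{i}\ket{i}\bra{k}\otimes\mathcal{E}_{B|A}(\ket{i}\bra{l}).
\end{equation}
Taking $\Tr_A$ then uses $\Tr(\ket{i}\bra{k})=\delta_{ik}$, which leaves exactly $\mathcal{E}_{B|A}(\ket{k}\bra{l})$, as claimed.

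Finally, for a general $\sigma_A=\sum_{k,l}\sigma_{kl}\ket{k}\bra{l}$, linearity of both sides gives the result. No step is a genuine obstacle. The only point requiring care is index bookkeeping: the transpose must be taken in the basis of $\ket{\Phi^+}$, so that the two Kronecker deltas select the correct matrix element and do not produce its transpose.
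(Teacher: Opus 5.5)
Your proof is correct and follows essentially the same route as the paper. Both expand $\rho_{B|A}=\sum_{i,j}\ket{i}\bra{j}\otimes\mathcal{E}_{B|A}(\ket{i}\bra{j})$ and evaluate the partial trace directly. The only difference is cosmetic: you check matrix units $\ket{k}\bra{l}$ and extend by linearity, whereas the paper carries a general $\sigma_A$ through the computation and uses $(\sigma_A^{T})_{ji}=(\sigma_A)_{ij}$ at the end.
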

\begin{proof}
In other words, here we will check the validity of Eq. \eqref{eq:ChoiActing}. For this, suppose that $\rho_{B\vert A} \in \mathcal{L}(\mathcal{H}_A \otimes \mathcal{H}_B)$ is the Choi image of $\mathcal{E}_{B \vert A}$, and let $\sigma_A \in \mathcal{L}(\mathcal{H}_A)$. Then, it follows from Definition \ref{def:ChoiIso} that
\begin{align} \label{eq:MapAction}
     &\Tr_A[\rho_{B\vert A} (\sigma^T_{A} \otimes \mathds{I}_B)] =  \notag \\
     & = \Tr_A [ (id_A \otimes \mathcal{E}_{B \vert A}) (\ket{\Phi^+}\bra{\Phi^+}) (\sigma^T_{A} \otimes \mathds{I}_B) ] \notag\\
      & = \Tr_A [ (id_A \otimes \mathcal{E}_{B \vert A}) (\sum_{i,j=0}^{d-1} \ket{i}\bra{j} \otimes \ket{i}\bra{j}) (\sigma^T_{A} \otimes \mathds{I}_B) ] \notag \\
      & = \Tr_A [\sum_{i,j=0}^{d-1} \ket{i}\bra{j} \sigma^T_{A} \otimes \mathcal{E}_{B \vert A}(\ket{i}\bra{j})] = \sum_{i,j=0}^{d-1} \bra{j} \sigma^T_{A} \ket{i} \mathcal{E}_{B \vert A}(\ket{i}\bra{j}) \notag\\ 
      & = \sum_{i,j=0}^{d-1} (\sigma^T_{A})_{ji} \mathcal{E}_{B \vert A}( \ket{i} \bra{j}) = \sum_{i,j=0}^{d-1} (\sigma_A)_{ij} \mathcal{E}_{B \vert A}( \ket{i} \bra{j}) = \mathcal{E}_{B  \vert A}(\sigma_A).  
\end{align}
\end{proof}

\begin{thm*}[\ref{teo:ConnectsCPTPtoPSD}]
    Let $\mathcal{E}: \mathcal{L}(\mathcal{H}_A) \to \mathcal{L}(\mathcal{H}_B)$ be a linear map and let $\rho \in \mathcal{L}(\mathcal{H}_A \otimes \mathcal{H}_B)$ be its Choi-isomorphic operator. Then, it follows that $\rho$ satisfies: 
    \begin{enumerate}
        \item $\rho \geq0 ;$
        \item $\Tr_B(\rho) = \mathds{I}_A$,
    \end{enumerate}
    if, and only if, $\mathcal{E}$ is CPTP.
\end{thm*}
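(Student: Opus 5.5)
The plan is to split the statement into two independent equivalences: positivity of $\rho$ is equivalent to complete positivity of $\mathcal{E}$ (Choi's theorem), and $\Tr_B(\rho)=\mathds{I}_A$ is equivalent to trace preservation. Throughout I would use the explicit form $\rho=\sum_{i,j}\ket{i}\bra{j}\otimes\mathcal{E}(\ket{i}\bra{j})$ from Definition~\ref{def:ChoiIso}, together with Lemma~\ref{lemm:ChoiActing}, which recovers $\mathcal{E}(\sigma_A)=\Tr_A[\rho(\sigma_A^T\otimes\mathds{I}_B)]$.

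\emph{Trace preservation.} Taking the partial trace over $B$ gives $\Tr_B(\rho)=\sum_{i,j}\Tr[\mathcal{E}(\ket{i}\bra{j})]\,\ket{i}\bra{j}$. If $\mathcal{E}$ is trace preserving, then $\Tr[\mathcal{E}(\ket{i}\bra{j})]=\delta_{ij}$, so $\Tr_B(\rho)=\mathds{I}_A$. Conversely, if $\Tr_B(\rho)=\mathds{I}_A$, then for any $\sigma_A$ Lemma~\ref{lemm:ChoiActing} gives $\Tr[\mathcal{E}(\sigma_A)]=\Tr[\Tr_B(\rho)\,\sigma_A^T]=\Tr(\sigma_A^T)=\Tr(\sigma_A)$. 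This half is routine.

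\emph{Complete positivity implies $\rho\geq0$.} This direction is immediate. The operator $\ket{\Phi^+}\bra{\Phi^+}$ is positive, and $\text{id}_A\otimes\mathcal{E}$ maps positive operators to positive operators by complete positivity. Hence $\rho=(\text{id}_A\otimes\mathcal{E})(\ket{\Phi^+}\bra{\Phi^+})\geq0$.

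\emph{$\rho\geq0$ implies complete positivity.} This is the main obstacle. I would construct a Kraus decomposition. Write $\rho=\sum_k\ket{v_k}\bra{v_k}$, for instance via the spectral decomposition with the eigenvalues absorbed into the vectors. For each $\ket{v_k}\in\mathcal{H}_A\otimes\mathcal{H}_B$, define $K_k:\mathcal{H}_A\to\mathcal{H}_B$ by $K_k\ket{i}=(\bra{i}\otimes\mathds{I}_B)\ket{v_k}$; this is the vec/unvec identification, so $\ket{v_k}=\sum_i\ket{i}\otimes K_k\ket{i}=(\mathds{I}_A\otimes K_k)\ket{\Phi^+}$.

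Comparing the $\ket{i}\bra{j}$ blocks of $\rho$ on both sides then gives $\mathcal{E}(\ket{i}\bra{j})=\sum_k K_k\ket{i}\bra{j}K_k^\dagger$. By linearity this extends to $\mathcal{E}(X)=\sum_k K_kXK_k^\dagger$ for all $X$. Any map of this form is completely positive, because $(\text{id}_R\otimes\mathcal{E})(P)=\sum_k(\mathds{I}_R\otimes K_k)P(\mathds{I}_R\otimes K_k)^\dagger\geq0$ for every $P\geq0$ on any reference system $R$.

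The points needing care are that the block identification depends on the fixed basis used in $\ket{\Phi^+}$, and that the transpose appearing in Lemma~\ref{lemm:ChoiActing} is consistent with this choice. Once these are checked, the two equivalences combine to give the theorem.
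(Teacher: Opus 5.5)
Your proposal is correct and follows essentially the same route as the paper's proof: the trace-preservation computations match, as does the step from complete positivity to $\rho\geq 0$. Your operators $K_k\ket{i}=(\bra{i}\otimes\mathds{I}_B)\ket{v_k}$ are exactly the paper's $V_l$, and they give the same block identity $\mathcal{E}(\ket{i}\bra{j})=\sum_k K_k\ket{i}\bra{j}K_k^\dagger$. The only difference is cosmetic: you extend this Kraus form to all operators by linearity and apply it to an arbitrary $P\geq 0$, whereas the paper first spectrally decomposes $P$ and checks each rank-one term, so your ordering is slightly more economical.
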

\begin{proof}
    Suppose that $\rho$ is positive and $\Tr_B(\rho) = \mathds{I}_A$. Let $N \in \mathcal{L}(\mathcal{H}_A)$ be and arbitrary operator on $\mathcal{H}_A$. Then, first, to show that $\mathcal{E}$ is trace preserving, it proceeds as follows: 
    \begin{align}
        \Tr_B[\mathcal{E}(N)] & =  \Tr_B\{\Tr_A [\rho (N^T \otimes \mathds{I}_B)] \} \notag \\
        & = \Tr_A[\Tr_B (\rho) N^T] \notag \\
        & = \Tr_A(\mathds{I}_A N^T) \notag \\
        & = \Tr_A(N). 
    \end{align}
To show that $\mathcal{E}$ is completely positive, considering a positive operator $P \in \mathcal{L}({\mathcal{H}_{C} \otimes\mathcal{H}_{A} })$, where $\mathcal{H}_{C}$ is any finite-dimensional Hilbert space, we need to show that 
\begin{equation}
    (\text{id}_C \otimes \mathcal{E})(P) \geq 0.
\end{equation}
To do this, first, as $P$ is positive, consider its spectral decomposition 
\begin{equation}
    P = \sum_{i=1}^{d_A \cdot d_C} p_i \ket{\psi_i}\bra{\psi_i}_{CA}, 
\end{equation}
 where $d_A := dim(\mathcal{H}_A)$, $d_C := dim(\mathcal{H}_C)$, $\{\ket{\psi_i}_{CA}\}_{i=1}^{d_A\cdot d_C}$ is some orthonormal basis for $\mathcal{H}_C \otimes \mathcal{H}_A$ and $p_i \in spec(P)$. Thus, 
\begin{equation}
    (\text{id}_C \otimes \mathcal{E})(P)  = \sum_{i=1}^{d_A d_C}p_i[\text{id}_C \otimes \mathcal{E}(\ket{\psi_i} \bra{\psi_i}_{CA})]. 
\end{equation}
 Since each eigenvalue $p_i$ of $P$ is non-negative, it's suffice to prove that  
 \begin{equation}
     \text{id}_C \otimes \mathcal{E}(\ket{\psi_i} \bra{\psi_i}_{CA}) \geq 0
 \end{equation}
 for a given $i \in \{1,..., d_Ad_C\}$. For this, letting $\alpha := \{\ket{j}_A\}_{j=0}^{d_A}$ and $\gamma : = \{\ket{k}_C\}_{k=0}^{d_C}$ be orthonormal basis of $\mathcal{H}_A$ and $\mathcal{H}_C$, respectively, we can expand each $\ket{\psi_i}_{CA}$ as: 
 \begin{equation}\label{eq:ketPsi}
     \ket{\psi_i}_{CA} = \sum_{k= 0}^{d_C-1} \sum_{j=0}^{d_A -1} (\Psi_{i})_{kj} \ket{k}_C \ket{j}_A.  
 \end{equation}
Then, one note that
\begin{align}\label{eq:A7}
      \text{id}_C \otimes \mathcal{E}(\ket{\psi_i} \bra{\psi_i}_{CA}) & =  \text{id}_C \otimes \mathcal{E}[\sum_{k,m =0}^{d_C -1} \sum_{j,n =0}^{d_A -1}(\Psi_{i})_{kj}(\Psi_{i})^*_{mn}\ket{k}_C \ket{j}_A \bra{m}_C \bra{n}_A]  \notag  \\
      & = \sum_{k,m =0}^{d_C -1} \sum_{j,n =0}^{d_A -1}(\Psi_{i})_{kj}(\Psi_{i})^*_{mn} \ket{k} \bra{m}_C \otimes \mathcal{E}(\ket{j}\bra{n}_A).
\end{align}
By the very Definition \ref{def:ChoiIso}, we have that  
\begin{align}\label{eq:A8}
    \mathcal{E}(\ket{j}\bra{n}_{A}) & = (\bra{j}_A \otimes \mathds{I}_B)[\sum_{j',n'=0}^{d_A -1} \ket{j'} \bra{n'}_A  \otimes \mathcal{E}(\ket{j'}\bra{n'}_A) ](\ket{n
    }_A \otimes \mathds{I}_B) \notag \\ 
    & = (\bra{j}_A \otimes \mathds{I}_B)(\text{id}_A \otimes \mathcal{E})(\ket{\Phi^+}\bra{\Phi^+})(\ket{n}_A \otimes \mathds{I}_B) \notag \\
    & = (\bra{j}_A \otimes \mathds{I}_B) (\rho) (\ket{n}_A \otimes \mathds{I}_B),
\end{align} 
where $\ket{\Phi^+}  = \sum_{i = 0}^{d_A -1} = \ket{ii}_{AA}$ is the unnormalized maximally entangled state with respect to the basis $\alpha$. Replacing \eqref{eq:A8} in \eqref{eq:A7}, we reach that 
\begin{equation}\label{eq:A9}
     \text{id}_C \otimes \mathcal{E}(\ket{\psi_i} \bra{\psi_i}_{CA}) = \sum_{k,m =0}^{d_C -1} \sum_{j,n =0}^{d_A -1}(\Psi_{i})_{kj}(\Psi_{i})^*_{mn} \ket{k} \bra{m}_C \otimes[ (\bra{j}_A \otimes \mathds{I}_B)(\rho)(\ket{n}_A \otimes \mathds{I}_B)]. 
\end{equation}
Now, since $\rho$ is positive, it can be decomposed as 
\begin{equation}\label{eq:A10}
    \rho = \sum_{l=1}^{r} \ket{\phi_l}\bra{\phi_l}_{AB},    
\end{equation}
where $r \leq d_A d_B$ is the rank of $\rho$, $d_B : = dim(\mathcal{H}_B)$ and $\{\ket{\phi_l}\}_{l=1}^{r}$ is a set of orthogonal vectors of $\mathcal{H}_A\otimes \mathcal{H}_B$. Also, for a given $l$, consider the following expansion of the vector $\ket{\phi_l}_{AB}$ in terms of the basis $\alpha$ of $\mathcal{H}_A$ and a given orthonormal basis $\beta:= \{ \ket{n}_B\}_{n=0}^{d_B -1}$ of $\mathcal{H}_B$ as: 
\begin{equation}
    \ket{\phi_l}_{AB} = \sum_{j=0}^{d_A -1} \sum_{n=0}^{d_B-1} (\Phi_l)_{jn}\ket{j}_A \otimes \ket{n}_B,
\end{equation}
where, for each $l\in\{1,...,r\}$, $\Phi_l$ denotes the $d_A \times d_B$ matrix whose entries are the coefficients of $\ket{\phi_l}_{AB}$ in the product basis $\{\ket{j}_A \otimes \ket{n}_B\}$.
Let $V_l: \mathcal{H}_A \to \mathcal{H}_B$ denote linear operators of the form: 
\begin{equation}\label{eq:AuxOperators}
    V_l  := \sum_{j=0}^{d_A-1}\sum_{n=0}^{d_B-1}(\Phi^T_l)_{nj}\ket{n}_B\bra{j}_A = \sum_{j=0}^{d_A-1}\sum_{n=0}^{d_B-1}(\Phi_l)_{jn}\ket{n}_B\bra{j}_A.
\end{equation}
From this definition, one can note that
\begin{align}
    (\mathds{I}_A \otimes V_l) \ket{\Phi^+} & = \sum_{k=0}^{d_A -1} \ket{k}_A \otimes V_l \ket{k}_A \notag \\
    & = \sum_{k,j=0}^{d_A -1}\sum_{n=0}^{d_B -1} (\Phi_l)_{jn} \ket{k}_A \otimes \ket{n}_B \braket{j}{k}_A \notag \\
    & = \sum_{k,j=0}^{d_A -1}\sum_{n=0}^{d_B -1} (\Phi_l)_{jn}\ket{k}_A \otimes \ket{n}_B \delta_{jk} \notag \\
    &  = \sum_{j=0}^{d_A -1} \sum_{n=0}^{d_B-1} (\Phi_l)_{jn}\ket{j}_A \otimes \ket{n}_B = \ket{\phi_l}_{AB}. 
\end{align}
In other words, for each $l$, it's possible to find a linear operator $V_l$, as given in eq.   \eqref{eq:AuxOperators}, such that every bipartite state $\ket{\phi_l}_{AB} $ can be written in the form $\ket{\phi_l}_{AB}=(\mathds{I}_A \otimes V_l )\ket{\Phi^+}$. Besides this, note that: 
\begin{align}
    (\bra{j}_A \otimes \mathds{I}_B) \ket{\phi_l}_{AB} & =
    \bra{j}_A \otimes \mathds{I}_B\left[ \sum_{k=0}^{d_A -1}\sum_{n=0}^{d_B -1} (\phi_l)_{kn} \ket{k}_A \otimes \ket{n}_B\right] \notag \\
    & =  \sum_{k=0}^{d_A -1}\sum_{n=0}^{d_B -1} (\phi_l)_{kn} \ket{n}_B \delta_{jk} \notag\\
    & =  \left(\sum_{k=0}^{d_A -1}\sum_{n=0}^{d_B -1} (\phi_l)_{kn} \ket{n}_B \bra{k}_A\right)\ket{j}_A = V_l \ket{j}_A. 
\end{align}
With the above observation and considering the decomposition shown in eq. \eqref{eq:A10}, we can rewrite the right-hand side of the eq. \eqref{eq:A8} as follows: 
\begin{align}\label{eq:A15}
    (\bra{j}_A \otimes \mathds{I}_B) (\rho) (\ket{n}_A \otimes \mathds{I}_B) & =  (\bra{j}_A \otimes\mathds{I}_B) \left( \sum_{l=1}^{r} \ket{\phi_l} \bra{\phi_l}_{AB}\right)(\ket{n}_A \otimes\mathds{I}_B) \notag \\
    & = \sum_{l=1}^{r} \left[(\bra{j}_A \otimes \mathds{I}_B)\ket{\phi_l}_{AB}\right]\left[ \bra{\phi_l}_{AB}(\ket{n}_A \otimes \mathds{I}_B) \right] \notag\\
    & = \sum_{l=1}^{r} V_l \ket{j} \bra{n}_A V_l^{\dagger}.
\end{align}
Thus, replacing the result of eq. \eqref{eq:A15} in eq. \eqref{eq:A9}, we have that, 
\begin{align}
    \text{id}_C \otimes \mathcal{E}(\ket{\psi_i} \bra{\psi_i}_{CA}) & = \sum_{k,m =0}^{d_C -1} \sum_{j,n =0}^{d_A -1}(\Psi_{i})_{kj}(\Psi_{i})^*_{mn} \ket{k} \bra{m}_C \otimes[ (\bra{j}_A \otimes \mathds{I}_B)\left( \sum_{l=1}^{r} \ket{\phi_l}\bra{\phi_l}_{AB}\right)(\ket{n}_A \otimes \mathds{I}_B)] \notag \\
    & = \sum_{k,m =0}^{d_C -1} \sum_{j,n =0}^{d_A -1}(\Psi_{i})_{kj}(\Psi_{i})^*_{mn} \ket{k} \bra{m}_C \otimes \left( \sum_{l=1}^{r} V_l \ket{j}\bra{n}_A V_l^{\dagger} \right) \notag\\
    & = \sum_{l=1}^{r}\left\{ \text{id}_C\otimes V_l \left[\sum^{d_C -1}_{k,m=0} \sum^{d_A-1}_{j,n=0} (\Psi_{i})_{kj}(\Psi_{i})^*_{mn} \ket{k} \bra{m}_C  \otimes \ket{j}\bra{n}_A\right] \text{id}_C \otimes V_l^{\dagger} \right\} \notag \\
    & = \sum_{l=1}^{r} \text{id}_C \otimes V_l (\ket{\psi_i} \bra{\psi_i}_{CA}) \text{id}_C \otimes V_l^{\dagger}.
\end{align}
Finally, defining $K_l : = \text{id}_C \otimes V_l$ for all $l\in \{1,...,r\}$, we can rewrite $\text{id}_C \otimes \mathcal{E}(\ket{\psi_i} \bra{\psi_i}_{CA})$ as 
\begin{equation}
    \text{id}_c \otimes \mathcal{E}(\ket{\psi_i} \bra{\psi_i}_{CA})   = \sum_{l=1}^{r} K_l(\ket{\psi_i} \bra{\psi_i}_{CA})K_l^\dagger \geq0,
\end{equation}
where this follows from the fact that, for any $\ket{\omega}_{CB} \in \mathcal{H}_C \otimes \mathcal{H}_B$, and for each $l$, we have 
\begin{equation}
    \bra{\omega}_{CB}K_l( \ket{\psi_i}\bra{\psi_i}_{CA})K_l^{\dagger} \ket{\omega}_{CB} = | \bra{\omega}_{CB}K_l \ket{\psi_i}_{CA}|^2 \geq0. 
\end{equation}
Concluding, in that way, the first half of the proof.

On the other side, suppose that $\mathcal{E}$ is CPTP. Then, it follows that, by the very definition of its Choi state $\rho$ and by the fact that $\mathcal{E}$ is trace preserving, 
\begin{align}
    \Tr_B(\rho) & = \Tr_B[\sum_{i,j = 0}^{d_A-1} \ket{i}\bra{j} \otimes \mathcal{E}(\ket{i}\bra{j})] \\
    & = \sum_{i,j = 0}^{d_A-1} \ket{i}\bra{j} \otimes\Tr_B[\mathcal{E}(\ket{i}\bra{j})] \notag \\
    & = \sum_{i,j = 0}^{d_A-1} \ket{i}\bra{j}  \otimes\Tr_A(\ket{i}\bra{j}) \notag \\
    & = \sum_{i,j = 0}^{d_A-1} \ket{i}\bra{j} \delta_{ij} = \sum_{i = 0}^{d_A-1} \ket{i}\bra{i} = \mathds{I}_A. \notag
\end{align}
Also, again by the very definition of the Choi state $\rho$, and using that $\mathcal{E}$ is completely positive, it follows that  
\begin{equation}
\rho =     (\text{id}_A \otimes \mathcal{E}) \ket{\Phi^+}\bra{\Phi^+} \geq 0, 
\end{equation}
since it is a completely positive map acting on a positive operator of $\mathcal{L}(\mathcal{H}_A \otimes \mathcal{H}_A)$.
\end{proof}

\begin{thm*}[\textbf{\ref{teo:ChannelComposition}}]
Let $\mathcal{E}_{B\vert A}, \mathcal{E}_{C\vert B}$ and $\mathcal{E}_{C\vert A}$  be linear maps, and $\varrho_{B\vert A}, \varrho_{C\vert B}$ and $\varrho_{C\vert A}$, their respective Jamio\l{}kowski isomorphic operators. Then, it holds that $\mathcal{E}_{C\vert A} = \mathcal{E}_{C\vert B} \circ \mathcal{E}_{B\vert A}$ if, and only if, the Jamio\l{}kowski isomorphic operators satisfy  
\begin{equation}
    \varrho_{C\vert A} = \Tr_B[(\mathds{I}_A\otimes\varrho_{C\vert B}) (\varrho_{B\vert A} \otimes \mathds{I}_C)].
\end{equation}
\end{thm*}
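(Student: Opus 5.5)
The plan is to show that both sides of the claimed identity define, through the Jamio\l{}kowski action, the same linear map, and then to use the injectivity of the Jamio\l{}kowski correspondence to upgrade equality of maps to equality of operators. Recall the conventions: $\varrho_{B\vert A} = \rho_{B\vert A}^{T_A}$, the Jamio\l{}kowski operator $\varrho_{B\vert A}=\sum_{i,j}\ket{j}\bra{i}_A\otimes\mathcal{E}_{B\vert A}(\ket{i}\bra{j})$ is obtained from Definition~\ref{def:ChoiIso} by partial transposition, and belief propagation reads $\mathcal{E}_{B\vert A}(\sigma_A)=\Tr_A[\varrho_{B\vert A}(\sigma_A\otimes\mathds{I}_B)]$. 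This action follows from Lemma~\ref{lemm:ChoiActing} because $\Tr_A[X^{T_A}(\sigma\otimes\mathds{I})]=\Tr_A[X(\sigma^T\otimes\mathds{I})]$.

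The first step is to establish injectivity. The map $\mathcal{E}\mapsto\varrho$ is linear. Given the action formula, $\varrho$ is recovered from $\mathcal{E}$ by evaluating it on the matrix units, since $(\bra{j}_A\otimes\mathds{I})\varrho(\ket{i}_A\otimes\mathds{I})=\mathcal{E}(\ket{i}\bra{j})$. Consequently two operators in $\mathcal{L}(\mathcal{H}_{AC})$ whose actions agree on every $\sigma_A$, in particular on every $\ket{i}\bra{j}$, must coincide.

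The core computation is to define $M:=\Tr_B[(\mathds{I}_A\otimes\varrho_{C\vert B})(\varrho_{B\vert A}\otimes\mathds{I}_C)]$ and evaluate $\Tr_A[M(\sigma_A\otimes\mathds{I}_C)]$. Because $\sigma_A\otimes\mathds{I}_C$ acts only on $A$ and $C$, which lie outside the $B$ trace, it can be moved inside $\Tr_B$. Since $\varrho_{C\vert B}$ acts trivially on $A$, the partial trace over $A$ can then be taken first, using cyclicity on the $A$ factor:
\begin{align}
\Tr_A[M(\sigma_A\otimes\mathds{I}_C)] &= \Tr_{B}\big[\varrho_{C\vert B}\,\big(\Tr_A[\varrho_{B\vert A}(\sigma_A\otimes\mathds{I}_B)]\otimes\mathds{I}_C\big)\big] \notag\\
&= \Tr_B[\varrho_{C\vert B}(\mathcal{E}_{B\vert A}(\sigma_A)\otimes\mathds{I}_C)] = \mathcal{E}_{C\vert B}\circ\mathcal{E}_{B\vert A}(\sigma_A). \notag
\end{align}
This says that $M$ is the Jamio\l{}kowski operator of $\mathcal{E}_{C\vert B}\circ\mathcal{E}_{B\vert A}$.

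Both implications then follow. If $\mathcal{E}_{C\vert A}=\mathcal{E}_{C\vert B}\circ\mathcal{E}_{B\vert A}$, then $\varrho_{C\vert A}$ and $M$ have the same action, so injectivity gives $\varrho_{C\vert A}=M$. Conversely, if $\varrho_{C\vert A}=M$, applying the action formula to both sides yields $\mathcal{E}_{C\vert A}(\sigma)=\mathcal{E}_{C\vert B}\circ\mathcal{E}_{B\vert A}(\sigma)$ for all $\sigma$.

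I expect the main obstacle to be the bookkeeping in the core step. Moving $\sigma_A$ past $\varrho_{C\vert B}$ and exchanging the order of $\Tr_A$ and $\Tr_B$ requires care with the operator ordering on the $A$ factor, since only the $A$-part of $\varrho_{B\vert A}$ interacts with $\sigma_A$. To avoid errors here, I would first check the computation on product operators $\varrho_{B\vert A}=X_A\otimes Y_B$ and $\varrho_{C\vert B}=W_B\otimes Z_C$, and then extend by linearity.
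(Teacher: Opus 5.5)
Your proof is correct and takes essentially the same route as the paper: both compute $\Tr_A[M(\sigma_A\otimes\mathds{I}_C)]$ by taking the $A$-trace past $\mathds{I}_A\otimes\varrho_{C\vert B}$, obtain $\mathcal{E}_{C\vert B}\circ\mathcal{E}_{B\vert A}(\sigma_A)$, and read off both directions from that identity. The one difference is that you prove the injectivity of the Jamio\l{}kowski correspondence explicitly, by recovering the blocks $(\bra{j}_A\otimes\mathds{I})\varrho(\ket{i}_A\otimes\mathds{I})$ from the action on matrix units, whereas the paper uses it without comment in the final ``$\implies$'' step of its forward direction.
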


\begin{proof}
    Firstly, suppose that $\mathcal{E}_{C\vert A} = \mathcal{E}_{C\vert B} \circ \mathcal{E}_{B\vert A}$. Let $M_A \in \mathcal{L}(\mathcal{H}_A)$ be an arbitrary operator on $\mathcal{H}_A$. Then, 
    \begin{align}
         \mathcal{E}_{C\vert A}(M_A) & = \Tr_A[\varrho_{C\vert A} (M_A \otimes \mathds{I}_C)] \notag \\ &= \mathcal{E}_{C\vert B}[\mathcal{E}_{B\vert A}(M_A)] \notag \\
         & = \mathcal{E}_{C\vert B}\{\Tr_A[\varrho_{B\vert A} (M_A\otimes \mathds{I}_B)]\} \notag\\
         & = \Tr_{B}\{ \varrho_{C\vert B} [[ \Tr_A(\varrho_{B\vert A}(M_A \otimes\mathds{I}_B))]  \otimes \mathds{I}_C]\} \notag \\
         & = \Tr_A\{ \Tr_B[(\mathds{I}_A\otimes\varrho_{C\vert B}) (\varrho_{B\vert A}  \otimes \mathds{I}_C)] (M_A \otimes \mathds{I}_C)\} \notag \\
          &\implies \varrho_{C\vert A} =\Tr_B[(\mathds{I}_A\otimes\varrho_{C\vert B}) (\varrho_{B\vert A}  \otimes \mathds{I}_C)].
    \end{align}
Conversely, suppose that 
\begin{equation}
\varrho_{C\vert A} =\Tr_B[(\mathds{I}_A\otimes\varrho_{C\vert B}) (\varrho_{B\vert A}  \otimes \mathds{I}_C)].    
\end{equation}
Then, 
\begin{align}
     \mathcal{E}_{C\vert A}(M_A) & = \Tr_A[\varrho_{C\vert A}(M_A \otimes \mathds{I}_C)] \notag \\
    & = \Tr_A\{ \Tr_B[(\mathds{I}_A\otimes\varrho_{C\vert B}) (\varrho_{B\vert A}  \otimes \mathds{I}_C)](M_A \otimes \mathds{I}_C)\} \notag\\ 
    & = \Tr_B \{ \varrho_{C\vert B} [[\Tr_A(\varrho_{B\vert A}(M_A \otimes \mathds{I}_B ))]  \otimes \mathds{I}_C]\} \notag \\
    & = \Tr_B\{\varrho_{C\vert B} [[\mathcal{E}_{B\vert A}(M_A)]  \otimes \mathds{I}_C] \} \notag\\
    & = \mathcal{E}_{C\vert B}[\mathcal{E}_{B\vert A}(M_A)] \notag \\
     &\implies \mathcal{E}_{C\vert A} = \mathcal{E}_{C\vert B} \circ \mathcal{E}_{B\vert A}. 
\end{align}
\end{proof}

\end{widetext}

\end{document}